\documentclass[12pt,english,american]{article}

\usepackage{hyperref}

\usepackage{amsmath,amsfonts,amssymb,amsthm,mathrsfs}
 
\usepackage{enumerate}
\usepackage{bm}
\usepackage{bbm}
\usepackage{verbatim}
\usepackage[usenames,dvipsnames]{color}
\usepackage{graphicx}
\usepackage{textcomp}

\addtolength{\textwidth}{3.0cm}
\addtolength{\hoffset}{-1.5cm}
\addtolength{\textheight}{3.0cm}
\addtolength{\voffset}{-1.5cm}

\theoremstyle{plain}\newtheorem{theorem}{Theorem}[section]
\theoremstyle{plain}\newtheorem{lemma}[theorem]{Lemma}
\theoremstyle{plain}\newtheorem{corollary}[theorem]{Corollary}
\theoremstyle{plain}
\theoremstyle{plain}
\theoremstyle{definition}
\theoremstyle{remark}\newtheorem{remark}[theorem]{Remark}

\newcounter{remarks}
\setcounter{remarks}{0}

\newcommand{\re}{\operatorname{Re}}

\newcommand{\lno}{\Big\vert \!\!\, \Big \vert}
\newcommand{\rno}{\Big\vert \!\!\, \Big \vert}

\newcommand{\floor}[1]{\lfloor #1 \rfloor}

\newcommand{\lsp}{\big<}
\newcommand{\rsp}{\big>}
\newcommand{\no}{\vert \!\!\,  \vert}

\newcommand{\bno}{\Big \vert \!\!\, \Big\vert}
\newcommand{\I}{\ensuremath{\mathfrak{S}}}
\newcommand{\Vol}[1]{\frac{1}{L^{#1}}} 
\newcommand{\V}[1]{\ensuremath{\mathcal{V}_{#1}(N,\rho)}}
\newcommand{\T}[1]{\ensuremath{\text{\tiny{#1}}}}

\usepackage{enumerate}

\title{Free time evolution of a tracer particle coupled to a Fermi gas in the high-density limit}

\author{Maximilian Jeblick\footnote{Ludwig-Maximilians-Universit\"at, Mathematisches Institut, Theresienstr.\ 39, {80333} M\"unchen, Germany. Email: {\tt jeblick@math.lmu.de}},
David Mitrouskas\footnote{Ludwig-Maximilians-Universit\"at, Mathematisches Institut, Theresienstr.\ 39, {80333} M\"unchen, Germany. Email: {\tt dmitrous@math.lmu.de}},
S\"oren Petrat\footnote{Institute for Advanced Study, 1 Einstein Drive, Princeton, NJ 08540, USA. Email: {\tt spetrat@ias.edu}},
Peter Pickl\footnote{Ludwig-Maximilians-Universit\"at, Mathematisches Institut, Theresienstr.\ 39, {80333} M\"unchen, Germany. Email: {\tt pickl@math.lmu.de}}
}
\begin{document}

\maketitle

\begin{abstract}
The dynamics of a particle coupled to a dense and homogeneous ideal Fermi gas in two spatial dimensions is studied.\ We analyze the model for coupling parameter $g=1$ (i.e., not in the weak coupling regime), and prove closeness of the time evolution to an effective dynamics for large densities of the gas and for long time scales of the order of some power of the density.\ The effective dynamics is generated by the free Hamiltonian with a large but constant energy shift which is given at leading order by the spatially homogeneous mean field potential of the gas particles.\ Here, the mean field approximation turns out to be accurate although the fluctuations of the potential around its mean value can be arbitrarily large.\ Our result is in contrast to a dense bosonic gas in which the free motion of a tracer particle would be disturbed already on a very short time scale.\ The proof is based on the use of strong phase cancellations in the deviations of the microscopic dynamics from the mean field time evolution.
\end{abstract}

\section{Introduction and Main Result}

{In this article we consider the dynamics of a tracer particle interacting with a dense and homogeneous two-dimensional fermionic gas. In order to keep the analysis simple we neglect the interaction between the gas particles and focus only on the interaction between tracer particle $y$ and gas particles $x_1,\ldots,x_N$. The general model we wish to study is defined by the Hamiltonian
\begin{equation}\label{H_micro}
H = -\frac{1}{2m_y} \Delta_y - \sum_{i=1}^N \frac{1}{2m_{x}} \Delta_{x_i} + g\sum_{i=1}^N v(x_i-y),
\end{equation}
where $v \in C_0^{\infty}$ (the space of smooth functions with compact support), and $g>0$ is a coupling constant. The time evolution of the (N+1)-body wave function $\Psi_t \in \mathcal H_y\otimes \mathcal H_{N}  = L^2(\mathbb T^{d}) \otimes L^2(\mathbb T^{dN})$, where $d$ is the dimension, $\mathbb T$ a one-dimensional torus of length $L\in \mathbb R$, and $L^2$ denotes the space of complex square integrable functions (for simplicity, we neglect spin), is given by the Schr\"odinger equation
\begin{equation}\label{microscopic_model_2}
i\partial_t \Psi_t = H \Psi_t.
\end{equation}
As initial condition we choose a factorized state $\Psi_0 = \varphi_0 \otimes \Omega_0$, where $\varphi_0 \in \mathcal H_y$ is the initial wave function of the tracer particle and $\Omega_0 \in \mathcal H_N$ is the free fermionic ground state with periodic boundary conditions in the $d$-dimensional box of side length $L$. For analyzing $\Psi_t$ we first take the limit $N,L\to\infty$ with $\rho= N/L^d = const.$ in order to remove finite size effects and then consider large gas densities $\rho$. Note that in this situation the average potential energy of the tracer particle is proportional to $g\rho$. We later choose $g=1$, such that our analysis is beyond any weak-coupling limit.

We expect that the above model exhibits some interesting phenomena which depend in particular on the time scale that one considers.\ Here, we consider time scales for which the tracer particle moves in the mean field of the gas particles.\ Since the mean field potential is spatially homogeneous for the ideal Fermi gas, the effective dynamics is equivalent to the free time evolution.\ For longer times, we expect that the tracer particle will create electron-hole pairs and eventually lose its energy.\ The situation may differ depending also on the spatial dimension. For reasons which we explain in Section~\ref{sec:dimension}, we focus on the two dimensional case. Let us also remark that the described model is relevant, e.g., for understanding the motion of ions in a degenerate and dense electron plasma. In this situation it is known that the ability of the plasma to stop ions decreases in the high-density limit; cf.\ Section \ref{sec:applications}.

 We prove in this article that in the limit $\rho\to\infty$, the time evolution of the tracer particle is close to the free dynamics on a particularly large time scale.\ Our main result is readily stated:

\begin{theorem}\label{thm:main_thm} 
Let $d=2$, the masses $m_x=m_y=1/2$ and the coupling constant $g=1$. Let further $\varphi_0 \in  \mathcal H_y$ with $\no \nabla^4 \varphi_0\no \le C$ uniformly in $\rho$.\ Then, for any small enough $\varepsilon>0$, there exists a positive constant $C_\varepsilon$ such that
\end{theorem}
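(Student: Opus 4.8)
The plan is to reformulate the dynamics in a way that isolates the deviation from the free (mean-field) evolution and then exploit the Fermi statistics to extract phase cancellations. First I would remove the bulk mean-field energy shift by conjugating with $e^{i g\rho \hat v(0) t}$ (or rather the full constant shift predicted by the theorem), so that the quantity to control is $\Psi_t$ compared to $\varphi_t\otimes\Omega_0$ with $\varphi_t$ the free evolution of the tracer. The natural object to estimate is something like $1-\langle \varphi_t\otimes\Omega_0,\Psi_t\rangle$ or, better for a Gr\"onwall argument, the reduced one-particle quantities of the tracer; I would introduce a ``counting'' functional $\alpha(t)$ measuring the overlap loss (e.g.\ via $\|(1-|\varphi_t\rangle\langle\varphi_t|)\otimes \mathbbm 1 \,\Psi_t\|^2$ together with the number of excited gas modes), differentiate it in time, and show $\dot\alpha(t)\lesssim \rho^{-\delta}(1+\alpha(t))$ uniformly on the relevant time scale.

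The heart of the argument is the bound on $\dot\alpha(t)$. After differentiating, the interaction $\sum_i v(x_i-y)$ acts on $\varphi_t\otimes\Omega_0$ and one must control the part that does not merely reproduce the mean-field potential. Expanding $v$ in Fourier modes, $v(x-y)=\sum_k \hat v(k) e^{ik\cdot x}e^{-ik\cdot y}$, the $k=0$ term is exactly the constant shift we subtracted, and each $k\neq 0$ term creates a particle-hole excitation in the Fermi sea: it moves a gas particle from momentum $p$ inside the Fermi ball to $p+k$ outside. The key smallness comes from the fact that for fixed $k$ the number of available $p$ with $p$ inside and $p+k$ outside the Fermi ball of radius $\sim\sqrt\rho$ scales only like the ``surface'' $\sim \rho^{1/2}$ (in $d=2$), rather than the volume $\sim\rho$; combined with the $\rho^{-1}$ normalization this already gives a gain, but crucially the energy denominators $|p+k|^2-|p|^2 \sim \sqrt\rho$ are large, so integrating by parts in time (equivalently, using the oscillation $e^{i(|p+k|^2-|p|^2)t}$) produces further decay. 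I would set up these estimates via a Duhamel/interaction-picture expansion, writing the excitation amplitude as a time integral of oscillatory terms, and bound it by stationary-phase / non-stationary-phase arguments in the gas-particle momenta, using $\|\nabla^4\varphi_0\|\le C$ to control the tracer-momentum integrals that appear (the tracer recoil shifts the phases by $O(|k|^2+k\cdot p_y)$, which is why enough derivatives on $\varphi_0$ are needed).

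Concretely, the steps in order: (1) pass to the torus, write everything in the momentum eigenbasis of the free Hamiltonian, and perform the $N,L\to\infty$ limit so that sums become integrals over momenta with the Fermi ball $\{|p|\le p_F\}$, $p_F\sim\sqrt\rho$; (2) conjugate away the constant energy shift and set up the comparison state $\varphi_t\otimes\Omega_0$; (3) define the functional $\alpha(t)$ and compute $\dot\alpha(t)$, identifying the ``bad'' terms as matrix elements of $\sum_{k\neq 0}\hat v(k)e^{ik\cdot x}e^{-ik\cdot y}$ between $\Omega_0$ and one-particle-hole states; (4) Duhamel-expand these amplitudes in the interaction picture and estimate the resulting time integrals using the oscillation from the gas-mode energy gaps and the surface-vs-volume counting, keeping the tracer-momentum integrals under control with the Sobolev bound on $\varphi_0$; (5) collect the estimates into $\dot\alpha(t)\lesssim \rho^{-\delta}(1+\alpha(t))$ and close the Gr\"onwall inequality on times up to $\rho^{\beta}$ for appropriate $\beta,\delta>0$, yielding the stated $C_\varepsilon \rho^{-\text{(something)}}$ bound.

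The main obstacle I anticipate is step (4): the phase cancellations are delicate because one is simultaneously integrating over the gas-particle momentum $p$ (near the Fermi surface, where the energy gap $|p+k|^2-|p|^2$ can be small for $p+k$ close to the surface), over the tracer momentum, and over time, and these are coupled through the recoil. A naive bound that treats the gas-momentum integral and the time integral separately loses the cancellation and gives nothing; one really needs to integrate by parts in $t$ first (turning $1/t$-type decay into the energy-gap denominator) and then handle the small-denominator region near the Fermi surface by a separate, more careful estimate exploiting that its measure is small. Getting a quantitative rate that survives on a time scale that is an actual power of $\rho$ — rather than just $o(\rho^0)$ — is what makes this the crux of the whole paper, and it is presumably where the ``strong phase cancellations'' mentioned in the abstract do the real work.
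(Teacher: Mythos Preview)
Your overall diagnosis of the mechanism (surface-vs-volume counting near the Fermi sphere, large energy denominators $|p+k|^2-|p|^2\sim k_F$, integration by parts in time to extract oscillatory decay, and a careful treatment of the small-denominator region near the Fermi surface) matches the paper's. But the Gr\"onwall framework you propose has a genuine gap that the paper's approach is designed precisely to avoid.

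The problem is that the fluctuations are \emph{not} small: $\|(V-E)\Omega_0\|^2\sim\rho^{1/2}$ diverges as $\rho\to\infty$ (this is computed in the paper as Lemma~\ref{lem:v_hat_estimates}). If you differentiate your functional $\alpha(t)$, the time derivative produces a single factor of $(V-E)$ acting on the state, and any naive bound on $\dot\alpha(t)$ will see this divergent norm. The phase cancellation you correctly identify comes only from the \emph{time integral} of the oscillatory factor $e^{i(E_l-E_k)\tau}$; once you have differentiated in $t$ to set up Gr\"onwall, there is no time integral left in which to integrate by parts. You cannot close $\dot\alpha(t)\lesssim\rho^{-\delta}(1+\alpha(t))$ with $\delta>0$ for this reason. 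The paper instead works directly with the norm $\|U(t)\Psi_0-U^{\text{mf}}(t)\Psi_0\|$, expands it via Duhamel in $H-H^{\text{mf}}$, and keeps all the time integrals so that the oscillation can be exploited. Moreover, one integration by parts is not enough: at second order one gets schematically $t^2\cdot(\text{first-order gain})\cdot\|(V-E)\Omega_0\|^2$, which still diverges. The paper has to go to \emph{third} order (two phase integrations against one remaining full evolution $U$) before the product of two oscillatory gains beats the one divergent fluctuation factor; this is the content of the decomposition into $\Psi_{\text{A}},\ldots,\Psi_{\text{F}}$ and Lemma~\ref{lem:main_lemma}.

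A second point you are missing is the subleading energy correction $E_{re}(\rho)$ in $H^{\text{mf}}$. In the Duhamel expansion there is a ``immediate recollision'' term (the tracer excites a particle-hole pair and annihilates it right away) for which the oscillatory phase collapses ($k_{l_1k_1}(\tau)k_{k_1l_1}(\tau)=1$) and no smallness is gained; this term is of order one and must be cancelled by hand by inserting $E_{re}(\rho)$ into the effective Hamiltonian. Your step~(2) only removes the leading shift $\rho\hat v(0)$, which would leave this order-one error uncontrolled.
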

\vspace{-0.2cm}
\begin{equation}\label{main_estimate}
\lim_{\substack{N,L\to\infty \\ \rho=N/L^2=const.}} \left\| e^{-iHt} \Psi_0 - e^{-i H^\text{mf} t }  \Psi_0 \right\|_{\mathcal H_y\otimes \mathcal H_{N}} \leq C_\varepsilon (1+t)^{\frac{3}{2}} \rho^{ -\frac{1}{8}+\varepsilon}
\end{equation}

\vspace{0.3cm}
\noindent\textit{holds for all $t>0$, where}
\vspace{-0.2cm}
\begin{equation}
\label{Mean-field Hamiltonian} H^{\text{mf}} = - \Delta_y -  \sum_{i=1}^N \Delta_{x_i} +  \rho \mathcal F[v](0) - E_{re}(\rho)
\end{equation}

\vspace{-0.07cm}
\noindent \textit{is the free Hamiltonian with constant mean field $ \rho \mathcal F[v](0) = \langle \Omega_0, \sum_{i=1}^N v(x_i-y) \Omega_0 \rangle_{\mathcal H_N}$ minus a positive $\rho$-dependent next-to-leading order energy correction $E_{re}(\rho)$ which is defined in \eqref{def:F}.
}
\\

Let us remark that in Theorem \ref{thm:main_thm}, we have fixed all scales except for the density $\rho$ and the time $t$.\ The statement is meaningful for all pairs of $\rho$ and $t$ for which the r.h.s.\ of \eqref{main_estimate} becomes small compared to one. Note that a more detailed expression for the error term can be inferred from \eqref{NORM:DIFFERENCE:BOUND} in combination with Lemma \ref{lem:main_lemma}. The proof of Theorem \ref{thm:main_thm} is given in Section~\ref{Proof of the Main Result}. Before we discuss the model, the theorem and its application in physics in more detail, let us stress that Theorem~\ref{thm:main_thm} is nontrivial and might be surprising at first sight:

\begin{itemize}
\item Contrast the situation with a tracer particle in a classical or bosonic gas. Since the velocity of the tracer particle is of order one and the interaction proportional to the density $\rho$, then after times of $\mathcal O(1)$, the tracer particle has scattered with $\mathcal O(\rho)$ particles in the gas. The expected mean free path of the tracer particle is accordingly small, namely $\propto \rho^{-\delta}$ for some $\delta>0$.
\item 
In a fermionic gas, the kinetic energy of the tracer particle can dissipate into its environment by means of particle-hole excitations. One might expect that this kind of friction mechanism would become stronger the larger $\rho$. This is the case for a tracer particle in a Bose gas which was shown in the mean field regime on a rigorous level in \cite{froehlich:2012,froehlich:2014,deRoeck:2014}. For fermions one finds a different behavior: the larger the density, the less the particle is disturbed and, vice versa, disturbs the gas less. As a consequence, the free motion holds on a much larger time scale $t=O(\rho^{\delta})$ for some $\delta>0$; cf. the r.h.s.\ in \eqref{main_estimate}.
\end{itemize}

Our result follows from a careful analysis of the fluctuations in the gas and their propagation, and relies heavily on the Fermi pressure, i.e., the antisymmetry of the wave function of the fermionic particles. We give a sketch of the proof in Section~\ref{Sketch of the Proof} and provide a physically more intuitive explanation in Section~\ref{sec:sketch_proof_physical_picture}.

\subsection{The Model in More Detail\label{sec:model_more_detail}}
Let us discuss the considered model and its properties in more detail. First, note that we do not take any internal degrees of freedom such as spin into account. On the level of our main result, we do not expect a qualitative different behavior by doing so. Note also that our focus lies on the analysis of the interaction between the tracer particle and the gas, whereas the mutual interaction of the gas particles is neglected. While this is generally expected to be a reasonable approximation for many situations, its rigorous justification is a very interesting question on its own.\ Physical units are chosen such that the constant $\hbar$ and the masses of tracer particle and gas particles are dimensionless and $\hbar = 2m_{y} = 2m_x = 1$.

We model the potential between the tracer particle and each of the gas particles by an infinitely differentiable function with compact support (uniformly in $L$), i.e., $v \in C_0^{\infty}(\mathbb T^2)\cap C_0^{\infty}(\mathbb R^2)$. Theorem~\ref{thm:main_thm} holds as well for less regular potentials with fast enough decay at infinity. In order to simplify the proof as much as possible, however, the chosen class of potentials is very convenient. We often abbreviate the total interaction term in $H$ by $V = \sum_{i=1}^N v(x_i-y)$. Since $V$ is bounded, $H$ defines a self-adjoint operator on the second Sobolev space $ H^2(\mathbb T^{2(N+1)}) \subset \mathcal H_y\otimes \mathcal H_{N}$. For the corresponding time evolution, we write $U(t)=e^{-iHt}$, $t\ge 0$.

The initial wave function $\varphi_0$ of the tracer particle is restricted to be an element of $H^4(\mathbb T^{2}) \subset \mathcal H_y$ with $\no \varphi_0\no_{H^4} <C$ for all values of $\rho$.\ The initial state of the gas is assumed to be given by the ground state of the ideal Fermi gas which is described by the antisymmetric product of $N$ one-particle plane waves,
\begin{align}\label{FERMI:SEA}
\Omega_0(x_1,...,x_N) = \frac{1}{\sqrt{N!}} \sum_{\tau\in S_N} {(-1)^{\tau}} \prod_{i=1}^N \phi_{p_{\tau(i)}}(x_i),
\end{align}
with $\phi_{p}(x) = L^{-1} e^{i p \cdot x }\in L^2(\mathbb T^2)$, and $( p_j )_{j=1}^N$ the $N$ pairwise different elements of $\mathbb (2\pi / L ) \mathbb Z^2 $ with smallest absolute value. $S_N$ denotes the group of permutations of integers $\{1,...,N\}$ and $(-1)^{\tau}$ is the sign of the permutation $\tau$. Since the system is defined on a torus of side length $L$ (with periodic boundary conditions), the set of possible momenta in the gas is given by the lattice $\mathbb ( 2\pi / L ) \mathbb Z^2$. We label the momenta such that for $j_1, j_2\ge 1$ we have $j_1<j_2$ $\Leftrightarrow$ $\vert p_{j_1} \vert \le \vert p_{j_2} \vert$. The wave function $\Omega_0$ corresponds thus to the lowest possible kinetic energy given by $\sum_{k=1}^N p_k^2$.

It is later very convenient to use fermionic creation and annihilation operators.\ For wave functions $\Psi \in \mathcal H_y\otimes \mathcal H_N$ which are antisymmetric in the gas-coordinates, we have
\begin{align}
\label{creation_annihilation_op}a^*(p_l) a(p_k) \Psi(y,x_1,...,x_N)	= \sum_{i=1}^N \phi_{p_l}(x_i) \int_{\mathbb T} {dx_i}\, \phi_{p_k}^*(x_i) \Psi(y,x_1,...,x_N),
\end{align}
i.e., a particle with momentum $p_k \in (2\pi / L) \mathbb Z^2$ is replaced by a particle with momentum $p_l \in (2\pi / L) \mathbb Z^2$.

An important quantity that characterizes the state $\Omega_0$ is the Fermi momentum $k_F$. It is defined as the radius of the Fermi sphere which corresponds to the lattice points $(p_j)_{j=1}^N$ lying within a distance $\vert p_N \vert = k_F$ around the origin. The Fermi momentum is related to the average density $\rho$ via
\begin{align}
\rho = \frac{1}{L^2}\sum_{k=1}^N = \int_{\vert p \vert \le k_F} \frac{d^2p}{(2\pi)^2} = \frac{k_F^2}{4\pi} \hspace{1cm} \Leftrightarrow \hspace{1cm} k_F = \sqrt { 4 \pi\rho} .
\end{align}
We study the model in the thermodynamic limit $N,L \to \infty$ and $\rho=const.$ which we denote by $\lim_{\T{TD}}$. This simplifies the analysis because it allows us to ignore additional effects which are due to the chosen boundary conditions. For very large systems, i.e., in particular for $L/ \text{supp}(v) \gg 1$, such boundary effects are not expected to be physically relevant which justifies the analysis in the thermodynamic limit. We emphasize that for the result we are interested in in this work, it is really the parameter $\rho \gg 1$ which is the physically interesting one. We expect a very similar result to hold if one repeats all estimates for fixed but large values of $N$ and $L$, and then considers the regime in which $N\gg L$.

Let us next discuss the effective model.\ The effective dynamics is described by the Schr\"odinger equation with mean field Hamiltonian $H^{\text{mf}}$. Note that $H^{\text{mf}}$ is also self-adjoint on $H^2(\mathbb T^{2(N+1)} )$ and the corresponding mean field time evolution is denoted as $U^{\text{mf}}(t)$, $t\ge 0$. The average potential w.r.t.\ $\Omega_0$ that acts at position $y\in \mathbb T^2$,
\begin{align}
  E(y) =  \langle \Omega_0, V \Omega_0 \rangle_{\mathcal H_N} (y) = \rho \mathcal F[v](0),
\end{align}
where $\mathcal F$ denotes the Fourier transform, is spatially constant. The homogeneity of $E(y) = E$ is furthermore conserved under the mean field time evolution $U^{\text{mf}}(t)$, i.e.,
\begin{align}
\langle \Omega_t^f, V \Omega_t^f \rangle_{\mathcal H_N} = \langle  \Omega_0, V \Omega_0 \rangle_{\mathcal H_N}, \hspace{1cm} \Omega_t^f = e^{-iH^f_N t}\Omega_0,
\end{align}
where $H_N^f = - \sum_{i=1}^N \Delta_{x_i}$ denotes the free Hamiltonian of the gas. The Schr\"odinger equation with Hamiltonian (\ref{Mean-field Hamiltonian}) defines therefore a self-consistent approximation. The reason why we call $H^{\text{mf}}$ a mean field Hamiltonian is that to leading order, it is obtained from $H$ by replacing the potential $V$ by its average value $E$. The constant $E_{re}(\rho)$ is due to immediate recollisions between the tracer particle and gas particles. It is given by
\begin{align}
E_{re}(\rho) & = \lim_{\T{TD}} \widetilde E_{re}(N,\rho),
\label{def:F} \\
\widetilde E_{re}(N,\rho) & =  \frac{1}{L^4} \sum_{k=1}^N\sum_{l=N+1}^{\infty} \frac{\vert \mathcal F[v](
p_k - p_l ) \vert^2}{p_l^2-p_k^2} \theta \Big( \vert p_l \vert - \vert p_k \vert  - \rho^{-\frac{1}{2}} \Big), \nonumber
\end{align}
where $\theta(x)$ denotes the usual Heavyside step function, i.e., $\theta(x) = 1$ for $x\ge 0$ and zero otherwise. Eq.~\eqref{lemma:v_hat_estimates_Ka} of Lemma \ref{lem:v_hat_estimates} shows that for any $\varepsilon>0$ there are positive constants $C,C_{\varepsilon}$ such that
\begin{align} 
C \le E_{re}(\rho) \le C\rho^{2\varepsilon} + C_\varepsilon \rho^{-1/\epsilon}.
\end{align}
Since $\rho \mathcal F[v](0) - E_{re}(\rho)$ is constant as a function of the coordinates $y,x_1,...,x_N$, the time evolution $U^{\text{mf}}(t)$ is physically equivalent to the free dynamics generated by $H_y^f + H_N^f$ (where $H_y^f=-\Delta_y$).

Note that in the rest of this article we omit the subscripts $\mathcal H_y$, $\mathcal H_N$ or $\mathcal H_y\otimes \mathcal H_{N}$ on all scalar products and norms, since it is always clear from the argument on which space the scalar product or norm is meant.

\subsection{Discussion of the Main Result}

We give a list of nonrigorous remarks and assertions about the described model and Theorem \ref{thm:main_thm}.

\subsubsection{Spectral Properties}\label{sec:spectrum}

$H$ and $H^{\text{mf}}$ describe translation invariant systems and therefore the total momentum is conserved by both dynamics, $U(t)$ as well as $U^{\text{mf}}(t)$.\ In the microscopic model, however, the initial momentum of the tracer particle is not necessarily conserved due to the presence of the interaction potential.\ The joint energy-momentum spectrum of ($H$, $\hat P_{tot}$), $\hat P_{tot} = -i\nabla_y -\sum_{i=1}^N i\nabla_{x_i}$ being the total momentum operator, is thus expected to consist of degenerate values $(E_{tot}, P_{tot})$ where the degeneracy results from the different possibilities of splitting the total momentum $P_{tot}$ between the tracer particle and the gas. For $(E_{tot}, P_{tot}) = (\langle \Psi_0, H \Psi_0 \rangle , \langle \Psi_0, \hat P_{tot} \Psi_0 \rangle )$, the kinetic energy of the tracer particle may assume values between $E_y^{kin}=0$ and $E_y^{kin}=P_{tot}^2$. Note here that the smallest excitation energy of the gas is equal to $4\pi^2/ L^{2} \ll 1$. It is not difficult to verify that for every value $b\in [0,P_{tot}^2]$, there exists a wave function $\Psi^b \in \mathcal H_y\otimes \mathcal H_N $, such that
\begin{align}
\langle \Psi^b , -\Delta_y \Psi^b \rsp  =  b + \mathcal O( L^{-2} ), ~~~ \langle \Psi^b, \Psi^{b'} \rangle = 0 ~ \text{for} ~ \vert b-b' \vert > 4\pi^2/ L^2,
\end{align}
while the $\Psi^b$ are dynamically accessible in the sense that
\begin{align}
 \lsp \Psi^b, H \Psi^b\rsp = \lsp \Psi_0, H \Psi_0 \rsp, ~~~  \lsp \Psi^b, \hat P_{tot} \Psi^b \rsp = \lsp \Psi_0, \hat P_{tot} \Psi_0 \rsp.
\end{align}
In Figure \ref{Figure_1} (l.h.s.), we depict an example of such a wave function $\Psi^b$ (in this case a simple particle-hole excitation) which lowers the kinetic energy of the tracer particle for given values $( \langle \Psi_0, H \Psi_0 \rangle , \langle \Psi_0, \hat P_{tot} \Psi_0 \rangle)$. The reduced energy spectrum of the tracer particle is shown in Figure \ref{Figure_1} (r.h.s.): for $P_{tot}\neq 0$ and in the limit $L \to \infty$, our initial wave function $\Psi_0$ seems to lie on top of a continuous fiber $[0, P_{tot}^2 ]$ of dynamically accessible states.\footnote{To illustrate the above argument, let us give a very simple example of a transition that conserves total energy and total momentum, but lowers the kinetic energy of the tracer particle. Suppose, the initial momentum of the tracer particle is given by $\lsp \varphi_0, (-i\nabla) \varphi_0 \rsp = P_0 \neq 0$. Let us now consider a particle-hole excitation that absorbs all the momentum of the tracer particle, i.e., the momentum transfer is $\delta p = P_0$, such that the total momentum is conserved. Then the difference in the total energy before and after the collision is
\begin{equation}\label{E_heuristics}
\delta E = P_0^2 + p_k^2 - \Big(p_k + P_0 \Big)^2 = -2 p_k \cdot P_0.
\end{equation}
Therefore, energy is conserved in this case if $p_k \cdot P_0 \approx 0$. One can easily convince oneself similarly that there are particle-hole excitations from all states within two small regions (depicted in grey in Figure~\ref{Figure_1}, l.h.s.) which change the momentum of the tracer particle by $\vert \delta p\vert \in [0,P_0]$ and conserve the total energy.
}

Although the rigorous analysis of spectral properties in the thermodynamic limit is very subtle, we expect the above considerations to be true for $d\ge 2$, and for all $\rho \ge 1$.  
\begin{figure}[ttb] \centering \def\svgwidth{540pt} 
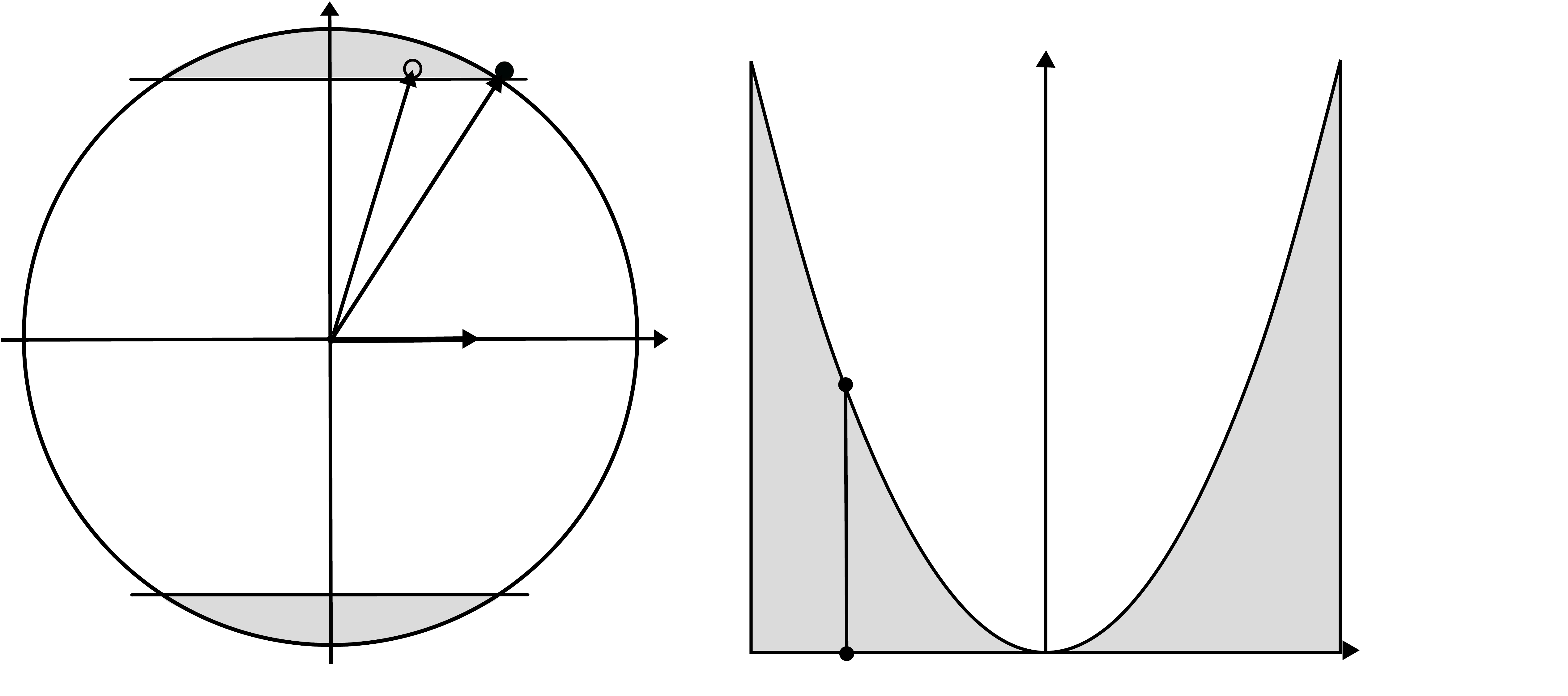
\caption{\label{Figure_1} L.h.s.:\ A possible particle-hole excitation in the 2d Fermi sphere which lowers the kinetic energy of the tracer particle from $P_0^2$ to $(P_0-\delta p)^2$ while leaving total momentum $P_0 = \lsp \Psi_0, \hat P_{tot} \Psi_0 \rsp$ and total energy $E_0 = \lsp \Psi_0, H \Psi_0 \rsp$ unchanged.\ For given $P_0$ and large $k_F$, only particles from the grey region can be excited. It follows from energy and momentum conservation that the area of this region is of order $1/k_F$, see Section \ref{sec:Momentum_space}.\ R.h.s.:\ Reduced kinetic energy spectrum of the tracer particle $E_y^{kin}(E_{tot},P_{tot})$ for given value $E_{tot}$ and as function of $P_{tot}$. The part below $P_{tot}^2$ corresponds to dynamically accessible states for which $E_y^{kin}(E_{tot},P_{tot})$ ranges between zero and $P_{tot}^2$.}
\end{figure} 

\subsubsection{Dimension\label{sec:dimension}}
The spectral properties are very different in one spatial dimension. For $d=1$, there are no dynamically accessible states, i.e., wave functions with total energy and momentum equal to that of $\Psi_{0}$, for which the average kinetic energy of the tracer particle is smaller than its initial value $P_{tot}^2$. Any significant transfer of momentum from $\varphi_0$ to $\Omega_0$ would cause an increase in the energy of the gas proportional to $k_F$ which is due to the quadratic energy dispersion relation.\footnote{This can be seen from \eqref{E_heuristics}: In one dimension the energy difference is $-2p_k \cdot P_0 \propto -k_F$ for any $p_k$ near the Fermi surface. For $d\ge 2$, this is different as explained above and indicated in Figure \ref{Figure_1}.} The reduced kinetic energy spectrum of the tracer particle in one dimension and for large $k_F$ is therefore the same as in the free model: $E_{y}^{kin} = \langle \Psi_0, \hat P_{tot} \Psi_0 \rangle^2$, with no other values allowed. This makes a result similar to Theorem~\ref{thm:main_thm} less surprising. A rigorous analysis of the one dimensional model was carried out in \cite{jeblick:2013}. In Appendix \ref{app:one_dimension}, we write down the theorem for $d=1$ and give a short sketch of how the the argument we employ to prove Theorem~\ref{thm:main_thm} is adapted to the one dimensional case.\\
\\
For $d=3$, the spectral properties are similar to the case $d=2$.\ However, for $d=3$, it is unclear if a similar result about the dynamics holds; see also Remark~\ref{rem:dimension} and Appendix \ref{app:three_dimensions}. This is why we chose to study $d=2$ in this article.

\subsubsection{Asymptotic Energy Loss} \label{sec:dissipation_time}

From the spectral picture that was explained in Section \ref{sec:spectrum}, we expect that eventually, the kinetic energy of the tracer particle dissipates into the gas by means of particle-hole excitations. Recall Figure \ref{Figure_1} (r.h.s): for $L \to \infty$, the initial wave function $\Psi_0$ lies above the continuous fiber over $P_0=\lsp \Psi_0, \hat P_{tot} \Psi_0 \rsp$. If the initial momentum is nonzero, the tracer particle occupies an excited state which is coupled to a dispersive medium with a large number of degrees of freedom. In such a situation, one may expect that the excited subsystem approaches asymptotically its lowest energy state. For the Fermi gas, this friction mechanism is suppressed for large values of $\rho$.\ Theorem~\ref{thm:main_thm} states that $\Psi_0$, or equivalently, the initial momentum distribution of the tracer particle, is stable on a large time scale, namely of the order of some power of $\rho$.\ On some larger time scale the tracer particle is expected to slow down until it reaches its ground state $E_{y}^{kin} = 0 $.

The rigorous understanding of existence and properties such as lifetime and decay rate of long-lived resonances is, however, very difficult. It needs more refined techniques and perhaps a more general formulation of the model (e.g., defining it directly on $\mathbb R^2$) in order to describe the physically correct behavior for $t \to\infty$. In \cite{Sabin:2014}, e.g., a similar question was studied on the level of the (fermionic) Hartree equation for which it was shown that a small defect added initially to the translation-invariant homogeneous state disappears for large times due to dispersive effects of the gas.

\subsubsection{Momentum Space\label{sec:Momentum_space}}
Let us take a closer look at the number of states in momentum space that can in principle interact with the tracer particle.\ In the limit of large $\rho$ only few particles in $\Omega_0$ can be lifted above $k_F$. The number of transitions is restricted because of the Pauli principle and conservation of energy and momentum. In Figure \ref{Figure_1} (l.h.s.) we also indicate the momentum space region of particles that can exchange energy with the tracer particle for given $P_0=\langle \Psi_0,P_{tot} \Psi_0 \rangle$. The probability of finding a gas particle in this phase space area at a given instant of time within the range of the interaction of the tracer particle is small, namely $\propto 1/k_F$ (note that the grey area in Figure \ref{Figure_1} (l.h.s.) is chosen to have a width $\propto \vert P_0\vert = \mathcal O(1)$ which implies that its height shrinks like $1/k_F$). During a unit time interval the number of such particles is nevertheless again at least of order one, i.e., in particular independent of the value of $\rho$.\ This is true because the corresponding Boltzmann cylinder (the distance from where the gas particles can reach the tracer particle during a unit time interval) has length $\propto k_F$. Therefore, already for times of order one, the tracer particle could interact with order one gas particles.\ However, in Theorem \ref{thm:main_thm} we show that this does not happen and the tracer particle moves freely even for longer time scales than order one.

\subsubsection{Norm Distance}\label{sec:norm}
Since we consider a regime of strong coupling, even a single collision can be enough to disturb the free motion of the system. It is thus necessary to prove that all particles behave according to the mean field equation. This is the reason why the difference in norm between $U(t)\Psi_0$ and $U^{\text{mf}}(t)\Psi_0$ is the right quantity to consider. Note that the situation is different compared to the weak coupling regime where the aim is usually to prove that the relative number of particles which evolve according to the mean field potential is close to one; see, e.g., \cite{bardos:2003,erdoes:2004,froehlich:2011,benedikter:2013,
petrat:2016c,BachPetrat:2016,Petrat:2016b} for the fermionic case.

\subsubsection{Fluctuations and Mean Field Regime} \label{sec:fluc}
The substitution of the potential $V$ in $U(t)\Psi_0$ by its average value $E$ would be easy to justify if fluctuations around $E$ were negligibly small, i.e., if $\label{18}	V \approx E$ would hold with probability close to one (w.r.t.\ the probability density defined by $\vert\Omega_t^f\vert^2$). We show in Lemma~\ref{lem:v_hat_estimates}, Eq.\eqref{lemma:v_hat_estimates_fluc}, that this is not the case: while $ \lim_\T{TD}\no (V-E)\Omega_t^f\no^2$ is suppressed in the sense that it grows only with $\sqrt \rho$ instead of $\rho$ (as naively expected from the square root of N law), it still diverges in the limit $\rho\to\infty$. The reason for the large fluctuations is the strong coupling $g=1$. If we had assumed a weak coupling, say $g=\rho^{-1}$, the fluctuations would vanish when $\rho$ tends to $\infty$ and an estimate like in Theorem \ref{thm:main_thm} would follow almost trivially. We emphasize this because it exemplifies an interesting fact: the mean field regime for fermions does not necessarily coincide with a weak coupling limit $g\to0$ $(\rho\to\infty)$. For bosons, on the other hand, the mean field regime coincides with the weak coupling limit. In other words, Theorem \ref{thm:main_thm} provides an explicit example of a setting where the mean field regime for fermions is much larger compared to bosons or classical particles. In Section \ref{sec:sketch_proof_physical_picture} we give a short explanation of why the mean field description is valid even though the fluctuations can be very large.
\\
\\
Let us also remark that more generally one finds for $d=1,2,3$ spatial dimensions,
\begin{align}
\label{flucutations_d_dimension}\lim_{\T{TD}} \bno (V-E) \Omega_t^f \bno^2 = C_d \rho^{\frac{d-1}{d}},
\end{align}
with $d$-dependent constants $C_d$. A similar result about the suppression of fluctuations in a Fermi gas has been mentioned in \cite[Eqs.~(48)-(50)]{castin:2006}. Compared to \eqref{flucutations_d_dimension}, there appears an additional factor $\ln \rho$ on the r.h.s.\ which is due to the fact that $v$ was chosen less regular than in our case.

\subsubsection{Subleading Energy Correction $E_{re}(\rho)$}

In Lemma \ref{lem:v_hat_estimates} we show for $d=2$ that $C \leq E_{re}(\rho) \le C \rho^{2\varepsilon} + C_\varepsilon \rho^{-1/\varepsilon}$ for any $\varepsilon>0$ and positive constants $C,C_\varepsilon$.\ This means in particular that the claimed estimate in Theorem~\ref{thm:main_thm} would not be correct without including $E_{re}(\rho)$ in the definition of $H^{\text{mf}}$.\ Nevertheless, $E_{re}(\rho)$ is only a subleading correction to the mean field energy $\rho \mathcal F[v]$.\ It arises from so-called immediate recollisions, i.e.,\ collisions of the type where the tracer particle excites a particle-hole pair in the gas and then immediately recollides with the excited particle which recombines with the hole.\ Such processes appear in the expansion of $\Psi_t$ into the different collision histories that have to be controlled, see the end of Section~\ref{sec:main_lemma}.\footnote{Let us remark that if we iterate the Duhamel expansion \eqref{integral_Se} infinitely often, one can identify in each order terms that contain only immediate recollisions. Then one can indeed show that the phase factor $e^{iE_{re}(\rho)t}$ cancels exactly the leading order contribution of those immediate recollision terms summed up in all orders. However, we refrain from showing this here since for the proof of Theorem \ref{thm:main_thm} it is sufficient to stop the Duhamel expansion at third order.} From the definition of $E_{re}(\rho)$ in \eqref{def:F} one can see that only gas particles near the Fermi surface contribute to $E_{re}(\rho)$.

\subsubsection{Application to Physics} \label{sec:applications}

The presented model is very close to the physically interesting situation of ions moving through a degenerate and dense electron plasma. An understanding of what is often referred to as slowing down of ions in a degenerate plasma has been of interest in the physics literature at least since a work by Fermi and Teller in 1947\ \cite{fermi:1947} (see also \cite{mott:1949}). They have pointed out that the efficiency of the gas for slowing down ions with velocities far below the Fermi edge is very low. The same question has later been analyzed explicitly for the high-density case for which the energy loss of the ions was found to be caused mainly by (rare) collisions with other ions instead of interactions with the electrons from the plasma; see, e.g., \cite{gryzinski:1957,ritchie:1959,dar:1974,williams:1975,yakovlev:1983}. These results raised considerable interest in the field of nuclear physics in which it was known that the existence of long-lived ions in the plasma is essential for the occurrence of fusion reactions; e.g., \cite{brysk:1974,peres:1975}.

Let us stress that to our knowledge, the analysis has remained so far on a purely formal level. The rigorous bound we present here, starting from the microscopic dynamics and taking into account the full strong interaction, seems to be novel.

\subsection{Sketch of the Proof}\label{Sketch of the Proof}

For deriving Theorem \ref{thm:main_thm} we use Duhamel's expansion in order to decompose $\Psi_t$ into different wave functions that correspond to different collision histories of the tracer particle. The main difficulty is to control the interaction with particles occupying momenta close to the Fermi edge. Our main ingredient here is the large shift in the energy and the thereby caused phase cancellation during the scattering with such particles. It turns out to be necessary but also sufficient to use a third order expansion in the difference $H-H^{\text{mf}}$. Let us stress again that $g=1$. This prevents us from using a straightforward order by order expansion of the time evolution. Thus, after expanding to third order, we have to estimate an error term involving the whole time evolution $U(t)$. In order to convey the main ideas and techniques behind the proof, let us start by expanding
\begin{align}
\label{integral_Se} U(t)\Psi_0 - U^{\text{mf}} (t)\Psi_0 = -i\int_0^td\tau_1 & U^{\text{mf}}(t-\tau_1)(H-H^{\text{mf}}) U^{\text{mf}}(\tau_1)\Psi_0   \\
& \hspace{-0.5cm} 
- i \int_0^td \tau_1 \Big( U(t-\tau_1) - U^{\text{mf}}(t-\tau_1) \Big) (H-H^{\text{mf}}) U^{\text{mf}}(\tau_1)\Psi_0  , \nonumber
\end{align}
which follows from expanding $U$ around $U^{\text{mf}}$ in terms of Duhamel's formula and then splitting $U = U^{\text{mf}} + ( U - U^{\text{mf}})$.  The first term on the r.h.s.\ contains deviations from the effective dynamics due to single particle-hole excitations. In order to present the main argument, let us ignore the next-to-leading order energy correction $E_{re}(\rho)$ in the following. Using some elementary algebra (only momenta inside the Fermi sphere can be annihilated and momenta outside the Fermi sphere created), one readily rewrites
\begin{align}\label{DECOMPOSITION:V-E}
	(V-E) \Psi_0 = \Vol{2} \sum_{k=1}^N \sum_{l=N+1}^{\infty} \mathcal F[v](p_{l}-p_{k}) \Big( e^{i ( p_{l} - p_{k} ) y} \varphi_0 \Big) \otimes a^*(p_{l}) a(p_{k}) \Omega_0.
\end{align}
Abbreviating $k_{kl}(\tau_1) = e^{i H_y^f \tau_1} e^{i ( p_{l} - p_{k} ) y } e^{-iH_y^f \tau_1}$
(which evolves the tracer particle freely to time $\tau_1$ at which its momentum is changed by $p_{l} - p_{k} $ and then evolves it back to the initial time), it is also straightforward to arrive at
\begin{align}
&  \bno \int_0^t d\tau_1 U^{\text{mf}}(-\tau_1)(V-E)U^{\text{mf}} (\tau_1)\Psi_0  \bno^2 \nonumber \\
&\hspace{3cm}= \underbrace{\Vol{4} \sum_{k=1}^N \sum_{l=N+1}^{\infty} \Big\vert \mathcal F[v](p_{k}-p_{l}) \Big\vert^2}_{=\vert\vert (V-E) \Omega_t^f \vert\vert^2} \bno \int_0^t d\tau_1 e^{i (p_l^2-p_k^2 )\tau_1} k_{kl}(\tau_1) \varphi_0 \bno^2.  \label{FIRST:ORDER:DEVIATIONS}
\end{align}
Due to the regularity of the potential $v$ it is unlikely that a single collision causes a large momentum transfer between $\varphi_0$ and $\Omega_0$. This is reflected in the fact that the Fourier transform of a smooth and compactly supported function decays faster than any polynomial: for all $p \in \mathbb N$ there exists a constant $D_p$ such that
\begin{align}
\label{Paley_Wiener_potential} \Big\vert \mathcal F[v](p_{k}-p_{l}) \Big\vert \le \frac{D_p}{( 1+ \vert p_{k}-p_{l}\vert )^p},
\end{align}
which follows directly from the Paley-Wiener Theorem; e.g., \cite[Theorem XI.11]{reed1975methods}. At this point it is convenient to introduce the following notation. For $\varepsilon>0$ we define $v^{l,\varepsilon}$ and $v^{s,\varepsilon}$ such that
\begin{align} \label{potential_large}
		\mathcal F[v^{l,\varepsilon}](p_{k}-p_{l}) & = \theta\big(\vert p_{k}-p_{l} \vert - \rho^\varepsilon\big) \mathcal F[v] (p_{k}-p_{l})\\
	\label{potential_small}	\mathcal F[v^{s,\varepsilon}] (p_{k}-p_{l}) & = \theta\big(\rho^\varepsilon - \vert p_{k}-p_{l} \vert \big) \mathcal F[v](p_{k}-p_{l} ).
\end{align}
The transition amplitude $\vert \mathcal F [v^{l,\varepsilon}] (p_{k}-p_{l}) \vert^2$ is negligible for $\rho\gg 1$ which can be inferred from \eqref{Paley_Wiener_potential}. What remains to be bounded is the transitions in \eqref{FIRST:ORDER:DEVIATIONS} with momentum transfer of order one, i.e.,
\begin{align}
\Vol{4} \sum_{k=1}^N \sum_{l=N+1}^{\infty} \Big\vert \mathcal F[v^{s,\varepsilon}](p_{k}-p_{l}) \Big\vert^2 \bno \int_0^t d\tau_1 e^{i (p_l^2-p_k^2) \tau_1} k_{kl}(\tau_1) \varphi_0 \bno^2. \label{FIRST:ORDER}
\end{align}
The reason that this term vanishes as well is the oscillation of the integrand $e^{i (p_l^2-p_k^2) \tau_1} k_{kl}(\tau_1) \varphi_0$. Outside a set of critical points of the phase for which $\vert p_{l}\vert - \vert p_{k} \vert \le \kappa (\rho)$, for some appropriately small $\kappa (\rho)\ll 1$, the energy shift grows rapidly: $p_l^2-p_k^2 = \big(|p_l| + |p_k|\big) \big(|p_l| - |p_k|\big) \gtrsim \sqrt \rho \kappa(\rho) \gg 1$. By partial integration, one thus finds that
\begin{align}
	\eqref{FIRST:ORDER} \lesssim \frac{t^2}{L^4} \underset{ \{ \text{ stationary points} \ \} }{ \bigg[\sum_{k=1}^N \sum_{l=N+1}^{\infty}}+ \frac{1}{\rho \kappa (\rho)^2 } \sum_{k=1}^N \sum_{l=N+1}^{\infty}	\bigg] ~ \Big\vert \mathcal F[v^{s,\varepsilon}] (p_{k}-p_{l}) \Big\vert^2 ,\label{sketch_of_estimate} 
\end{align}
which will be shown to vanish in the limit $\rho \to \infty$, see Remarks~\ref{rem:sketch_proof_stat_point} and \ref{rem:dimension} below. This result is the key ingredient to understand the proof of Theorem~\ref{thm:main_thm}. Since the interaction is modeled by a two-body potential, it is reasonable to expect that an appropriate estimate for the higher order terms in \eqref{integral_Se} follows from a bound of the r.h.s.\ of \eqref{sketch_of_estimate}. Technically, however, it is more tedious to obtain good control of the higher-order contributions. The difficulty is the appearance of the full time evolution $U$. Using the Duhamel expansion for $U-U^{\text{mf}}$, one finds an estimate similar to
\begin{align}
\bno & \int_{0}^t d\tau_1   \int_{0}^{\tau_1}    d\tau_2   U(\tau_2)   (V-E) U^{\text{mf}} (\tau_2    -\tau_1)(V-E) U^{\text{mf}} (\tau_1)\Psi_0 \bno^2  \lesssim t^2 \cdot \eqref{FIRST:ORDER}   \cdot \lno (V-E) \Omega_t^f \rno^2 , \nonumber
\end{align}
for which the r.h.s., however, is still divergent for $\rho\to \infty$ (recall that $\lim_{\T{TD}}\no (V-E) \Omega_t^f \no \to \infty$ when $\rho$ tends to $\infty$). Expanding $U$ another time, the main contribution that has to be controlled is given by
\begin{align} 
\label{R_moral_estimate_2} \bno \int_{0}^t d\tau_1  \int_{0}^{\tau_1} d\tau_2  {\int_{0}^{\tau_2} d\tau_3} U(\tau_3) (V-E) U^{\text{mf}} (\tau_3-\tau_2)(V-E) U^{\text{mf}}(\tau_2-\tau_1)(V-E) U^{\text{mf}}(\tau_1)\Psi_0 \bno.\nonumber
\end{align}
Now one can use the oscillation of the integrand also in the second time-variable. It will be shown in detail that this can be bounded in terms of
\begin{align}
 t^2  \cdot \eqref{FIRST:ORDER}^2  \cdot \lno (V-E) \Omega_t^f \rno^2    \to 0 \ (\rho\to \infty).
\end{align}
This explains why we expand the dynamics up to third order for proving Theorem~\ref{thm:main_thm}. Let us conclude with some remarks.

\begin{remark}\label{rem:sketch_proof_stat_point}
In Lemma~\ref{lem:v_hat_estimates} we show that $\lim_{\T{TD}} \no (V-E) \Omega_0 \no^2   \propto \sqrt \rho$. The term containing the nonstationary terms in (\ref{sketch_of_estimate}) will therefore be proportional to $\rho^{-\frac{1}{2}} \kappa(\rho)^{-2}$. The term containing the stationary points in (\ref{sketch_of_estimate}) turns out, in $d=2$, to be proportional to $\sqrt \rho \kappa(\rho)^2$. Thus, one actually needs a finer separation around the stationary points in order to obtain the desired bound in $\rho$. The details of this separation are explained in Section~\ref{sec:preliminaries}.
\end{remark} 
\begin{remark}\label{rem:dimension}The second summand on the r.h.s.\ of \eqref{sketch_of_estimate} behaves at best like $\no (V-E)\Omega_0\no^2 / k_F^2$.
Recalling \eqref{flucutations_d_dimension} as well as $k_F \propto \rho^{\frac{1}{d}}$ in $d$ dimensions, it is clear that a similar statement as Theorem~\ref{thm:main_thm} also holds for $d=1$. For $d=3$, on the contrary, the last term is not small, even if one optimizes the separation of the nonstationary points. We provide more details about $d=1$ and $d=3$ in Appendices \ref{app:one_dimension} and \ref{app:three_dimensions}.
\end{remark} 
\begin{remark}
Many of the techniques we use in the proof of our main result also appear in the proof of quantum diffusion  of a particle in an external random potential \cite{Erdoes:2008,Erdoes:2007}, which is in several respects much more involved. The main difficulty of our problem is that we do not have a small coupling constant, but we need to show that the interaction is effectively small.
\end{remark} 

\subsubsection{Physical Picture Behind the Proof}\label{sec:sketch_proof_physical_picture} 

On the one hand, it is obvious that for $\rho\gg 1$ the tracer particle can interact only with particles that occupy a momentum close to the Fermi edge. This is due to the exclusion principle and because all momenta smaller than $k_F$ are occupied in $\Omega_0$. Particles with small momentum can simply not be lifted above $k_F$. In other words, for $\rho \gg 1$, the Fermi pressure becomes so strong that the particles far inside the Fermi sphere behave very rigidly and are thus hardly disturbed by the presence of the tracer particle (and vice versa). On the other hand, the reason why collisions with particles with momentum close to $k_F$ do not disturb the free motion is that such particles have very large momenta when $\rho\gg 1$ (i.e., for $m_{x}=1/2$, large velocities) and thus interact only on a very short time scale with the tracer particle. Hence, the momentum transfer is effectively small. Let us note that in the limit of very short wave lengths, the particle behavior is dominant, which makes this explanation plausible. In the proof, the high momenta of the gas particles (or, the short time scale of interaction) appear as the factor $\rho^{-1}\propto k_F^{-2}$ in (\ref{sketch_of_estimate}).

\section{\label{Proof of the Main Result}Proof of the Main Result}
\subsection{Notations and Definitions}
Let for any $t\ge 0$
\begin{align}
 k_{kl}(t): \mathcal H_y \to \mathcal H_y,& \hspace{1cm} \varphi\to k_{kl}(t)\varphi = e^{iH_y^ft}e^{-i(p_k-p_l)y}e^{-iH_y^ft}\varphi \\
g_{kl}(t): \mathcal H_y \mapsto \mathcal H_y,& \hspace{1cm} \varphi \mapsto g_{kl}(t)\varphi = e^{-i( p_k^2-p_l^2)t} k_{kl}(t) \varphi,
\end{align}
and
\begin{align}D(t) : \mathcal H_y\otimes\mathcal H_{N} \to \mathcal H_y \otimes \mathcal H_{N}, \hspace{1cm} \Psi \mapsto D(t) \Psi := U(-t) U^{\text{mf}}(t) \Psi.
\end{align}
We denote the Fourier transform of the potential $v$ by $\mathcal F[v]$, where $\mathcal F[v]$ is defined such that
\begin{align}
v(x) = \Vol{2} \sum_{k=1}^\infty \mathcal F[v] ( p_k ) e^{i p_k x }.
\end{align}
We use the following abbreviations:
\begin{itemize}
\item $\hat v_{kl} = \mathcal F[v] ( p_k -p_l)$,
\item $\hat v_{kl}^{\ell,\varepsilon} = \theta\big(-\rho^\varepsilon + \vert p_k-p_l \vert \big) \hat v_{kl} , ~~~ \hat v_{kl}^{s,\varepsilon} = \theta\big(\rho^\varepsilon - \vert p_k-p_l \vert\big) \hat v_{kl}$; cf.\ \eqref{potential_large},\eqref{potential_small},
\item $ E_k = p_k^2$,
\item $\no \cdot \no_{\T{TD}} = \lim_{\T{TD}} \no \cdot \no$;\footnote{Note that despite the chosen notation, $\no \cdot \no_{\T{TD}}$ does not define a proper norm since $\no f \no_{\T{TD}}$ may be zero for nonzero $f$.} recall that $\lim_{\T{TD}}\no \cdot \no = \lim_{N,L\to\infty, \rho=const.} \no \cdot \no$,
\item $a^*(p_l)a(p_k) \Omega_0 = \Omega_0^{[l^*k]}$ and all kind of variations thereof.
\end{itemize}
\noindent Next, we introduce
\begin{align}
\Psi_{1} (\tau_2,\tau_1)& = \Vol{4}  \sum_{k_1=1}^N \sum_{l_1=N+1}^{\infty}  \vert \hat v_{k_1l_1}\vert^2 \Big( g_{l_1k_1}(\tau_2) g_{k_1l_1}(\tau_1) \varphi_0 \Big) \otimes \Omega_0,\\
\Psi_{2}(\tau_2,\tau_1) & =   \Vol{4}  \sum_{k_1,k_2=1}^N \sum_{l_1=N+1}^{\infty}  \hat v_{k_2k_1} \hat v_{k_1l_1} \Big( g_{k_2k_1}(\tau_2)g_{k_1l_1}(\tau_1) \varphi_0 \Big) \otimes \Omega_0^{[k_2 l_1^*]}, \\ 
\Psi_{3} (\tau_2,\tau_1)  &=  \Vol{4}  \sum_{k_1=1}^N \sum_{l_1,l_2=N+1}^{\infty} \hat v_{l_1l_2} \hat v_{k_1l_1} \Big( g_{l_1l_2}(\tau_2) g_{k_1l_1}(\tau_1) \varphi_0 \Big) \otimes \Omega_0^{[l_2^* k_1 ]}, \\
\Psi_{4}(\tau_2,\tau_1) &=  \Vol{4}  \sum_{k_1,k_2=1}^N \sum_{l_1, l_2=N+1}^{\infty} \hat v_{k_2l_2} \hat v_{k_1l_1} \Big( g_{k_2l_2}(\tau_2) g_{k_1l_1}(\tau_1) \varphi_0 \Big) \otimes \Omega_0^{[l_2^*k_2l_1^*k_1]}
\end{align}
which arise from the equality
\begin{align} \label{DEF:PSI_1234}
	U^{\text{mf}} (-\tau_2)(V-E) U^{\text{mf}} (\tau_2-\tau_1)(V-E) U^{\text{mf}}(\tau_1) \Psi_0 = \Psi_{1} + \Psi_{2} + \Psi_{3} + \Psi_{4}.
\end{align}
We further define\allowdisplaybreaks
\begin{align}
\Psi_{\T{A}}(t) = & E_{re}(\rho) \int_0^t d\mu_2(\tau) D(\tau_2) U^{\text{mf}} (-\tau_1) (V-E) U^{\text{mf}}(\tau_1)\Psi_0,\\
\Psi_{\T{B}}(t) = &  E_{re}(\rho) \int_0^{t}d\tau_1  D(\tau_1)\Psi_0 + i \int_0^t d\mu_2(\tau)  D (\tau_2) \Psi_{1}(\tau_2,\tau_1) ,\\
\Psi_{\T{C}}(t) = & \int_0^t d\mu_2(\tau) D(\tau_2)   \Psi_{2}(\tau_2,\tau_1) ,\\
\Psi_{\T{D}}(t) = & \int_0^t d\mu_2(\tau) D(\tau_2)  \Psi_{3}(\tau_2,\tau_1) ,\\
\Psi_{\T{E}}(t) = &  E_{re}(\rho)  \int_0^t d\mu_3(\tau)  D(\tau_3)\Psi_{4}(\tau_2,\tau_1),\\
\Psi_{\T{F}}(t)= & \int_0^t d\mu_3(\tau)  U(-\tau_3) (V-E) U^{\text{mf}} (\tau_3)\Psi_{4}(\tau_2,\tau_1),
\end{align}
where we have introduced the shorthand notation
\begin{align}
\int_0^t d\mu_n(\tau) = \int_{0}^t d\tau_1 \int_0^{\tau_{1}} d\tau_{2} ~ ... \int_{0}^{\tau_{n-1}} d\tau_n.
\end{align}

\subsection{Main Lemma and Proof of Theorem~\ref{thm:main_thm} \label{sec:main_lemma}}

The main lemma we have to prove is

\begin{lemma}\label{lem:main_lemma} Let $0<\varepsilon< 1/8$. Under the same assumptions as in Theorem~\ref{thm:main_thm}, there exist positive constants $C$, $C_\varepsilon$ such that
\end{lemma}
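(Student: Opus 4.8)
\textbf{Proof proposal for Lemma~\ref{lem:main_lemma}.}

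The plan is to start from the exact identity obtained by iterating Duhamel's formula three times in the difference $H-H^{\text{mf}}$, written in the form \eqref{integral_Se} and then expanded twice more in $U-U^{\text{mf}}$. After inserting the decomposition \eqref{DEF:PSI_1234} of the product of three factors $(V-E)$ acting on $\Psi_0$ into $\Psi_1+\Psi_2+\Psi_3+\Psi_4$, and carefully keeping track of the $E_{re}(\rho)$-terms coming from $H-H^{\text{mf}}=(V-E)+E_{re}(\rho)$, every piece of the expansion should be identifiable with one of $\Psi_{\T{A}},\dots,\Psi_{\T{F}}$ (plus the first-order term \eqref{FIRST:ORDER:DEVIATIONS} and the ``$\Psi_{\T{B}}$''-type $\Psi_1$ contribution). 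Thus the claimed bound will follow once we prove, for each $X\in\{\T{A},\dots,\T{F}\}$, a bound of the form $\no \Psi_{\T X}(t)\no_{\T{TD}} \le C_\varepsilon (1+t)^{3/2}\rho^{-1/8+\varepsilon}$, together with the analogous bound for the genuinely first-order term. I would state and prove these six (or seven) estimates as separate sublemmas.

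The analytic heart of every one of these estimates is the oscillatory-integral bound sketched in \eqref{FIRST:ORDER}--\eqref{sketch_of_estimate}: for the building block $\int_0^t d\tau\, e^{i(E_l-E_k)\tau} k_{kl}(\tau)\varphi$ one splits the $(k,l)$-sum into a ``stationary'' region $\big||p_l|-|p_k|\big|\le\kappa(\rho)$, on which one bounds the integral trivially by $t\no\varphi\no$, and a ``nonstationary'' region, on which one integrates by parts once in $\tau$, picking up $|E_l-E_k|^{-1}\gtrsim (k_F\kappa(\rho))^{-1}\propto(\sqrt\rho\,\kappa(\rho))^{-1}$ together with boundary terms and a term with one derivative hitting $k_{kl}(\tau)\varphi$ — here the assumption $\no\nabla^4\varphi_0\no\le C$ enters to control $\no\partial_\tau k_{kl}(\tau)\varphi\no\lesssim(1+|p_k-p_l|)^2\no\nabla^2\varphi\no$ and its iterates. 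Combined with the rapid decay \eqref{Paley_Wiener_potential} of $\mathcal F[v]$, which lets us replace $\hat v_{kl}$ by $\hat v_{kl}^{s,\varepsilon}$ up to negligible errors and restrict all momentum transfers to order $\rho^\varepsilon$, and with the counting estimates from Lemma~\ref{lem:v_hat_estimates} — in particular $\no(V-E)\Omega_0\no^2_{\T{TD}}\propto\sqrt\rho$ and the bound on $E_{re}(\rho)$ — one gets, per collision, a gain of roughly $\rho^{-1/2}\kappa(\rho)^{-2}$ from the nonstationary part and $\sqrt\rho\,\kappa(\rho)^2 t^2$ from the stationary part; optimizing $\kappa(\rho)$ and using the finer decomposition around the stationary set announced in Remark~\ref{rem:sketch_proof_stat_point} and Section~\ref{sec:preliminaries} produces the power $\rho^{-1/8+\varepsilon}$. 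For $\Psi_{\T{C}},\Psi_{\T{D}}$ (two collisions) and $\Psi_{\T{F}}$ (three collisions) one iterates this in the successive time variables $\tau_1,\tau_2$ (and, for $\Psi_{\T{F}}$, one uses that $U(-\tau_3)(V-E)U^{\text{mf}}(\tau_3)$ is bounded by $\no(V-E)\Omega_t^f\no$ times a Cauchy–Schwarz factor), so that the per-collision gains multiply against the per-collision fluctuation cost $\no(V-E)\Omega^f_t\no^2\propto\sqrt\rho$; the point is that two nonstationary gains beat one fluctuation factor, which is exactly why third order suffices and why the final exponent is what it is. The operators $D(\tau)$ and $U(-\tau)$ appearing in front are unitary and simply dropped, and the momentum-sums decouple because $\Omega_0^{[\cdots]}$ for distinct index patterns are orthogonal.

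I would expect the main obstacle to be the precise treatment of the stationary region and the bookkeeping of the ``crossing'' versus ``nested'' collision histories in $\Psi_2,\Psi_3,\Psi_4$: one must show that when two momentum transfers are forced to (nearly) cancel the associated phase, the \emph{remaining} time integral still oscillates, i.e.\ that one cannot have all phases stationary simultaneously except on a set whose contribution is controlled by the finer $\kappa(\rho)$-layering — this is the place where the geometry of the 2d Fermi sphere (the $1/k_F$ measure of the near-Fermi shell intersected with the momentum-conservation constraint, cf.\ Section~\ref{sec:Momentum_space}) is used quantitatively, and where $d=3$ would fail. A secondary technical nuisance is justifying the thermodynamic limit term by term (uniform summability of the momentum series so that $\lim_{\T{TD}}$ may be taken inside), which the fast decay \eqref{Paley_Wiener_potential} makes routine but tedious. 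Once the per-block estimates are in hand, assembling them with the triangle inequality and collecting the $t$-powers (at most $(1+t)^{3/2}$, from the three time integrations balanced against the $t^2$ from stationary parts after optimization) gives Lemma~\ref{lem:main_lemma}, and then Theorem~\ref{thm:main_thm} via \eqref{NORM:DIFFERENCE:BOUND}.
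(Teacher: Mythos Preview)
Your approach is essentially the paper's: split $\hat v=\hat v^{s,\varepsilon}+\hat v^{\ell,\varepsilon}$, layer the near-stationary region into the sets $\I_n$ of Section~\ref{sec:preliminaries}, integrate by parts in the time variables to gain $(E_l-E_k)^{-1}$, and use Lemma~\ref{lem:v_hat_estimates} and Corollary~\ref{COROLLARY} for the counting. Two points need correction, however.

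First, you have conflated the lemma with the theorem. Lemma~\ref{lem:main_lemma} does \emph{not} assert $\no\Psi_{\T X}(t)\no_{\T{TD}}\lesssim(1+t)^{3/2}\rho^{-1/8+\varepsilon}$; it gives bounds of order $(1+t)^2\rho^{-1/4+O(\varepsilon)}$ for $\Psi_{\T A},\dots,\Psi_{\T D}$ and $(1+t)^3\rho^{-1/4+O(\varepsilon)}$ for $\Psi_{\T E},\Psi_{\T F}$. The exponents $3/2$ and $-1/8$ arise only in Theorem~\ref{thm:main_thm} after taking the square roots in \eqref{NORM:DIFFERENCE:BOUND}. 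Your optimization of $\kappa(\rho)$ therefore aims at the wrong target; per block you should be producing $\rho^{-1/4+\cdot}$, and the time powers simply count the remaining integrations (two or three).

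Second, and more substantively, the treatment of $\Psi_{\T B}$ is where the energy correction $E_{re}(\rho)$ actually does work, and ``carefully keeping track'' is not enough. After one partial integration in $\tau_1$, the boundary term at $\tau_1=\tau_2$ produces
\[
\Vol{4}\sum_{(k_1,l_1)}\frac{|\hat v_{k_1l_1}|^2}{E_{l_1}-E_{k_1}}\int_0^t d\tau_2\,D(\tau_2)\Psi_0,
\]
since $k_{l_1k_1}(\tau_2)k_{k_1l_1}(\tau_2)=1$. This term is \emph{not} small --- by \eqref{lemma:v_hat_estimates_Ka} it is of order one --- and it must be cancelled exactly (up to an $\hat v^{\ell,\varepsilon}$-error) by the $E_{re}(\rho)\int_0^t D(\tau_1)\Psi_0$ piece built into the definition of $\Psi_{\T B}$. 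This cancellation is the reason $E_{re}(\rho)$ appears in $H^{\text{mf}}$ at all, and it has to be displayed explicitly; without it the bound \eqref{lemma:main-norm-of-A1} fails.

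Finally, your anticipated obstacle about ``crossing versus nested'' phases being simultaneously stationary does not arise: in $\Psi_{\T E}$ and $\Psi_{\T F}$ the combined phase after the second integration by parts is $E_{l_1}-E_{k_1}+E_{l_2}-E_{k_2}$, which is a sum of two positive terms and hence bounded below by each separately (see the operators $G^{(1)},G^{(2)},G^{(3)}$ in \eqref{DEF:G:1}--\eqref{DEF:G:3}). The real labour in $\Psi_{\T F}$ is not a phase subtlety but the enumeration of the nine collision histories produced when $(V-E)$ hits $\Omega_0^{[l_2^*k_2l_1^*k_1]}$, each of which must be bounded separately using the orthogonality you mention.
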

\vspace{-0.7cm}
\allowdisplaybreaks
\begin{align}
 \no \Psi_{\T{\tiny A}}(t) \no_{\T{TD}} &\le C(1+t)^2 \Big( \rho^{-\frac{1}{4} + 5\varepsilon} + C_\varepsilon \rho^{-\frac{3}{2\varepsilon}} \Big)   \label{lemma:main-norm-of-E} ,\\ 
\no  \Psi_{\T{\tiny B}}(t) \no_{\T{TD}} & \le C(1+t)^2 \Big( \rho^{-\frac{1}{4} + 2\varepsilon}  + C_\varepsilon \rho^{- \frac{1}{\varepsilon} } \Big) \label{lemma:main-norm-of-A1}, \\
\no \Psi_{\T{\tiny C}}(t) \no_{\T{TD}}  & \le C(1+t)^2 \Big( \rho^{-\frac{1}{4} + 2\varepsilon  } +  C_\varepsilon \rho^{\varepsilon -\frac{1}{2\varepsilon}} \Big), \label{lemma:main-norm-of-C}\\
 \no \Psi_{\T{\tiny D}}(t) \no_{\T{TD}} & \le C(1+t)^2 \Big( \rho^{-\frac{1}{4} + 2\varepsilon } +  C_\varepsilon \rho^{\varepsilon-\frac{1}{2\varepsilon}} \Big) ,  \label{lemma:main-norm-of-B1}\\ 
  \no \Psi_{\T{\tiny E}}(t) \no_{\T{TD}} & \le C(1+t)^3 \Big( \rho^{-\frac{1}{2}+8\varepsilon }  + C_\varepsilon \rho^{\frac{1}{4}+6\varepsilon-\frac{1}{2\varepsilon}} \Big), \label{lemma:main-norm-D1}   \\
\label{lemma:main_lemma_2}  \no \Psi_{\T{\tiny F}}(t) \no_{\T{TD}} & \le C(1+t)^3 \Big(\rho^{-\frac{1}{4} + 6\varepsilon} + C_\varepsilon \rho^{\frac{1}{2} -\frac{1}{2\varepsilon}}\Big),
\end{align}
\textit{hold for all $t>0$.} \\

Theorem \ref{thm:main_thm} follows from the above bounds.

\begin{proof}[Proof of Theorem~\ref{thm:main_thm}] Let $ \varepsilon > 0$. We begin with Duhamel's formula,
\begin{align}
U(t)-U^{\text{mf}}(t) = -i\int_0^t d\tau_1 U^{\text{mf}} (t-\tau_1) \Big(V-E + E_{re}(\rho) \Big)  U(\tau_1),
\end{align}
then use that $\langle U^{\text{mf}}(\tau_1) \Psi_0,( V-E ) U^{\text{mf}} (\tau_1)\Psi_0 \rangle = \langle \Psi_0,( V-E ) \Psi_0 \rangle =0,$ apply Duhamel's formula again, and eventually use the identity in \eqref{DEF:PSI_1234}:
\allowdisplaybreaks
\begin{align}
\frac{1}{2} & \bno U(t) \Psi_0 - U^{\text{mf}}(t) \Psi_0 \bno^2 =  - \re\lsp \Big( U(t) - U^{\text{mf}}(t) \Big) \Psi_0,U^{\text{mf}}(t)\Psi_0 \rsp   \\
	= &  - \re \lsp \Psi_0, i E_{re}(\rho) \int_0^{t}d\tau_1 U(-\tau_1) U^{\text{mf}} (\tau_1)\Psi_0 \rsp \nonumber\\
	  & +  \re \lsp i \int_0^{t}d\tau_1  \Big( U(\tau_1)- U^{\text{mf}} (\tau_1) \Big)\Psi_0, \big( V-E \big) U^{\text{mf}} (\tau_1)\Psi_0 \rsp   \nonumber\\
 = & - \re \lsp \Psi_0, i E_{re}(\rho) \int_0^{t}d\tau_1 U(-\tau_1) U^{\text{mf}}  (\tau_1)\Psi_0 \rsp \nonumber\\
 &    + \re\lsp \Psi_0, E_{re} (\rho) \int_0^t d\mu_2(\tau_2)  U(-\tau_2) U^{\text{mf}} (\tau_2-\tau_1)\big( V-E \big) U^{\text{mf}} (\tau_1)\Psi_0 \rsp \nonumber\\
 & + \re\lsp \int_0^{t} d\mu_2(\tau) U^{\text{mf}} (-\tau_2) U(\tau_2) \Psi_0, \underbrace{U^{\text{mf}}(-\tau_2) (V-E )U^{\text{mf}} (\tau_2-\tau_1)(V-E)U^{\text{mf}}(\tau_1)\Psi_0}_{ = \Psi_{\text 1} + \Psi_{\text 2} + \Psi_{\text 3} + \Psi_{\text 4}} \rsp  \nonumber.
\end{align}
We proceed with the term that contains $\Psi_4$.\ Using $ \lsp \Psi_0, \Psi_{\text 4}(\tau_2,\tau_1) \rsp   =0 $ (note that $\Psi_{\text 4}$ always contains a particle outside the Fermi sphere), and applying one more time Duhamel's formula, we find
\begin{align}
  & \re\lsp \int_0^{t} d\mu_2(\tau) U^{\text{mf}}(-\tau_2) \Big( U(\tau_2) - U^{\text{mf}}(\tau_2) \Big)  \Psi_0,  \Psi_{\text 4} (\tau_2,\tau_1) \rsp \\
    & =  \re\lsp \Psi_0, i \int_0^{t} d\mu_3(\tau) U(-\tau_3) \Big( V-E  + E_{re}(\rho)   \Big) U^{\text{mf}}(\tau_3) \Psi_{\text{4}}(\tau_2,\tau_1)  \rsp = \re   \lsp \Psi_0, i \Big(  \Psi_{\T{\tiny E}}(t) +  \Psi_{\T{\tiny F}}(t) \Big)\rsp .\nonumber
\end{align}
By means of the triangle inequality and Cauchy Schwarz, it follows that
\begin{align}
\bno U(t) \Psi_0 - U^{\text{mf}}(t) \Psi_0 \bno \le & 2 \Big( \sqrt{\no \Psi_{\T{\tiny A}}(t) \no } + \sqrt{\no \Psi_{\T{\tiny B}}(t) \no } +\sqrt{\no \Psi_{\T{\tiny C}}(t) \no } \nonumber \\
& \hspace{2cm} + \sqrt{\no \Psi_{\T{\tiny D}}(t) \no } +\sqrt{\no \Psi_{\T{\tiny E}}(t) \no } +\sqrt{\no \Psi_{\T{\tiny F}}(t) \no } \Big) . \label{NORM:DIFFERENCE:BOUND}
\end{align}
Choosing $\varepsilon$ in \eqref{lemma:main-norm-of-E}-\eqref{lemma:main_lemma_2} small enough then proves the Theorem.
\end{proof}
The different wave functions $\Psi_{\T{\tiny X}}(t)$, $\text{\small{X}}\in \{\text{\small{A,B,C,D,E,F}}\}$ can be identified with the following collision histories of the tracer particle:
\begin{itemize}
\item [$\text{\small A:}$] single collisions which cause particle-hole excitations in the Fermi gas.
\item [$\text{\small B:}$] two collisions with the same particle, removing the particle-hole excitation which was caused in the first collision; the constant $E_{re}(\rho)$ cancels the contribution in which the second collision follows immediately after the first one.
\item [$\text{\small C:}$] two collisions with the same particle; the second collision scatters the lifted particle into another momentum above the Fermi edge.
\item [$\text{\small D:}$] two collisions; the second collision scatters a particle from below the Fermi edge into the hole that was created in the first collision.
\item [$\text{\small E:}$] two collisions with two different particles; causing two particle-hole excitations.
\item [$\text{\small F:}$] three collisions; three particle-hole excitations but also all possible recollisions with the already scattered particles; the different possibilities are listed in Section \ref{sec:bound_F}.
\end{itemize}
\noindent The rest of this section is devoted to the proof of Lemma \ref{lem:main_lemma}.

\subsection{Proof of Lemma~\ref{lem:main_lemma}}

\subsubsection{Preliminaries \label{sec:preliminaries}}

For $\varepsilon>0$, we define the two-dimensional index set 
\begin{align}
\I^\varepsilon (N,\rho)  & := \Big\{	(k,l)~:~ 1\le k \le N ,~  N+1 \le l,~  \vert  p_{k} - p_l \vert < \rho^\varepsilon \Big\} \subset \mathbb N^2,
\end{align}
and for $M\in \mathbb N$ the family of sets
\begin{align}
\label{index-sets:A_n}\I_n^{\varepsilon,M}(N,\rho)  & := \Big\{	( k,l )  \in \I^\varepsilon(N,\rho) \ : \ \rho^{-b_n} \le  \vert p_{l} \vert  -  \vert p_{k} \vert < \rho^{-b_{n+1}} \Big\}, ~~~ 0\le n \le M,
\end{align}
where 
$$b_0 = \infty, ~~~~ b_n=\frac{1}{2}-\frac{n-1}{M}\Big(\frac{1}{2}+\varepsilon \Big),~~~1\le n\le M.$$
For notational convenience, we omit from now on the $N$-, $\rho$-, $\varepsilon$- and also the $M$-dependence in the notation: $\I=\I^\varepsilon(N,\rho)$ and $\I_n=\I_n^{\varepsilon,M}(N,\rho)$. The index set $\I$ corresponds to the transitions that have to be controlled in \eqref{sketch_of_estimate}, i.e., collisions with momentum transfer smaller than $\rho^\varepsilon$. The set of pairs of momenta $\{ (p_k,  p_l) \in (2\pi / L )^2 \mathbb Z^4 : (k,l) \in \I_n \}$ are pairwise disjoint, and 
\begin{align}\label{ORTHOGONAL}
\bigcup_{n=0}^{M}  \Big\{ (p_k,  p_l) \in (2\pi / L )^2 \mathbb Z^4 : (k,l) \in \I_n \Big\} = \Big\{ (p_k,  p_l) \in (2\pi / L )^2 \mathbb Z^4 : (k,l) \in \I \Big\}. 
\end{align}
The distance of modulus between the occupied momentum $ p_{k}$ and the new momentum state $p_{l}$ increases in $\I_n$ for increasing $n$. With $2c=1/ (\frac{1}{2} + \varepsilon ) $,
\begin{align}
		\vert p_{l}\vert  - \vert p_{k} \vert  \ge  \rho^{-b_n} = \rho^{-\frac{1}{2}+\frac{n-1}{2cM} } ~~~ \text{for} ~~~ ( k,l ) \in \I_n, ~ 1\le n \le M.
\end{align}
Hence, also the energy shift increases,
\begin{align}
	\label{energy_difference}	E_{l}-E_{k} = \Big(\vert  p_{l}\vert  + \vert p_{k}\vert \Big)  \Big( \vert  p_{l}\vert  - \vert p_{k}\vert \Big)  \ge k_F \rho^{-b_n} = C \rho^{\frac{n-1}{2cM}} \ \ \ \text{for} ~~~ ( k,l)  \in \I_n,
\end{align}
$1\le n \le M$. $\rho^{-b_n}$ corresponds to the factor $\kappa(\rho)$ in (\ref{sketch_of_estimate}).

In the following lemma we state the key estimates that are used in order to prove Lemma \ref{lem:main_lemma}. Recall that $\theta(x)=1$ for $x\geq 0$ and zero otherwise. 
\begin{lemma}\label{lem:v_hat_estimates}
Assume $0<\varepsilon<\frac{1}{2}$ and $M,q\in \mathbb N$. Let $v(x)\in C^{\infty}_0(\mathbb T^2)  \cap C^{\infty}_0(\mathbb R^2)$  and $v^{l,\varepsilon}$, $v^{s,\varepsilon}$ defined as in \eqref{potential_large},\eqref{potential_small}. Then there exist positive constants $C$, $C_q$, $C_{\varepsilon,q}$ such that 
\end{lemma}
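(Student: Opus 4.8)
\textbf{Proof proposal for Lemma~\ref{lem:v_hat_estimates}.}

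The plan is to prove the three (unstated but anticipated) estimates of Lemma~\ref{lem:v_hat_estimates} by converting lattice sums over momenta into Riemann integrals in the thermodynamic limit and then exploiting the Paley--Wiener decay \eqref{Paley_Wiener_potential} of $\mathcal F[v]$, which makes the splitting $v = v^{s,\varepsilon}+v^{l,\varepsilon}$ effective. The three quantities I expect to be controlled are (i) the fluctuation bound $\no (V-E)\Omega_0\no^2_{\T{TD}} \propto \sqrt\rho$ (and, more precisely, $\le C\sqrt\rho + C_\varepsilon\rho^{-q}$ after discarding the large-momentum part $v^{l,\varepsilon}$), which is \eqref{lemma:v_hat_estimates_fluc}; (ii) the recollision-energy bound $C \le E_{re}(\rho) \le C\rho^{2\varepsilon} + C_\varepsilon \rho^{-1/\varepsilon}$, which is \eqref{lemma:v_hat_estimates_Ka}; and (iii) auxiliary counting estimates on the index sets $\I_n$ — bounds of the form $L^{-4}\sum_{(k,l)\in\I_n}|\hat v_{kl}|^2 \le C\rho^{\alpha_n}$ with $\alpha_n$ controlled by the annulus widths $\rho^{-b_n}$ and $\rho^{-b_{n+1}}$ — that feed directly into \eqref{sketch_of_estimate}.

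First I would treat the fluctuation sum. Starting from \eqref{FIRST:ORDER:DEVIATIONS}, one has $\no (V-E)\Omega_0\no^2 = L^{-4}\sum_{k=1}^N\sum_{l=N+1}^\infty |\hat v_{kl}|^2$, where $k$ runs over momenta inside the Fermi disk of radius $k_F=\sqrt{4\pi\rho}$ and $l$ over momenta outside. In the thermodynamic limit the prefactor $L^{-4}$ together with the two sums becomes $(2\pi)^{-4}\int_{|p|\le k_F}\!dp\int_{|q|>k_F}\!dq\,|\mathcal F[v](p-q)|^2$. Changing variables to $w = q-p$ and using that $\mathcal F[v]$ is Schwartz, the $w$-integral is finite and the constraint ``$p$ inside, $p+w$ outside'' restricts $p$ to a boundary layer of the Fermi circle of width $\sim|w|$; since the circle has circumference $2\pi k_F \propto \sqrt\rho$, the integral is $\le C k_F \int |\mathcal F[v](w)|^2 |w|\,dw \propto \sqrt\rho$, giving the upper bound. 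For the lower bound one keeps only $|w|$ in a fixed compact annulus, where $|\mathcal F[v](w)|^2$ is bounded below on a set of positive measure (assuming $v\not\equiv 0$), yielding $\ge c\sqrt\rho$. Discarding the part $|p-q|\ge\rho^\varepsilon$ costs at most $C_\varepsilon\rho^{-q}$ for any $q$ by \eqref{Paley_Wiener_potential}, which gives the $v^{s,\varepsilon}$ version.

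Next, $E_{re}(\rho)$. From \eqref{def:F}, $\widetilde E_{re} = L^{-4}\sum_{k\le N}\sum_{l>N} |\hat v_{kl}|^2 (p_l^2-p_k^2)^{-1}\theta(|p_l|-|p_k|-\rho^{-1/2})$. Passing to the integral as above and writing $p_l^2-p_k^2 = (|p_l|+|p_k|)(|p_l|-|p_k|) \asymp k_F(|p_l|-|p_k|)$ near the Fermi surface, the sum becomes $\asymp k_F^{-1}\int_{|p|\le k_F}dp\int \frac{|\mathcal F[v](w)|^2}{|p+w|-|p|}\,\mathbf 1[|p+w|-|p|\ge\rho^{-1/2}]\,dw$ up to constants; the $p$-integral over the boundary layer again produces a factor $k_F$, so $E_{re} \asymp \int |\mathcal F[v](w)|^2 \big(\int_{\text{layer}} \frac{dp_\parallel}{|p+w|-|p|}\mathbf 1[\cdots\ge\rho^{-1/2}]\big)\,dw$. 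The inner integral over the transverse coordinate behaves like $\int_{\rho^{-1/2}}^{|w|} \frac{ds}{s} \sim \ln(|w|\rho^{1/2})$, so one gets a logarithmically divergent-in-$\rho$ expression $\le C\ln\rho \le C\rho^{2\varepsilon}$ (using the large-$w$ cutoff to produce the $C_\varepsilon\rho^{-1/\varepsilon}$ tail), and $\ge C$ from a fixed annulus of $w$'s. The step function lower cutoff $\rho^{-1/2}$ is exactly what keeps the integral from diverging faster.

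The counting estimates on $\I_n$ are then immediate variants: $\I_n$ restricts $p_k$ to an annular layer of the Fermi circle of width between $\rho^{-b_n}$ and $\rho^{-b_{n+1}}$ and $|p_k-p_l|<\rho^\varepsilon$, so $L^{-4}\sum_{(k,l)\in\I_n}|\hat v_{kl}|^2 \le C\, k_F\, \rho^{-b_{n+1}}\cdot\big(\text{effective }w\text{-measure}\big) \le C\sqrt\rho\,\rho^{-b_{n+1}}$, with the understanding that for $n=0$ (where $b_1=\tfrac12$) one recovers the full $\sqrt\rho$ and for larger $n$ the extra smallness $\rho^{-b_{n+1}}$ is available; the Paley--Wiener decay again controls the $|p_k-p_l|\ge\rho^\varepsilon$ remainder by $C_\varepsilon\rho^{-q}$. \textbf{The main obstacle} is bookkeeping the boundary-layer geometry uniformly in $n$ and $\rho$: one must show that ``$p$ inside the disk, $p+w$ outside, and $\big||p+w|-|p|\big|$ in a prescribed dyadic-type range'' defines a region whose two-dimensional measure is comparable to (circumference)$\times$(width) with constants independent of $\rho$, which requires a careful but elementary lower/upper bound on $|p+w|-|p|$ in terms of the component of $w$ normal to the Fermi circle — the curvature of the circle is what makes the transverse integral converge and is precisely the $d=2$ feature flagged in Remark~\ref{rem:dimension}. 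Everything else is Fubini, the substitution $w=q-p$, and $|\mathcal F[v](w)|\le D_p(1+|w|)^{-p}$.
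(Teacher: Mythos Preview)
Your proposal is correct and follows essentially the same approach as the paper: pass to Riemann integrals in the thermodynamic limit, use Paley--Wiener decay \eqref{Paley_Wiener_potential} to control large momentum transfers, and reduce everything to the boundary-layer geometry (circumference $\propto k_F\propto\sqrt\rho$ times width). The only notable difference is that for the upper bound on $E_{re}(\rho)$ the paper does not do your direct logarithmic integral $\int_{\rho^{-1/2}}^{|w|}\tfrac{ds}{s}$ but instead reuses the $\I_n$ decomposition \eqref{lemma:v_hat_estimates_int_An} together with \eqref{energy_difference} and sums the resulting geometric series over $n$; both routes yield $\lesssim \ln\rho \le C\rho^{2\varepsilon}$, so this is a presentational rather than a substantive difference.
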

\vspace{-0.7cm}
\begin{align}
\label{lemma:v_hat_estimates_fluc}\lim_{\T{TD}} &    \Vol{4} \sum_{k=1}^N \sum_{l=N+1}^{\infty} \Big\vert \mathcal F[v] ( p_k - p_l ) \Big\vert^q  = C_q \rho^{\frac{1}{2}},\\
\label{lem:estimates_v_L} \lim_{\T{TD}} &  \Vol{4} \sum_{k=1}^{N} \sum_{l=N+1}^{\infty} \Big\vert \mathcal F[v^{l,\varepsilon}] (p_k  - p_l ) \Big\vert^q   \le C_{\varepsilon,q} \rho^{-1/\varepsilon}  ,\\
\label{lemma:v_hat_estimates_int_An} \lim_{\T{TD}} &  \Vol{4}\sum_{(k,l)\in \I_n }  \le C \rho^{\frac{1}{2}+\varepsilon-b_{n+1} } \Big( \rho^{-b_{n+1} } - \rho^{-b_n} \Big) \ \ \ \text{for} \ \ \ 0\le n \le M,\\
\label{lemma:v_hat_estimates_Ka} C \leq E_{re}(\rho) = \lim_{\T{TD}}&  \Vol{4} \sum_{k=1}^N\sum_{l=N+1}^{\infty} \frac{\big\vert \mathcal F[v] ( p_k  - p_l ) \big\vert^2 }{E_l-E_k} \theta\Big( \vert p_l \vert - \vert p_k \vert - \rho^{-\frac{1}{2}} \Big)  \le C\rho^{2\varepsilon} + C_\varepsilon \rho^{-1/\varepsilon},\\
\label{lemma:v_hat_estimates_p}\lim_{\T{TD}} &   \Vol{2} \sum_{k=1}^\infty \Big\vert \mathcal F[v]  (   p_k  -    p ) \Big\vert \le C \rho^{2\varepsilon} + C_\varepsilon \rho^{-1/\varepsilon} \ \ \ \ \text{for}\ \ \ p \in (2\pi/L)\mathbb Z^2.
\end{align}
The proof of the lemma is postponed to Section \ref{proof:lem2.3_2.4}. For notational ease, let us abbreviate the number of possible transitions that correspond to the set $\I_n$ by
\begin{align}
\mathcal V_n(N,\rho) := \Vol{4} \sum_{(k,l) \in \I_n} = \Vol{4} \sum_{k=1}^\infty \sum_{l=1}^\infty \chi_{\I_n}\big( ( k,l)\big),
\end{align}
$\chi_A:\mathbb N^2 \to \{0,1\}$ denoting the characteristic function, i.e., $\chi_A((k,l)) = 1$ whenever $(k,l) \in A \subset \mathbb N^2$, otherwise zero.
We readily obtain the following
\begin{corollary}\label{COROLLARY}Given the same assumptions as in Lemma \ref{lem:v_hat_estimates}, and setting $2c= 1/ (\frac{1}{2}+\varepsilon )$, there exists a constant $C>0$ such that
\end{corollary}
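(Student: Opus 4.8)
The statement to prove is Corollary \ref{COROLLARY}, which provides explicit bounds on the quantities $\mathcal V_n(N,\rho)$ (and presumably related combinations thereof) that follow "readily" from Lemma \ref{lem:v_hat_estimates}. Since the corollary statement itself is cut off in the excerpt, I will propose a proof of the type of bound that manifestly follows: upper bounds on $\lim_{\T{TD}}\mathcal V_n(N,\rho)$ and on weighted sums $\sum_n \rho^{\alpha n/(2cM)}\mathcal V_n$ of the sort needed later in the estimates for $\Psi_{\T{A}},\dots,\Psi_{\T{F}}$.

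\medskip

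\textbf{Plan.} The starting point is \eqref{lemma:v_hat_estimates_int_An}, which already gives $\lim_{\T{TD}}\mathcal V_n(N,\rho) = \lim_{\T{TD}}\Vol{4}\sum_{(k,l)\in\I_n} \le C\rho^{\frac12+\varepsilon-b_{n+1}}\big(\rho^{-b_{n+1}}-\rho^{-b_n}\big)$. First I would simplify the right-hand side using the explicit form $b_n = \frac12 - \frac{n-1}{2cM}$ with $2c = 1/(\frac12+\varepsilon)$, so that $\rho^{-b_{n+1}} = \rho^{-\frac12}\rho^{\frac{n}{2cM}}$ and $\rho^{-b_n}=\rho^{-\frac12}\rho^{\frac{n-1}{2cM}}$; hence the bracket is $\rho^{-\frac12}\rho^{\frac{n-1}{2cM}}\big(\rho^{\frac{1}{2cM}}-1\big)$. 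Since $\rho^{\frac{1}{2cM}}-1 \le C_M \rho^{\frac{1}{2cM}}$ for $\rho\ge 1$ (or simply $\le \rho^{\frac{1}{2cM}}$), and $\rho^{\frac12+\varepsilon-b_{n+1}} = \rho^{\varepsilon}\rho^{\frac{n}{2cM}}$, this yields
\begin{align}
\lim_{\T{TD}}\mathcal V_n(N,\rho) \le C\,\rho^{\varepsilon}\,\rho^{\frac{n}{2cM}}\cdot \rho^{-\frac12}\rho^{\frac{n-1}{2cM}}\rho^{\frac{1}{2cM}} = C\,\rho^{-\frac12+\varepsilon}\,\rho^{\frac{2n}{2cM}},\nonumber
\end{align}
and also the cruder bound $\lim_{\T{TD}}\mathcal V_n \le C\rho^{-\frac12+\varepsilon}\rho^{\frac{2M}{2cM}} = C\rho^{-\frac12+\varepsilon}\rho^{2\varepsilon+\cdots}$ obtained by inserting $n\le M$. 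The $n=0$ case is handled separately: there $b_0=\infty$ so $\rho^{-b_0}=0$ and $b_1 = \frac12$, giving $\lim_{\T{TD}}\mathcal V_0 \le C\rho^{\frac12+\varepsilon-\frac12}\rho^{-\frac12} = C\rho^{\varepsilon-\frac12}$, consistent with the general formula at $n=0$.

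\medskip

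\textbf{Key steps in order.} (1) Insert the explicit $b_n$ into \eqref{lemma:v_hat_estimates_int_An} and do the elementary algebra above to get $\lim_{\T{TD}}\mathcal V_n \le C\rho^{-\frac12+\varepsilon+\frac{n}{cM}}$ uniformly for $0\le n\le M$. (2) If the corollary also asserts a bound on $\sum_{n=0}^M \rho^{\beta\frac{n}{cM}}\,\mathcal V_n$ type geometric sums (which is what is actually used when summing contributions over the sets $\I_n$ with the energy-denominator gain $\rho^{-(n-1)/(2cM)}$ from \eqref{energy_difference}), I would bound such a sum by $C\rho^{-\frac12+\varepsilon}\sum_{n=0}^M \rho^{(1+\beta)\frac{n}{cM}}$ and evaluate the geometric series: it is $\le C_M \rho^{(1+\beta)/c}$ if $1+\beta>0$ and $\le C M$ if $1+\beta\le 0$. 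Since $1/c = 1+2\varepsilon$, powers of the form $\rho^{(1+\beta)(1+2\varepsilon)}$ appear, absorbing the $M$-dependent constant into $C_\varepsilon$ once $M$ is chosen as a function of $\varepsilon$. (3) State the resulting clean bounds, making the $\varepsilon$-dependence of constants explicit.

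\medskip

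\textbf{Main obstacle.} There is no real obstacle — this is genuinely a corollary, and the only care needed is bookkeeping: tracking which exponents of $\rho$ survive, ensuring the geometric-series estimate is applied with the correct sign of the exponent (so that the sum is dominated by its largest term, either $n=0$ or $n=M$ depending on $\beta$), and confirming the $n=0$ endpoint (with $b_0=\infty$) is consistent with the general formula. The mild subtlety is that $\rho^{1/(2cM)}-1$ is $O(M^{-1}\ln\rho)$ for large $M$, which is better than the bound $\rho^{1/(2cM)}$ used above; whether one needs the sharper version depends on the precise constants claimed in the corollary, but for the power-counting bounds that feed into Lemma \ref{lem:main_lemma} the crude estimate suffices. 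I would therefore use the crude bound and only refine it if a specific line of the corollary demands it.
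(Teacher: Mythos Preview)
Your approach is correct and is exactly what the paper does: plug the explicit $b_n$ into \eqref{lemma:v_hat_estimates_int_An} and simplify. Two minor points of alignment. First, the actual corollary states the bounds in the form
\[
\lim_{\T{TD}}\mathcal V_0 \le C\rho^{-\frac12+\varepsilon},\qquad
\lim_{\T{TD}}\Big(\rho^{-\frac{n-1}{cM}}\mathcal V_n\Big) \le C\rho^{-\frac12+\varepsilon}\big(\rho^{\frac{1}{cM}}-\rho^{\frac{1}{2cM}}\big)\quad(1\le n\le M),
\]
so it is the weighted quantity $\rho^{-(n-1)/(cM)}\mathcal V_n$ that is bounded, not $\mathcal V_n$ alone or a geometric sum; your algebra gives this immediately once you multiply through by $\rho^{-(n-1)/(cM)}$. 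Second, the paper retains the factor $\rho^{\frac{1}{cM}}-\rho^{\frac{1}{2cM}}$ rather than your crude replacement $\rho^{\frac{1}{cM}}$, but as you correctly note this makes no difference for the downstream power counting (after summing over $n$ with $M=\lfloor\ln\rho\rfloor$ both yield $M\rho^{\frac{1}{2cM}}\lesssim\rho^{\varepsilon}$).
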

\vspace{-0.7cm}
\begin{align}
\lim_{\T{TD}} \mathcal V_0 (N,\rho) & \le C \rho^{-\frac{1}{2}+\varepsilon}, \label{Estimate:A0} \\
\lim_{\T{TD}} \Big( \rho^{-\left(\frac{n-1}{cM}\right) } \mathcal V_n(N,\rho) \Big) & \le C \rho^{-\frac{1}{2}+\varepsilon} \Big( \rho^{\frac{1}{cM}} -  \rho^{\frac{1}{2cM}} \Big)  \ \ \ \text{for} \ 1\le n\le M, \label{Estimate:An}
\end{align}
\begin{remark}
The decomposition of $\I$ into the sets $\I_n$ is optimal in the sense that the r.h.s.\ of all estimates behaves asymptotically almost the same, namely $\propto \rho^{-\frac{1}{2}}$ for small $\varepsilon$ and large $M$.
\end{remark}
\noindent The corollary follows from Lemma \ref{lem:v_hat_estimates} together with the definition of the $b_n$.\ Next, we summarize some straightforward bounds which makes the presentation of the proof of Lemma \ref{lem:main_lemma} more convenient.

\begin{lemma}\label{lem:Estimates_partial_k(s)}
Let $\varepsilon>0$ and $M\in \mathbb N$. Given the same assumptions as in Theorem \ref{thm:main_thm}, the following bounds hold for all $\tau_1,\tau_2\ge 0$ ($\chi_A$ denotes the characteristic function),
\begin{align}
	\label{lem:Estimates_partial_k(s)_1}  \chi_{\I_n} \big( ( k,l )\big)  \ \no \partial_{\tau_1} k_{kl}( \tau_1 ) \varphi_0 \no & \le C \rho^{2\varepsilon},\\
	\label{lem:Estimates_partial_k(s)_2}	
\chi_{\I_n} \big( ( k,l )\big)  \ \big(\no \partial_{\tau_2}k_{kl}(\tau_2)  g_{kl}(\tau_1) \varphi_0 \no + \no \partial_{\tau_2}k_{lk}(\tau_2)  g_{kl}(\tau_1) \varphi_0 \no \big)	& \le C \rho^{2\varepsilon},\\
\label{lem:Estimates_partial_k(s)_3}	
\chi_{\I} \big( ( k,l )\big)   \chi_{\I} \big( ( m,n )\big) \ \no \partial_{\tau_2} k_{lk}(\tau_2) \partial_{\tau_1} k_{mn}(\tau_1) \varphi_0 \no & \le C \rho^{4\varepsilon}, \\
	\label{lem:Estimates_partial_k(s)_4} \chi_{\I} \big( ( k,l )\big)  \chi_{\I} \big( ( m,n )\big)  \ \no \partial_{\tau_1} k_{kl}(\tau_1)k_{mn}(\tau_1) \varphi_0 \no & \le C \rho^{4\varepsilon}.
\end{align}
\end{lemma}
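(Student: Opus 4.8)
The plan is to prove each of the four bounds in Lemma~\ref{lem:Estimates_partial_k(s)} by a direct computation, exploiting only two facts: the explicit form of the operators $k_{kl}(\tau)$ as conjugations of the multiplication operator $e^{-i(p_k-p_l)y}$ by the free tracer evolution, and the fact that the index constraints $\chi_{\I_n}$ or $\chi_{\I}$ force $|p_k-p_l|<\rho^\varepsilon$. First I would record the key identity obtained by differentiating $k_{kl}(\tau)=e^{iH_y^f\tau}e^{-i(p_k-p_l)y}e^{-iH_y^f\tau}$ in $\tau$: since $H_y^f=-\Delta_y=\hat P_y^2$ commutes with $e^{iH_y^f\tau}$, we get
\begin{align}
\partial_\tau k_{kl}(\tau)\varphi_0 = i\, e^{iH_y^f\tau}\big[ H_y^f,\, e^{-i(p_k-p_l)y}\big] e^{-iH_y^f\tau}\varphi_0,
\end{align}
and the commutator $[H_y^f,e^{-iqy}]$ with $q=p_k-p_l$ is the first-order differential operator $[-\Delta_y,e^{-iqy}] = e^{-iqy}\big(-|q|^2 + 2iq\cdot\nabla_y\big)$ (up to an overall sign that I would fix carefully). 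Hence $\|\partial_\tau k_{kl}(\tau)\varphi_0\| \le |q|^2\|\varphi_0\| + 2|q|\,\|\nabla\varphi_0\|$, because $e^{iH_y^f\tau}$, $e^{-iH_y^f\tau}$ and $e^{-iqy}$ are all unitary. On $\I_n\subset\I$ we have $|q|<\rho^\varepsilon$, and $\|\varphi_0\|=1$, $\|\nabla\varphi_0\|\le\|\varphi_0\|_{H^4}\le C$ uniformly in $\rho$ by the hypothesis of Theorem~\ref{thm:main_thm}; this gives $\|\partial_{\tau_1}k_{kl}(\tau_1)\varphi_0\|\le C\rho^{2\varepsilon}$, which is \eqref{lem:Estimates_partial_k(s)_1}.

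For \eqref{lem:Estimates_partial_k(s)_2}, the point is that $g_{kl}(\tau_1)\varphi_0 = e^{-i(E_k-E_l)\tau_1}k_{kl}(\tau_1)\varphi_0$ differs from $\varphi_0$ only by a phase and a unitary, so $\|\nabla^j g_{kl}(\tau_1)\varphi_0\|$ need not equal $\|\nabla^j\varphi_0\|$ — conjugation by $e^{-iqy}$ shifts momentum. Concretely $e^{iqy}(-i\nabla_y)e^{-iqy} = -i\nabla_y - q$, so $\nabla$ acting on $k_{kl}(\tau_1)\varphi_0$ produces, after the unitaries, gradients of $\varphi_0$ shifted by $q$; this costs at most one extra power of $|q|<\rho^\varepsilon$ per derivative. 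Running the same commutator computation as above but now applied to the state $g_{kl}(\tau_1)\varphi_0$ (and using that $k_{lk}$ just swaps the sign of the momentum transfer, so the same bound holds for it), I get $\|\partial_{\tau_2}k_{kl}(\tau_2)g_{kl}(\tau_1)\varphi_0\| \le |q|^2\|g_{kl}(\tau_1)\varphi_0\| + 2|q|\,\|\nabla g_{kl}(\tau_1)\varphi_0\| \le C(\rho^{2\varepsilon} + \rho^{\varepsilon}(C+\rho^{\varepsilon})) \le C\rho^{2\varepsilon}$, again using $\|\varphi_0\|_{H^4}\le C$. The bounds \eqref{lem:Estimates_partial_k(s)_3} and \eqref{lem:Estimates_partial_k(s)_4} are obtained the same way but now with two momentum-transfer labels: in \eqref{lem:Estimates_partial_k(s)_4}, $k_{mn}(\tau_1)\varphi_0$ is a unitary-and-shift image of $\varphi_0$, and applying $\partial_{\tau_1}k_{kl}(\tau_1)$ contributes $|p_k-p_l|^2$ plus $|p_k-p_l|\cdot\|\nabla(k_{mn}(\tau_1)\varphi_0)\|$, with the latter gradient shifted by $p_m-p_n$; each of the two momentum transfers is $<\rho^\varepsilon$, so the total is $\le C(\rho^{2\varepsilon} + \rho^\varepsilon(C+\rho^\varepsilon)) \le C\rho^{4\varepsilon}$. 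For \eqref{lem:Estimates_partial_k(s)_3} one needs $\|\nabla^2\varphi_0\|\le C$, which is again covered by $\|\varphi_0\|_{H^4}\le C$: expanding $\partial_{\tau_2}k_{lk}(\tau_2)\partial_{\tau_1}k_{mn}(\tau_1)\varphi_0$ via two commutators produces terms with up to two gradients on $\varphi_0$, each accompanied by a product of at most two factors of momenta bounded by $\rho^\varepsilon$, hence the $\rho^{4\varepsilon}$.

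The only genuine subtlety — and the step I would be most careful about — is bookkeeping the momentum shifts when gradients are pushed through the conjugating unitaries, i.e. making sure that every time a $\nabla_y$ meets an $e^{\pm iqy}$ one correctly picks up the $\mp q$ shift and that the resulting count of $|q|$-factors never exceeds the advertised power ($\rho^{2\varepsilon}$ for the single-transfer estimates, $\rho^{4\varepsilon}$ for the double-transfer ones). There is no analytic difficulty here: everything is a finite product of unitaries and bounded multiplication operators acting on a fixed $H^4$ function, and the constraint $|p_k-p_l|<\rho^\varepsilon$ does all the work. I would therefore present the proof as: (i) the commutator identity for $\partial_\tau k_{kl}$; (ii) the momentum-shift rule $e^{iqy}\nabla e^{-iqy}=\nabla-iq$ (sign conventions fixed once); (iii) a short induction/enumeration of which gradients land on $\varphi_0$ in each of the four expressions, collecting powers of $\rho^\varepsilon$; (iv) invoke $\|\varphi_0\|_{H^4}\le C$ uniformly in $\rho$ to bound the resulting $H^j$-norms, $j\le 2$. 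The uniformity in $\tau_1,\tau_2\ge0$ is automatic since all $\tau$-dependence sits inside unitaries.
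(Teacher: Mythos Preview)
Your proposal is correct and is essentially the same as the paper's own proof: the paper computes $\partial_{\tau_1}k_{kl}(\tau_1)\varphi_0$ directly, obtains the bound $\vert p_k-p_l\vert^2 + C\vert p_k-p_l\vert$ from the commutator $[H_y^f,e^{-i(p_k-p_l)y}]$ together with $\no\nabla_y\varphi_0\no\le C$, and then invokes $\vert p_k-p_l\vert<\rho^\varepsilon$ on $\I_n$; for \eqref{lem:Estimates_partial_k(s)_2}--\eqref{lem:Estimates_partial_k(s)_4} the paper simply says ``in complete analogy'', which is exactly the momentum-shift bookkeeping you outline.
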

\noindent The proof is obtained by means of Stone's theorem and the assumption $\no \nabla^4 \varphi_0\no \le C$ (for more details, see Section \ref{proof:lem2.3_2.4}).\\

From now on, we always assume that $0<\varepsilon<1/8$ and denote $2c=(\frac{1}{2}+\varepsilon)^{-1}$. Furthermore, we will equally use the letter $\tau$ for indicating the dependence on the variables $\tau=(\tau_1,\tau_2)$ or $\tau=(\tau_1,\tau_2,\tau_3)$. For notational ease, let us also introduce for two real numbers $A$ and $B$: $$A \lesssim B~ \Leftrightarrow ~ \exists~ C>0\ s.t.~ A\le C B,$$
where the constant $C$ may depend on the supremum of $\hat v$ but is independent of any of the relevant parameters ($N$, $L$, $\rho$, $t$, $\varepsilon$ and $M$).

\subsubsection{Derivation of the Bound for $\no \Psi_{\T{A}}(t) \no_{\T{TD}}$ \label{sec:Derivation_A}} 

In order to bound $\no \Psi_{\T{A}}(t) \no_{\T{TD}}$, we first split the interaction potential into the contributions coming from small momentum transfer $\hat{v}^{s,\varepsilon}$ and those from large momentum transfer $\hat{v}^{\ell,\varepsilon}$. The $\hat{v}^{\ell,\varepsilon}$ will then be estimated using \eqref{lem:estimates_v_L}. For the $\hat{v}^{s,\varepsilon}$, we seperate the stationary points of the phase which can be estimated using \eqref{Estimate:A0}. For the nonstationary points, we do one partial integration in the time in order to be able to use \eqref{energy_difference} and \eqref{Estimate:An}.\\

Let $M\ge 1$, and for $0\le n\le M$,
\begin{align}
 \label{Psi_E^n(t)} \Psi_{\T{A}}^{s,n}(t)  & =  \Vol{2}   \sum_{(k_1,l_1) \in \I_n} \hat v^{s,\varepsilon}_{k_1l_1}     \int_0^t d\mu_2(\tau) D(\tau_2) \Big( g_{k_1l_1}(\tau_1)  \varphi_0 \Big) \otimes \Omega_0^{[l_1^*k_1]}, \\
 \Psi_{\T{A}}^\ell(\tau_1) & =  \Vol{2}   \sum_{k_1=1}^N \sum_{l_1=N+1}^{\infty} \hat v^{\ell,\varepsilon}_{k_1l_1}    \Big( g_{k_1l_1}(\tau_1) \varphi_0 \Big) \otimes \Omega_0^{[l_1^*k_1]} .
\end{align}
Using the identity in $\eqref{DECOMPOSITION:V-E}$ and $v_{k_1l_1} = v^{s,\varepsilon}_{k_1l_1} + v^{\ell,\varepsilon}_{k_1l_1}$, this leads to the following decomposition of $\Psi_{\T{A}}(t)$,
\begin{align}
\label{Psi:E(t)}\Psi_{\T{A}}(t) = E_{re}(\rho) \sum_{n=0}^M \Psi_{\T{A}}^{s,n}(t) + E_{re}(\rho) \int_0^{t}d\mu_2(\tau) D(\tau_2) \Psi_{\T{A}}^\ell(\tau_1).
\end{align}
We emphasize that $\Psi_{\T{A}}^{s,n}(t)$ depends on the choice of $M$ through the $M$-dependence of the sets $\I_n$, whereas $\Psi_{\T{A}}(t)$ and $\Psi_{\T{A}}^\ell(t)$ are both $M$-independent. Next, we estimate each term on the r.h.s.\ of \eqref{Psi:E(t)}. In the last one, we find, using $\langle \Omega_0^{[l_1^*k_1]} , \Omega_0^{[n_1^*m_1]}  \rangle = \delta_{l_1 n_1} \delta_{k_1m_1}$ for $ k_1,m_1\le N$, $N+1\le l_1,n_1$, as well as $\no g_{k_1l_1}(\tau_1)\varphi_0\no  = 1$,
\begin{align}
\no \Psi_{\T{A}}^\ell(\tau_1)\no^2 = \Vol{4}  \sum_{k_1=1}^N \sum_{l_1=N+1}^{\infty} \vert \hat v^{\ell,\varepsilon}_{k_1l_1}  \vert^2 \le C_\varepsilon \rho^{-1/(2\varepsilon)},
\end{align}
where the bound has been derived in \eqref{lem:estimates_v_L}. Recalling also \eqref{lemma:v_hat_estimates_Ka} which states that $E_{re} (\rho) \lesssim \rho^{2\varepsilon} + C_\epsilon \rho^{-1/\epsilon}$, and using unitarity of $D(s)$, one obtains for the last term in \eqref{Psi:E(t)}
\begin{align}
\bno E_{re} (\rho) \int_0^{t}d\mu_2(\tau) D(\tau_2) \Psi_{\T{A}}^\ell (\tau_1) \bno_{\T{TD} }\lesssim t^2 \Big( \rho^{2\varepsilon} + C_\epsilon \rho^{-1/\epsilon} \Big) \rho^{-1/(2\varepsilon)}.
\end{align}
In $\Psi_{\T{A}}^{s,0}(t)$, we need to estimate the norm
\begin{align}
\bno \Vol{2}   \sum_{(k_1,l_1) \in \I_0} \hat v^{s,\varepsilon}_{k_1l_1}  \Big( g_{k_1l_1}(\tau_1)   \varphi_0 \Big) \otimes \Omega_0^{[l_1^*k_1]} \bno^2 \lesssim  \Vol{4}   \sum_{( k_1, l_1 ) \in \I_0}  =  \V{0}.
\end{align}
The remaining expression, i.e., the number of transitions corresponding to the set $\I_0$, has been estimated in \eqref{Estimate:A0}. Thus, 
$$\no \Psi_{\T{A}}^{s,0}(t) \no_{\T{TD}} \lesssim  t^2 \rho^{-\frac{1}{4}+ \frac{1}{2}\varepsilon }.$$

\begin{lemma}\label{lem:R_S_A_est}
Let $\Psi_{\T{A}}^{s,n}(t)$ as in (\ref{Psi_E^n(t)}). Then, under the same assumptions as in Theorem~\ref{thm:main_thm}, there exists a constant $C>0$ such that\end{lemma}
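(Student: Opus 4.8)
The plan is to bound $\no \Psi_{\T{A}}^{s,n}(t)\no_{\T{TD}}$ for $1 \le n \le M$ by exploiting the oscillation of the phase $e^{-i(E_{k_1}-E_{l_1})\tau_1}$ hidden inside $g_{k_1 l_1}(\tau_1) = e^{-i(E_{k_1}-E_{l_1})\tau_1} k_{k_1 l_1}(\tau_1)$, which is nontrivially large on $\I_n$ by the energy-gap estimate \eqref{energy_difference}. First I would write $\Psi_{\T{A}}^{s,n}(t)$ with the innermost $\tau_1$-integral isolated and perform a single integration by parts in $\tau_1$. Since $\partial_{\tau_1} g_{k_1 l_1}(\tau_1) = -i(E_{k_1}-E_{l_1}) g_{k_1 l_1}(\tau_1) + e^{-i(E_{k_1}-E_{l_1})\tau_1}\,\partial_{\tau_1} k_{k_1 l_1}(\tau_1)$, one solves for $g_{k_1 l_1}(\tau_1)\varphi_0 = \tfrac{i}{E_{k_1}-E_{l_1}}\big(\partial_{\tau_1} g_{k_1 l_1}(\tau_1)\varphi_0 - e^{-i(E_{k_1}-E_{l_1})\tau_1}\partial_{\tau_1}k_{k_1 l_1}(\tau_1)\varphi_0\big)$ and integrates the first piece by parts against $D(\tau_2)$ (which only depends on $\tau_2$, so $\partial_{\tau_1}$ acts solely on $g$). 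This produces a boundary term at $\tau_1 = \tau_2$ and at $\tau_1 = 0$, plus a bulk term containing $\partial_{\tau_1} k_{k_1 l_1}(\tau_1)\varphi_0$, each carrying the prefactor $(E_{k_1}-E_{l_1})^{-1} \lesssim \rho^{-(n-1)/(2cM)} k_F^{-1}$ by \eqref{energy_difference}.

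Next I would estimate the norm of each resulting piece. Using unitarity of $D(\tau_2)$ and the orthogonality $\langle \Omega_0^{[l_1^* k_1]}, \Omega_0^{[n_1^* m_1]}\rangle = \delta_{l_1 n_1}\delta_{k_1 m_1}$, the square of each piece collapses into a single sum over $(k_1,l_1)\in\I_n$ of $|\hat v^{s,\varepsilon}_{k_1 l_1}|^2$ times $(E_{k_1}-E_{l_1})^{-2}$ times the square of the relevant tracer-particle vector (either $\no\varphi_0\no=1$, or $\no\partial_{\tau_1}k_{k_1 l_1}(\tau_1)\varphi_0\no \lesssim \rho^{2\varepsilon}$ from \eqref{lem:Estimates_partial_k(s)_1}). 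Bounding $|\hat v^{s,\varepsilon}_{k_1 l_1}|\lesssim 1$ and using \eqref{energy_difference} to replace $(E_{k_1}-E_{l_1})^{-2}$ by $C\rho^{-(n-1)/(cM)}$, the sum reduces to $\rho^{-(n-1)/(cM)}\,\mathcal V_n(N,\rho)$, which is exactly the quantity controlled by \eqref{Estimate:An}. The time integrations over $\tau_1,\tau_2$ contribute at most a factor $(1+t)^2$ (after the integration by parts one of the two integrations has been spent, but the boundary and bulk terms still leave at most a double time-integral, or a single one times $t$). Multiplying through, each of the three pieces is bounded by $C(1+t)^2 \rho^{2\varepsilon}$ times $\big(\rho^{-(n-1)/(cM)}\mathcal V_n\big)^{1/2}$, and by \eqref{Estimate:An} the latter is $\lesssim \rho^{-\frac14 + \frac\varepsilon2}\big(\rho^{\frac{1}{2cM}} - \rho^{\frac{1}{4cM}}\big)^{1/2}$ or similar; summing over $1\le n\le M$ and absorbing $M$-dependent constants (which is legitimate since $M$ is a fixed parameter chosen at the end) gives a bound of the claimed form $C(1+t)^2\big(\rho^{-\frac14 + c_1\varepsilon} + C_\varepsilon\rho^{-1/(c_2\varepsilon)}\big)$ for suitable numerical constants.

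The main obstacle I anticipate is bookkeeping the $M$-dependence so that summing the $M$ terms $\no\Psi_{\T{A}}^{s,n}(t)\no$ does not destroy the power of $\rho$: one needs the geometric-type decay in $n$ built into the $b_n$ (through $\rho^{\frac{1}{2cM}}$-type factors) to ensure $\sum_{n=1}^M$ stays bounded by a constant independent of $M$, or at worst grows polynomially in $M$ which is then swallowed since $M$ is fixed before $\rho\to\infty$. A secondary subtlety is making sure the integration by parts is performed on the correct (innermost) variable and that the boundary term at $\tau_1=\tau_2$ — where $D(\tau_2)$ and $g_{k_1 l_1}(\tau_2)$ now share the argument $\tau_2$ — is still handled by the same orthogonality/unitarity argument; here one simply does not need to exploit any further cancellation, the single factor $(E_{k_1}-E_{l_1})^{-1}$ from the one integration by parts suffices. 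Everything else is a routine application of Cauchy–Schwarz, the triangle inequality, unitarity of $D$, and the catalogue of estimates in Lemma~\ref{lem:v_hat_estimates}, Corollary~\ref{COROLLARY}, and Lemma~\ref{lem:Estimates_partial_k(s)}.
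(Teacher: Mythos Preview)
Your approach is essentially identical to the paper's: integrate by parts once in $\tau_1$ to extract the factor $(E_{l_1}-E_{k_1})^{-1}$, then use the orthogonality $\langle \Omega_0^{[l_1^*k_1]},\Omega_0^{[n_1^*m_1]}\rangle=\delta_{l_1n_1}\delta_{k_1m_1}$ together with \eqref{energy_difference}, \eqref{lem:Estimates_partial_k(s)_1} and \eqref{Estimate:An} to bound the resulting diagonal sum. Two small slips to clean up: after swapping the order of integration the boundary terms are at $\tau_1=t$ and $\tau_1=\tau_2$ (not $\tau_1=0$), and in the paper $M$ is \emph{not} fixed independently of $\rho$ but chosen as $M=\lfloor\ln\rho\rfloor$ after the thermodynamic limit, which is why the explicit factor $\sqrt{\rho^{1/(cM)}-\rho^{1/(2cM)}}$ must be retained in the lemma's bound rather than absorbed into a constant.
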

\vspace{-0.9cm}
\begin{align}
 \no \Psi_{\T{A}}^{s,n}(t) \no_{\T{TD}} & \lesssim (1+t)^2  \rho^{-\frac{1}{4}+\frac{5}{2}\varepsilon } \sqrt{\rho^{\frac{1}{cM}} -  \rho^{\frac{1}{2cM}} } , \hspace{0.5cm} 1\le n\le M,
\end{align}
\textit{holds for all $t>0$.}\\

One can now use that for $M=\floor{\ln \rho}$ (the largest integer smaller than the number $\ln \rho$),
\begin{align}
\sum_{n=1}^M \no \Psi_{\T{A}}^{s,n}(t) \no_{\T{TD}}  \lesssim (1+t)^2  \rho^{-\frac{1}{4}+\frac{5}{2} \varepsilon } M \rho^{\frac{1}{2cM}} \lesssim (1+t)^2  \rho^{-\frac{1}{4}+3\varepsilon} \label{limit_M}
\end{align}
because $M \rho^{\frac{1}{2cM}}\le \ln \rho \cdot e^{\frac{1}{2c}} \lesssim \rho^{\frac{1}{2}\varepsilon}$ for any $\varepsilon>0$. This proves the bound for $\no \Psi_{\T{A}}(t) \no_{\T{TD}}$ in (\ref{lemma:main-norm-of-E}).\ That taking first the thermodynamic limit and then $M=\floor{\ln \rho}$ is unproblematic (even when $\rho$ tends to $\infty$) is summarized in the following
\begin{remark}\label{INTERCHANGE:LIMITS}$\Psi_{\T{A}}(t)$  as well as $\Psi_{\T{A}}^\ell (\tau)$ in \eqref{Psi:E(t)} are both $M$-independent. There is thus no need to interchange the order of the two limits. One first takes the thermodynamic limit on both sides and then passes to the limit of large $M$.\ Since only the r.h.s.\ of \eqref{Psi:E(t)} depends on the choice of $M$, this provides the desired estimate.
\end{remark}

\begin{proof}[Proof of Lemma~\ref{lem:R_S_A_est}]
We first decompose each of the $\Psi_{\T{A}}^{s,n}(t)$  via partial integration in $\tau_1$. For that, we recall $g_{k_1l_1}(\tau_1) = e^{ i (E_{l_1} - E_{k_1}) \tau_1 } k_{k_1l_1}(\tau_1)$ and rewrite (for $n\ge 1$)
\begin{align}
\Psi_{\T{A}}^{s,n} (t) &  = \Vol{2}  \sum_{( k_1 ,l_1 ) \in \I_n}  \hat v^{s,\varepsilon}_{k_1l_1} \int_0^{t} d\tau_1  \Big( \frac{ \partial_{\tau_1} e^{i(E_{l_1}-E_{k_1})\tau_1}  }{i(E_{l_1}-E_{k_1} )} \Big) \int_0^{\tau_1} d\tau_2 D(\tau_2) \Big( k_{k_1l_1}(\tau_1) \varphi_0 \Big) \otimes \Omega_0^{[l_1^*,k_1]}.\nonumber
\end{align}
Partial integration in $\tau_1$ leads to
\begin{align}\label{DECOMPOSITION:ASN}
  \Psi_{\T{A}}^{s,n}(t) = \int_0^t d\tau D(\tau_1) \Psi_{\T{A,1}}^{s,n}(t,\tau_1) + \int_0^t d\mu_2(\tau) D(\tau_2) \Psi_{\T {A,2}}^{s,n}(\tau_1),
\end{align}
where
\begin{align}
\Psi_{\T{A,1}}^{s,n}(t,\tau_1)
& = \Vol{2}  \sum_{( k_1 ,l_1 ) \in \I_n}  \hat v^{s,\varepsilon}_{k_1l_1} \Big( \frac{ g_{k_1l_1}(t)  - g_{k_1l_1} (\tau_1) }{i(E_{l_1}-E_{k_1})}  \varphi_0 \Big) \otimes \Omega_0^{[l_1^*,k_1]},\\
\Psi_{\T{A,2}}^{s,n}(\tau_1) & =  \Vol{2}  \sum_{( k_1 ,l_1 )\in \I_n}  \hat v^{s,\varepsilon}_{k_1l_1} \Big(   \frac{ e^{i(E_{l_1}-E_{k_1})\tau_1} \partial_{\tau_1} k _{k_1l_1}(\tau_1)}{i(E_{l_1}-E_{k_1})}   \varphi_0 \Big) \otimes \Omega_0^{[l_1^*,k_1]}.
\end{align}
Next, we estimate the the norms of $\Psi_{\T{A,1}}^{s,n}(\tau)$ and $\Psi_{\T{A,2}}^{s,n}(\tau)$:
\begin{align}\label{EXAMPLE:THREE:D}
\no \Psi_{\T{A,1}}^{s,n}(t,\tau_1) \no^2 & \lesssim \Vol{4} \sum_{( k_1, l_1 ) \in \I_n} \frac{1}{(E_{l_1}- E_{k_1})^2} \lesssim \Big( \rho^{-\left( \frac{n-1}{cM} \right) } \V{n} \Big) ,
\end{align}
where we have used \eqref{energy_difference},
\begin{align}
\no \Psi_{\T{A,2}}^{s,n}(\tau_1) \no^2  \lesssim \Vol{4}
  \sum_{( k_1 , l_1 ) \in \I_n}  \frac{\no \partial_{\tau_1} k _{k_1l_1}(\tau_1) \varphi_0 \no^2}{ ( E_{l_1} - E_{k_1})^2}  \lesssim \rho^{4\varepsilon}  \Big( \rho^{-\left( \frac{n-1}{cM} \right) } \V{n} \Big),
\end{align}
where we have made in addition use of \eqref{lem:Estimates_partial_k(s)_1}.\
The remaining expressions have been estimated in \eqref{Estimate:An}.
\end{proof}

\subsubsection{\label{bound:Lemma1.1.b}Derivation of the Bound for $\no\Psi_{\T{B}}(t) \no_{\T{TD}}$}

In the estimate for $\no\Psi_{\T{B}}(t) \no_{\T{TD}}$, we first identify the contribution in $i \int_0^t d\mu_2(\tau)  D (\tau_2) \Psi_{1}(\tau)$ that cancels with the energy correction $E_{re}(\rho) \int_0^{t}d\tau_1  D(\tau_1)\Psi_0$. The remaining terms will then be estimated using similar techniques as in Section \ref{sec:Derivation_A}.\\

For $0\le n\le M$ ($M\ge 1$), let
\begin{align}
\Psi_{\T{B}}^{s,0}(t) & =  \Vol{4}   \sum_{(k_1, l_1) \in \I_0} \vert \hat v^{s,\varepsilon}_{k_1l_1}\vert^2  i \int_0^t d\mu_2(\tau) D(\tau_2)g_{l_1k_1}(\tau_2) g_{k_1l_1}(\tau_1) \Psi_0,\\
\Psi_{\T{B,1}}^{s,n}(t) & = \Vol{4}     \sum_{(k_1, l_1) \in \I_n}  \vert \hat v^{s,\varepsilon}_{k_1l_1} \vert^2  \int_0^td\tau_2  D(\tau_2) \frac{ g_{l_1k_1}(\tau_2) g_{k_1l_1}(t) }{(E_{l_1}-E_{k_1})} \Psi_0, \label{BS:1}\\
\Psi_{\T{B,2}}^{s,n}(t) &  =  \Vol{4}     \sum_{(k_1, l_1)\in \I_n}   \vert \hat v^{s,\varepsilon}_{k_1l_1} \vert^2  \int_0^td \tau_2  D(\tau_2) \frac{ k_{l_1k_1}(\tau_2) k_{k_1l_1}(\tau_2)}{(E_{l_1}-E_{k_1})} \Psi_0, \\
\Psi_{\T{B,3}}^{s,n}(t) & =   \Vol{4}     \sum_{(k_1, l_1)\in \I_n}   \vert \hat v^{s,\varepsilon}_{k_1l_1} \vert^2  \int_0^t d\mu_2(\tau)  D(\tau_2) \frac{ g_{l_1k_1}(\tau_2) e^{i(E_{l_1}-E_{k_1}) \tau_1} \partial_{\tau_1} k_{k_1l_1}(\tau_1)}{(E_{l_1}-E_{k_1})} \Psi_0, \label{BS:3}\\
\Psi_{\T{B}}^\ell(\tau) & =  \Vol{4}    \sum_{l_1=1}^N \sum_{k_1=N+1}^{\infty}\vert \hat v^{\ell,\varepsilon}_{k_1l_1} \vert^2   g_{l_1k_1}(\tau_2) g_{k_1l_1}(\tau_1) \Psi_0.
\end{align}
Via partial integration, this leads to the identity
\begin{align}
 \Psi_{\T{B}}(t) & = E_{re}(\rho)  \int_0^td\tau_1     D(\tau_1)\Psi_0 - \sum_{n=1}^M  \Psi_{\T{B,2}}^{s,n}(t)  \nonumber\\
 & \ \ \ \ + \Psi_{\T{B}}^{s,0}(t) + \sum_{n=1}^M \Big[ \Psi_{\T{B,1}}^{s,n}(t) -  \Psi_{\T{B,3}}^{s,n}(t) \Big] + i\int_0^td\mu_2(\tau)  D(\tau_2) \Psi_{\T{B}}^\ell (\tau).
\end{align}
The first step in estimating the r.h.s.\ is to note that for any $M\ge 1$, the thermodynamic limit of the upper line is bounded in terms of
\begin{align}
\lno E_{re}(\rho)   \int_0^td\tau_1    D(\tau_1)\Psi_0 - \sum_{n=1}^M \Psi_{\T{B,2}}^{s,n}(t) \rno_{\T{TD}} \le C_\varepsilon \rho^{-1/\varepsilon}. \label{RECOLLISIONS}
\end{align}
In $\Psi_{\T{B,2}}^{s,n}(t)$, the fluctuation does not propagate in time (note that $k_{l_1k_1}(\tau_2) k_{k_1l_1}(\tau_2)=1$), and the factor $1/(E_{l_1}-E_{k_1})$ does not make this term small enough. This collision history corresponds to immediate recollisions with the same particle removing the particle-hole excitation which was created in the first scattering. It needs to be canceled directly by the next-to-leading order energy correction in $H^{\text{mf}}$. To see that the above estimate is true, we rewrite
\begin{align} 
 \sum_{n=1}^M \Psi_{\T{B,2}}^{s,n}(t) &  = \sum_{n=1}^M    \Vol{4}  \sum_{(k_1, l_1 ) \in \I_n} \frac{ \vert \hat v^{s,\varepsilon}_{k_1l_1} \vert^2 }{(E_{l_1}-E_{k_1})} \int_0^t d\tau_2  D(\tau_2)   \Psi_0 \\
&  =   \Vol{4}  \sum_{k_1=1}^N  \sum_{l_1=N+1}^\infty  \frac{ \vert \hat v^{ }_{k_1l_1} \vert^2 }{ (E_{l_1}-E_{k_1})}   \theta \Big(\rho^{\varepsilon} -
\vert p_{k_1} - p_{l_1} \vert \Big) \theta\Big( \vert p_{l_1} \vert  - \vert  p_{k_1} \vert - \rho^{-\frac{1}{2} } \Big) \int_0^t	d\tau_2  D(\tau_2) \Psi_0,\nonumber
\end{align}
and thus, recalling definition \eqref{def:F}, we need to estimate
\begin{align}
&\Big\vert E_{re}(\rho) -  \lim_{\T{TD}} \Vol{4}  \sum_{k_1=1}^N  \sum_{l_1=N+1}^\infty  \frac{ \vert \hat v^{ }_{k_1l_1} \vert^2 }{ (E_{l_1}-E_{k_1})}   \theta \Big(\rho^{\varepsilon} -
\vert p_{k_1} - p_{l_1} \vert \Big) \theta\Big( \vert p_{l_1} \vert  - \vert  p_{k_1} \vert - \rho^{-\frac{1}{2} } \Big) \Big\vert \nonumber\\
& =  \lim_{\T{TD}} \Vol{4}  \sum_{k_1=1}^N  \sum_{l_1=N+1}^\infty  \frac{ \vert \hat v^{ }_{k_1l_1} \vert^2 }{ (E_{l_1}-E_{k_1})}   \theta \Big(-\rho^{\varepsilon} +
\vert p_{k_1} - p_{l_1} \vert \Big) \theta\Big( \vert p_{l_1} \vert  - \vert  p_{k_1} \vert - \rho^{-\frac{1}{2} } \Big)  \nonumber \\
& = \lim_{\T{TD}}  \Vol{4}  \sum_{k_1=1}^N  \sum_{l_1=N+1}^\infty  \frac{ \vert \hat v^{\ell,\epsilon }_{k_1l_1} \vert^2 }{ (E_{l_1}-E_{k_1})}    \theta\Big( \vert p_{l_1} \vert  - \vert  p_{k_1} \vert - \rho^{-\frac{1}{2} } \Big)   \nonumber\\
& \lesssim \lim_{\T{TD}} \Vol{4}  \sum_{k_1=1}^N  \sum_{l_1=N+1}^\infty \vert \hat v^{\ell,\epsilon }_{k_1l_1} \vert^2. \label{RECOLLISIONS:2}
\end{align}
By \eqref{lem:estimates_v_L}, one obtains \eqref{RECOLLISIONS}. Note that in the last step, we have used $E_{l_1}-E_{k_1} = (\vert p_{l_1} \vert - \vert p_{k_1} \vert) (\vert p_{l_1} \vert + \vert p_{k_1} \vert) \ge \rho^{-\frac{1}{2}} k_F = C$ (since $\vert p_{l_1} \vert \ge k_F$).
\\
\\
It follows with \eqref{Estimate:A0} that
\begin{align}
\no \Psi_{\T{B}}^{s,0}(t) \no_{\T{TD}} \lesssim  t^2 \lim_{\T{TD}} \V{0} \lesssim t^2 \rho^{-\frac{1}{2}+\varepsilon},
\end{align}
as well as with \eqref{lem:estimates_v_L},
\begin{align}
\no \Psi_{\T{B}}^\ell (\tau) \no_{\T{TD}} \lesssim t^2 \lim_{\T{TD}} \Vol{4}  \sum_{k_1=1}^N \sum_{l_1=N+1}^{\infty}\vert \hat v^{\ell,\varepsilon}_{k_1l_1} \vert^2 \le C_\varepsilon t^2 \rho^{-1/ \varepsilon }.
\end{align}
The bounds for the remaining wave functions are summarized in
\begin{lemma}\label{PSIB:S:LEMMA}
Let  $\Psi_{\T{B,1}}^{s,n}(t)$ and $\Psi_{\T{B,3}}^{s,n}(t)$ as in \eqref{BS:1} and \eqref{BS:3}. Then, under the same assumptions as in Theorem~\ref{thm:main_thm}, there exists a positive constant $C$ such that\end{lemma}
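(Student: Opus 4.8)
The plan is to handle $\Psi_{\T{B,1}}^{s,n}(t)$ and $\Psi_{\T{B,3}}^{s,n}(t)$ by one further integration by parts in a time variable, whose only purpose is to produce a \emph{second} factor $(E_{l_1}-E_{k_1})^{-1}$ in addition to the one already present in \eqref{BS:1}, \eqref{BS:3}; once this is done, the energy gap estimate \eqref{energy_difference} and the counting bound \eqref{Estimate:An} close the argument exactly as they did for $\Psi_{\T A}^{s,n}$. The structural point to keep in mind is that in both of these wave functions the gas factor is $\Omega_0$ throughout — the particle--hole pair created in the first collision has been removed by the second — so there is no orthogonality among the summands and every sum over $(k_1,l_1)$ has to be estimated by the triangle inequality; all of the smallness must therefore come from the two energy denominators, i.e.\ from the Fermi pressure. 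As in Section~\ref{sec:Derivation_A} every estimate is first carried out for finite $N,L$ and only then passed to $\lim_{\T{TD}}$, where \eqref{Estimate:An}, \eqref{lemma:v_hat_estimates_fluc} and \eqref{lemma:v_hat_estimates_Ka} are available.

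For $\Psi_{\T{B,1}}^{s,n}(t)$ I would write $g_{l_1k_1}(\tau_2)g_{k_1l_1}(t)=e^{i(E_{l_1}-E_{k_1})(t-\tau_2)}k_{l_1k_1}(\tau_2)k_{k_1l_1}(t)$ and integrate by parts in $\tau_2$ in the oscillating factor $e^{-i(E_{l_1}-E_{k_1})\tau_2}$. Using $k_{l_1k_1}(t)k_{k_1l_1}(t)=\mathrm{Id}$ and $D(0)=\mathrm{Id}$, the two boundary terms are $D(t)\Psi_0$ and $(k_{k_1l_1}(t)\varphi_0)\otimes\Omega_0$, each of norm $1$, so after including the prefactor they contribute $|\hat v^{s,\varepsilon}_{k_1l_1}|^2(E_{l_1}-E_{k_1})^{-2}$. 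In the bulk term $\partial_{\tau_2}$ falls either on $k_{l_1k_1}(\tau_2)$, where the commutator representation $\partial_{\tau_2}k_{l_1k_1}(\tau_2)=i\,e^{iH_y^f\tau_2}[H_y^f,e^{-i(p_{l_1}-p_{k_1})y}]e^{-iH_y^f\tau_2}$ together with $|p_{l_1}-p_{k_1}|<\rho^\varepsilon$ (since $(k_1,l_1)\in\I$) and $\no\nabla^j\varphi_0\no\le C$ gives $\no\partial_{\tau_2}k_{l_1k_1}(\tau_2)k_{k_1l_1}(t)\varphi_0\no\lesssim\rho^{2\varepsilon}$ as in Lemma~\ref{lem:Estimates_partial_k(s)}; or it falls on $D(\tau_2)$, producing $iU(-\tau_2)(V-E+E_{re})U^{\text{mf}}(\tau_2)$. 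For this last, and only nontrivial, term I would use that $U^{\text{mf}}(\tau_2)$ maps $\psi\otimes\Omega_0$ to $(e^{-iH_y^f\tau_2}\psi)\otimes\Omega_0$ up to a $(k_1,l_1)$-independent phase, that $\no(V-E)(\chi\otimes\Omega_0)\no=\no\chi\no\,\no(V-E)\Omega_0\no$ by the orthogonality of the $\Omega_0^{[l^*k]}$ in \eqref{DECOMPOSITION:V-E}, and that $\no(V-E)\Omega_0\no^2\to C\rho^{1/2}$ by \eqref{lemma:v_hat_estimates_fluc} with $q=2$ while $E_{re}\lesssim\rho^{2\varepsilon}$ by \eqref{lemma:v_hat_estimates_Ka}; this bounds its $\tau_2$-integral by $\lesssim t\rho^{1/4}(E_{l_1}-E_{k_1})^{-1}$ in the thermodynamic limit. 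Collecting, $\no\Psi_{\T{B,1}}^{s,n}(t)\no_{\T{TD}}$ is dominated by $C(1+t)\rho^{1/4}\Vol{4}\sum_{(k_1,l_1)\in\I_n}|\hat v^{s,\varepsilon}_{k_1l_1}|^2(E_{l_1}-E_{k_1})^{-2}$, which by \eqref{energy_difference}, $|\hat v^{s,\varepsilon}_{k_1l_1}|\le C$ and \eqref{Estimate:An} is $\lesssim(1+t)\rho^{1/4}\rho^{-1/2+\varepsilon}(\rho^{1/cM}-\rho^{1/2cM})=(1+t)\rho^{-1/4+\varepsilon}(\rho^{1/cM}-\rho^{1/2cM})$.

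For $\Psi_{\T{B,3}}^{s,n}(t)$ I would instead integrate by parts in $\tau_1$ in the explicit factor $e^{i(E_{l_1}-E_{k_1})\tau_1}$. Since $g_{l_1k_1}(\tau_2)$ is unitary on $\mathcal H_y$ and the integrand depends on $\tau_1$ only through $\partial_{\tau_1}k_{k_1l_1}(\tau_1)$ and through the upper endpoint of the $\tau_2$-integral, the boundary term at $\tau_1=t$ has norm $\lesssim t\rho^{2\varepsilon}(E_{l_1}-E_{k_1})^{-1}$ by \eqref{lem:Estimates_partial_k(s)_1} (the one at $\tau_1=0$ vanishes), and the bulk term involves either $\partial_{\tau_1}k_{k_1l_1}(\tau_1)\varphi_0$ ($\lesssim\rho^{2\varepsilon}$ again) or $\partial_{\tau_1}^2k_{k_1l_1}(\tau_1)\varphi_0$, which the iterated commutator $[H_y^f,[H_y^f,e^{-i(p_{l_1}-p_{k_1})y}]]$ bounds by $\lesssim\rho^{4\varepsilon}$ for $(k_1,l_1)\in\I$, exactly as in the proof of Lemma~\ref{lem:Estimates_partial_k(s)} but now invoking $\no\nabla^4\varphi_0\no\le C$. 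Altogether this gives $\no\Psi_{\T{B,3}}^{s,n}(t)\no_{\T{TD}}\lesssim(1+t)^2\rho^{4\varepsilon}\Vol{4}\sum_{(k_1,l_1)\in\I_n}|\hat v^{s,\varepsilon}_{k_1l_1}|^2(E_{l_1}-E_{k_1})^{-2}\lesssim(1+t)^2\rho^{-1/2+5\varepsilon}(\rho^{1/cM}-\rho^{1/2cM})$. Adding the two estimates yields the bound asserted in the lemma; as in \eqref{limit_M} one then sums over $1\le n\le M$ and sets $M=\floor{\ln\rho}$, using $M(\rho^{1/cM}-\rho^{1/2cM})\le\ln\rho\cdot e^{1/c}\lesssim\rho^\varepsilon$, to obtain the contribution $\rho^{-1/4+O(\varepsilon)}$ to $\no\Psi_{\T B}(t)\no_{\T{TD}}$ in \eqref{lemma:main-norm-of-A1}.

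The delicate term is $\Psi_{\T{B,1}}^{s,n}$: unlike the $\Psi_{\T A}$-, $\Psi_{\T C}$- and $\Psi_{\T D}$-contributions it carries only a \emph{single} time integral wrapped around the full propagator $D(\tau_2)$, so the integration by parts cannot be performed ``for free'' — differentiating $D(\tau_2)$ reintroduces a factor $V-E$ and hence the large fluctuation norm $\no(V-E)\Omega_0\no\sim\rho^{1/4}$. The estimate only closes because this $\rho^{1/4}$ is outweighed by the product of the two energy denominators, which — after the fine splitting into the sets $\I_n$ and the use of \eqref{Estimate:An} — yields the Fermi-pressure factor $\rho^{-1/2+\varepsilon}$; making this cancellation quantitative, and checking that the commutator bounds for the $\partial_\tau k$'s are uniform over the whole index set $\I$, is where the work lies.
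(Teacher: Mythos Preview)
Your argument is correct and for $\Psi_{\T{B,1}}^{s,n}$ coincides with the paper's: one integration by parts in $\tau_2$, boundary terms of norm one, and the bulk term with $\partial_{\tau_2}D(\tau_2)$ controlled via $\no(V-E)\Omega_0\no\sim\rho^{1/4}$ and $E_{re}\lesssim\rho^{2\varepsilon}$ (this is exactly the content of the paper's Lemma~\ref{D:DERIVATIVE}), followed by \eqref{energy_difference} and \eqref{Estimate:An}.

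For $\Psi_{\T{B,3}}^{s,n}$ you take a genuinely different route. The paper integrates by parts in $\tau_2$, using the oscillation hidden in $g_{l_1k_1}(\tau_2)$; this forces another application of Lemma~\ref{D:DERIVATIVE} when the derivative hits $D(\tau_2)$, and so the paper's bound for $\Psi_{\T{B,3}}^{s,n}$ also carries the fluctuation factor $\rho^{1/4}$. You instead integrate in $\tau_1$, which never touches $D(\tau_2)$; the price is the second time-derivative $\partial_{\tau_1}^2 k_{k_1l_1}$, but the iterated commutator $[H_y^f,[H_y^f,e^{-i(p_{k_1}-p_{l_1})y}]]$ indeed bounds this by $C\rho^{4\varepsilon}$ on $\I$ (only $\no\nabla^2\varphi_0\no\le C$ is needed, which follows from the hypothesis on $\nabla^4\varphi_0$). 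Your variant thus yields the sharper intermediate bound $\no\Psi_{\T{B,3}}^{s,n}(t)\no_{\T{TD}}\lesssim(1+t)^2\rho^{-1/2+5\varepsilon}(\rho^{1/cM}-\rho^{1/2cM})$, without the $\rho^{1/4}$ loss. This gain is, however, immaterial for the lemma as stated, since $\Psi_{\T{B,1}}^{s,n}$ unavoidably carries the $\rho^{1/4}$ from differentiating $D$ and dominates the sum. Either route closes; yours is slightly more elementary for the $\Psi_{\T{B,3}}$ piece.
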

\vspace{-0.7cm}
\begin{align}
\no \Psi_{\T{B,1}}^{s,n}(t)  \no_{\T{TD}} + \no \Psi_{\T{B,3}}^{s,n}(t)  \no_{\T{TD}} \lesssim & (1+t)^2 \rho^{-\frac{1}{2} + \varepsilon} \Big( \rho^{\frac{1}{4}} + C_\varepsilon \rho^{-1/\varepsilon} \Big) \Big( \rho^{\frac{1}{cM}} - \rho^{\frac{1}{2cM}} \Big) ,\hspace{0.5cm} 1\le n \le M,
\end{align}
\textit{holds for all $t\ge 0$.}\\

Taking the sum of all terms, passing to the thermodynamic limit and then choosing again $M=\floor{\ln \rho} \lesssim \rho^{\varepsilon}$, cf.\ \eqref{limit_M}, one finds
\begin{align}
 \sum_{n=1}^{M} \Big( \no \Psi_{\T{B,1}}^{s,n}(t)   \no_{\T{TD}} + \no  \Psi_{\T{B,3}}^{s,n}(t)  \no_{\T{TD}} \Big) & \lesssim (1+t)^2 \Big(  \rho^{-\frac{1}{4} + 2\varepsilon}  + C_\varepsilon \rho^{ -\frac{1}{2} + 2\varepsilon - \frac{1}{\varepsilon} } \Big).
\end{align}
This proves the bound for $\no \Psi_{\T{B}}(t)\no_{\T{TD}}$ in \eqref{lemma:main-norm-of-A1} (recall Remark \ref{INTERCHANGE:LIMITS} and the fact that $\Psi_{\T{B}}(t)$ and $\Psi_{\T{B}}^\ell(\tau)$ do both not depend on $M$).\\

In order to prove Lemma \ref{PSIB:S:LEMMA}, we need the following estimate.
\begin{lemma} \label{D:DERIVATIVE} Let $\psi \in L^2(\mathbb T^2)$ and $\Omega_0$ as in \eqref{FERMI:SEA}.\ Then, there exists a positive constant $C$ such that
\end{lemma}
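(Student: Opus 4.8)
The plan is to differentiate $D(\tau)=U(-\tau)U^{\text{mf}}(\tau)$ in $\tau$ and then reduce the claim to the fluctuation estimate \eqref{lemma:v_hat_estimates_fluc} together with the bound \eqref{lemma:v_hat_estimates_Ka} on $E_{re}(\rho)$. For finite $N,L$ the potential $V$ is bounded, hence so is $H-H^{\text{mf}}=V-E+E_{re}(\rho)$ (recall that with $m_x=m_y=1/2$ the kinetic parts of $H$ and $H^{\text{mf}}$ cancel and $E=\rho\mathcal F[v](0)$). Consequently $D(\tau)$ is strongly differentiable on all of $\mathcal H_y\otimes\mathcal H_N$ and
\[ \partial_\tau D(\tau)\,\Psi \;=\; i\,U(-\tau)\,\big(H-H^{\text{mf}}\big)\,U^{\text{mf}}(\tau)\,\Psi \;=\; i\,U(-\tau)\,\big(V-E+E_{re}(\rho)\big)\,U^{\text{mf}}(\tau)\,\Psi. \]
Since $U(-\tau)$ is unitary, it therefore suffices to bound $\no (V-E+E_{re}(\rho))\,U^{\text{mf}}(\tau)(\psi\otimes\Omega_0)\no$.

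The key observation is that $U^{\text{mf}}(\tau)$ acts on the gas factor $\Omega_0$ merely as a phase. Indeed $H^{\text{mf}}=H_y^f+H_N^f+\rho\mathcal F[v](0)-E_{re}(\rho)$ is a sum of mutually commuting operators, and $\Omega_0$ is an eigenfunction of $H_N^f=-\sum_i\Delta_{x_i}$; hence $U^{\text{mf}}(\tau)(\psi\otimes\Omega_0)=e^{-i\lambda\tau}\,(e^{-iH_y^f\tau}\psi)\otimes\Omega_0$ for a real constant $\lambda$. The overall phase drops out of the norm, so with $\phi:=e^{-iH_y^f\tau}\psi$ (for which $\no\phi\no=\no\psi\no$) the estimate reduces to $\no (V-E+E_{re}(\rho))(\phi\otimes\Omega_0)\no\le\no (V-E)(\phi\otimes\Omega_0)\no+E_{re}(\rho)\no\psi\no$.

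For the first summand I would use the identity \eqref{DECOMPOSITION:V-E} with $\phi$ in place of $\varphi_0$, i.e.\ $(V-E)(\phi\otimes\Omega_0)=\Vol{2}\sum_{k=1}^{N}\sum_{l=N+1}^{\infty}\mathcal F[v](p_l-p_k)\big(e^{i(p_l-p_k)y}\phi\big)\otimes\Omega_0^{[l^*k]}$. Because the states $\Omega_0^{[l^*k]}$ with $k\le N<l$ are pairwise orthonormal and multiplication by $e^{i(p_l-p_k)y}$ is unitary on $\mathcal H_y$, squaring the norm gives $\no (V-E)(\phi\otimes\Omega_0)\no^2=\big(\Vol{4}\sum_{k=1}^{N}\sum_{l=N+1}^{\infty}|\hat v_{kl}|^2\big)\no\phi\no^2$. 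Now \eqref{lemma:v_hat_estimates_fluc} with $q=2$ gives $\lim_{\T{TD}}\Vol{4}\sum_{k=1}^{N}\sum_{l=N+1}^{\infty}|\hat v_{kl}|^2=C_2\sqrt\rho$, so $\lim_{\T{TD}}\no (V-E)(\phi\otimes\Omega_0)\no=\sqrt{C_2}\,\rho^{1/4}\no\psi\no$. Together with $E_{re}(\rho)\le C\rho^{1/4}$ — which follows from \eqref{lemma:v_hat_estimates_Ka} upon choosing $\varepsilon=1/8$ there — this yields $\lim_{\T{TD}}\no \partial_\tau D(\tau)(\psi\otimes\Omega_0)\no\le C\rho^{1/4}\no\psi\no$, the asserted bound (any explicit $E_{re}(\rho)\no\psi\no$ contribution appearing in the statement is produced by the same argument without the final simplification).

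I do not expect a genuine obstacle: the argument is essentially bookkeeping. The one point that really matters is that $\Omega_0$ is an eigenstate of the free gas Hamiltonian, which collapses the action of $U^{\text{mf}}(\tau)$ on the gas to a pure phase and thereby makes the free-gas fluctuation estimate \eqref{lemma:v_hat_estimates_fluc} directly applicable; without it the potential $(V-E)$ would act on a genuinely $\tau$-dependent gas state. A secondary, purely technical point is the interchange of limits: since $\no\psi\no$ is fixed, all $N,L$-dependence resides in the explicit momentum sum $\Vol{4}\sum_{k,l}|\hat v_{kl}|^2$, whose thermodynamic limit is precisely what \eqref{lemma:v_hat_estimates_fluc} supplies; if one wants the bound uniformly in $N,L$ instead, the same estimate (up to a constant) holds for all large $N,L$ by the Riemann-sum comparison underlying that lemma.
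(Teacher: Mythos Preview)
Your argument is correct and follows essentially the same route as the paper: differentiate $D(\tau)$, use unitarity of $U(-\tau)$, observe that $U^{\text{mf}}(\tau)$ acts on $\Omega_0$ only by a phase, and then compute $\no(V-E)(\phi\otimes\Omega_0)\no^2$ via the decomposition \eqref{DECOMPOSITION:V-E} and the orthonormality of the $\Omega_0^{[l^*k]}$. The paper stops at the pre-limit bound $\no\psi\no^2\big(E_{re}(\rho)^2+\Vol{4}\sum_{k=1}^N\sum_{l=N+1}^\infty|\hat v_{kl}|^2\big)$, whereas you additionally pass to the thermodynamic limit and insert \eqref{lemma:v_hat_estimates_fluc}, \eqref{lemma:v_hat_estimates_Ka}; that extra step is harmless but not part of the lemma as stated.
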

\vspace{-0.6cm}
\begin{align}\bno \Big( \partial_\tau  D(\tau) \Big) \Big(\psi \otimes \Omega_0 \Big) \bno^2 \lesssim \no \psi\no^2 \Big( E_{re}(\rho)^2 +	\Vol{4}  \sum_{k=1}^N \sum_{l=N+1}^\infty \vert \hat v_{kl} \vert^2 \Big)
\end{align}
\textit{holds for all $\tau>0$.}
\begin{proof} Using $i\partial_\tau D(\tau) = U(-\tau) (H^{\text{mf}} - H) U^{\text{mf}}(\tau)$, $\no \Omega_0\no = 1$ and some basic algebra similar as in \eqref{DECOMPOSITION:V-E} and \eqref{FIRST:ORDER:DEVIATIONS}, it follows easily that
\begin{align}
\bno \Big( \partial_\tau  D(\tau) \Big) \Big(\psi \otimes \Omega_0 \Big) \bno^2 & \lesssim  \no \psi \no^2 ~ E_{re}(\rho)^2  + \bno (V- E) \Big( \psi \otimes \Omega_0 \Big)\bno^2  \nonumber\\
& \lesssim \no \psi \no^2 ~ \Big( E_{re}(\rho)^2 + \Vol{4} \sum_{k=1}^N\sum_{l=N+1}^\infty \vert \hat v_{k l }\vert^2  \Big).
\end{align}
\end{proof}
\begin{proof}[Proof of Lemma \ref{PSIB:S:LEMMA}]
Let
\begin{align}
\Psi_{\T{B,11}}^{s,n}(t) & =  \Vol{4} \sum_{ ( k_1, l_1 ) \in \I_n}  \vert \hat v^{s,\varepsilon}_{k_1l_1} \vert^2  \frac{ D(t) - g_{l_1k_1}(0) g_{k_1l_1} (t) }{i(E_{l_1}-E_{k_1})^2} \Psi_0,\\
 \Psi_{\T{B,12}}^{s,n}(t,\tau_2) &  =  \Vol{4}  \sum_{ ( k_1, l_1 ) \in \I_n}   \vert \hat v^{s,\varepsilon}_{k_1l_1} \vert^2 \frac{ e^{i(E_{l_1}-E_{k_1} )\tau_2 } \partial_{\tau_2} \big( D(\tau_2)   k_{l_1k_1}(\tau_2)\big)  g_{k_1l_1}(t) }{i(E_{l_1}-E_{k_1})^2}\Psi_0,
\end{align}
$1\le n \le M$. One verifies by means of partial integration that
\begin{align} \Psi_{\T{B,1}}^{s,n}(t) = \Psi_{\T{B,11}}^{s,n}(t) - \int_0^td\tau_2 \Psi_{\T{B,12}}^{s,n}( t,\tau_2).
\end{align}
By \eqref{energy_difference},
\begin{align}
\no  \Psi_{\T{B,11}}^{s,n}(t) \no \lesssim \Big( \rho^{-\left( \frac{n-1}{cM} \right) } \V{n} \Big),
\end{align}
and, by using Lemma \ref{D:DERIVATIVE} together with \eqref{Estimate:An}, \eqref{lem:Estimates_partial_k(s)_2}, 
\begin{align}
\no \Psi_{\T{B,12}}^{s,n}(t,\tau_2) \no & \lesssim  \rho^{-\left( \frac{n-1}{cM} \right) } \Vol{4} \sum_{( k_1 , l_1 ) \in \I_n} \Big( \bno \Big( \partial_{\tau_2}  D(\tau_2) \Big)  k_{l_1k_1}(\tau_2) g_{k_1l_1}(t) \Psi_0 \bno 
 +  \bno \partial_{\tau_2}  k_{l_1k_1}(\tau_2) g_{k_1l_1}(t) \varphi_0 \bno \Big) \nonumber\\
& \lesssim \Big( \rho^{-\left( \frac{n-1}{c M} \right) } \V{n} \Big)  \Big( E_{re}(\rho)  + \Big(\Vol{4}  \sum_{k=1}^N \sum_{l=N+1}^\infty \vert \hat v_{kl} \vert^2 \Big)^{\frac{1}{2}} + \rho^{2\varepsilon} \Big).
\end{align}
Let similarly
\begin{align}
 \Psi_{\T{B,31}}^{s,n} (\tau_1) &  =    \Vol{4}    \sum_{(k_1,l_1)\in \I_n}   \vert \hat v^{s,\varepsilon}_{k_1l_1} \vert^2 \frac{ \Big( D(\tau_1) g_{l_1k_1}(\tau_1) - g_{l_1k_1}(0) \Big) e^{i(E_{l_1}-E_{k_1})\tau_1} \partial_{\tau_1}k_{k_1l_1}(\tau_1)}{i(E_{l_1}-E_{k_1})^2}  \Psi_0, \\
  \Psi_{\T{B,32}}^{s,n}(\tau) & =    \Vol{4}   \sum_{(k_1,l_1) \in \I_n}   \vert \hat v^{s,\varepsilon}_{k_1l_1} \vert^2 \frac{  e^{i(E_{k_1}-E_{l_1})(\tau_2-\tau_1)}\partial_{\tau_2} \Big( D(\tau_2) k_{l_1k_1}(\tau_2) \Big) \partial_{\tau_1}k_{k_1l_1}(\tau_1)}{i (E_{l_1}-E_{k_1})^2} \Psi_0, 
\end{align}
such that by partial integration in $\tau_2$,
\begin{align} \Psi_{\T{B,3}}^{s,n}(t)=  \int_0^t d\tau_1  \Psi_{\T{B,31}}^{s,n}(\tau_1) - \int_0^td \mu_2(\tau)  \Psi_{\T{B,32}}^{s,n}(\tau).
\end{align} 
From \eqref{energy_difference} together with \eqref{lem:Estimates_partial_k(s)_1}, it follows that
\begin{align}
\no \Psi_{\T{B,31}}^{s,n}(\tau)\no & \lesssim  \Vol{4} \sum_{( k_1, l_1) \in \I_n} \frac{ \no \partial_{\tau_1}k_{k_1l_1}(\tau_1) \varphi_0 \no }{(E_{l_1}-E_{k_1} )^2} \lesssim \rho^{2\varepsilon} \Big(\rho^{-\left( \frac{n-1}{cM}\right)} \V{n} \Big)  ,
\end{align}
as well as in combination with Lemma \ref{D:DERIVATIVE} and \eqref{lem:Estimates_partial_k(s)_2},
\begin{align}
\no \Psi_{\T{B,32}}^{s,n}(t)\no & \lesssim \rho^{-\left( \frac{n-1}{cM} \right) } \Vol{4} \sum_{ ( k_1 , l_1 ) \in \I_n}   \Big( \bno \Big( \partial_{\tau_2}  D(\tau_2) \Big) k_{l_1k_1}(\tau_2) k_{k_1l_1}(\tau_1) \Psi_0 \bno + \rho^{2\varepsilon} \Big) \\
& \lesssim \Big(  \rho^{-\left( \frac{n-1}{cM} \right)}  \V{n} \Big)  \Big( E_{re}(\rho)  + \Big( \Vol{4}  \sum_{k=1}^N \sum_{l=N+1}^\infty \vert \hat v_{kl} \vert^2 \Big)^{\frac{1}{2}} + \rho^{2\varepsilon} \Big).
\end{align}
Lemma \ref{PSIB:S:LEMMA} then follows from \eqref{lemma:v_hat_estimates_fluc}, \eqref{lemma:v_hat_estimates_Ka} and \eqref{Estimate:An}.
\end{proof}

\subsubsection{Derivation of the Bound for $ \no \Psi_{\T{C}}(t) \no_{\T{TD}} $\label{sec:derivation_C}}
For estimating $ \no \Psi_{\T{C}}(t) \no_{\T{TD}}$ there is the extra difficulty that there is an additional sum appearing. This can be dealt with by using \eqref{lemma:v_hat_estimates_p}.\\

Let $M\ge 1$. We define for $0\le n \le M$,
\begin{align}\label{PSI:CS}
\Psi_{\T{C}}^{s,n}(t) = & \Vol{4}  \sum_{(k_1,l_1)\in \I_n} \sum_{k_2=1}^N \hat v_{k_2k_1} \hat v_{k_1l_1}^{s,\varepsilon}  \int_0^t d\mu_2(\tau)  D(\tau_2)  \Big( g_{k_2k_1}(\tau_2) g_{k_1l_1}(\tau_1)  \varphi_0 \Big) \otimes \Omega_0^{[k_2l_1^*]}, \\ 
\Psi_{\T{C}}^\ell(\tau) = & \Vol{4}   \sum_{l_1=N+1}^\infty \sum_{k_1,k_2=1}^N \hat v_{k_2k_1} \hat v_{k_1l_1}^{\ell, \varepsilon} \Big(  g_{k_2k_1}(\tau_2) g_{k_1l_1}(\tau_1)   \varphi_0 \Big) \otimes \Omega_0^{[k_2l_1^*]},
\end{align}
such that
\begin{align}
\Psi_{\T{C}}(t) = \sum_{n=0}^M \Psi_{\T{C}}^{s,n}(t)  + \int_0^t d\mu_2(\tau) D(\tau_2)  \Psi_{\T{C}}^\ell(\tau). \label{PSI:C}
\end{align}
Note that $\lsp \Omega_0^{[k_2l_1^*]} , \Omega_0^{[m_2n_1^*]} \rsp = \delta_{k_2m_2} \delta_{l_1n_1}$ for $k_2,m_2\le N$, $N+1 \le l_1,n_1$. Using this and $|\hat{v} \vert \leq C $, one finds
\begin{align}
\no \Psi_{\T{C}}^\ell (\tau) \no^2 & \lesssim   \Vol{8} \sum_{k_1=1}^N \sum_{l_1=N+1}^\infty \vert \hat v_{k_1l_1}^{\ell,\varepsilon} \vert  \sum_{k_2=1}^N  \vert \hat v_{k_2k_1} \vert  \sum_{m_1=1}^N \sum_{n_1=N+1}^\infty  \vert \hat v_{m_1n_1}^{\ell,\varepsilon} \vert  \sum_{m_2=1}^N  \vert \hat v_{m_2m_1} \vert \lsp \Omega_0^{[k_2l_1^*]} , \Omega_0^{[m_2n_1^*]} \rsp \\
& \lesssim \Vol{4} \sum_{k_1 =1}^N \sum_{l_1 =N+1}^\infty \vert \hat v_{k_1l_1}^{\ell , \varepsilon} \vert  \Vol{2} \sum_{k_2=1}^N  \vert \hat v_{k_2k_1} \vert  \Vol{2} \sum_{m_1=1}^N \vert \hat v_{m_1l_1}^{\ell ,\varepsilon} \vert.
\end{align}
Then, by means of \eqref{lem:estimates_v_L} and \eqref{lemma:v_hat_estimates_p}, $\no \Psi_{\T{C}}^\ell (\tau)\no_{\T{TD}} \le  C_\varepsilon \rho^{\varepsilon-\frac{1}{\varepsilon}}$, and thus
\begin{align}
\bno \int_0^t d\mu_2(\tau)  D(\tau_2)  \Psi_{\T{C}}^\ell(\tau) \bno_{\T{TD}}\le C_\varepsilon t^2 \rho^{\varepsilon-\frac{1}{\varepsilon}}.
\end{align}
Similarly, we can estimate the norm in $\Psi_{\T{C}}^{s,0}(t)$. Using \eqref{lemma:v_hat_estimates_p}, we find
\begin{align}
& \bno \Vol{4}  \sum_{( k_1, l_1 ) \in \I_n} \sum_{k_2=1}^N \hat v_{k_2k_1} \hat v_{k_1l_1}^{s,\varepsilon}   g_{k_2k_1}(\tau_2) g_{k_1l_1}(\tau_1)  \varphi_0\otimes \Omega_0^{[k_2l_1^*]} \bno^2 \nonumber\\
& \hspace{2cm} \lesssim  \Vol{4} \sum_{( k_1, l_1 ) \in \I_0}  \Vol{2}  \sum_{k_2=1}^N  \vert \hat v_{k_2k_1} \vert   \Vol{2}  \sum_{m_1=1}^N \vert \hat v_{m_1n_1}^{s,\varepsilon} \vert  \lesssim \V{0}  \Big(\rho^{2\varepsilon}+C_\varepsilon\rho^{-1/\varepsilon} \Big) \rho^{2\varepsilon} .
\end{align}
In combination with \eqref{Estimate:A0} this gives
\begin{align}
\no \Psi_{\T{C}}^{s,0}(t) \no_{\T{TD}} \lesssim t^2  \rho^{-\frac{1}{4}+\frac{3}{2}\varepsilon}   \Big( \rho^{\varepsilon} + C_\varepsilon\rho^{-1/(2 \varepsilon ) } \Big).
\end{align}

\begin{lemma}\label{LEMMA:PSI:C:S} Let  $\Psi_{\T{C}}^{s,n}(t)$ as in \eqref{PSI:CS}. Then, under the same assumptions as in Theorem~\ref{thm:main_thm}, there exists a positive constant $C$ such that\end{lemma}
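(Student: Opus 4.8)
The plan is to estimate $\Psi_{\T{C}}^{s,n}(t)$ from \eqref{PSI:CS} for $1\le n\le M$ by the same device used for $\Psi_{\T{A}}^{s,n}(t)$ in Lemma~\ref{lem:R_S_A_est}: a single partial integration in $\tau_1$ that converts the oscillating phase of $g_{k_1l_1}(\tau_1)=e^{i(E_{l_1}-E_{k_1})\tau_1}k_{k_1l_1}(\tau_1)$ into the inverse energy gap $1/(E_{l_1}-E_{k_1})$, which by \eqref{energy_difference} is at least $k_F\rho^{-b_n}=C\rho^{(n-1)/(2cM)}$ whenever $(k_1,l_1)\in\I_n$. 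Since $k_{k_1l_1}(\tau_1)$ sits outside the inner $\tau_2$-integral, whose upper endpoint is $\tau_1$, the product rule produces, exactly as for $\Psi_{\T{A}}^{s,n}$, a ``difference'' term $\int_0^t d\tau_1\, D(\tau_1)\,\Psi_{\T{C,1}}^{s,n}(t,\tau_1)$ in which the operator acting on $\varphi_0$ is $g_{k_2k_1}(\tau_1)\big(g_{k_1l_1}(t)-g_{k_1l_1}(\tau_1)\big)/\big(i(E_{l_1}-E_{k_1})\big)$, and a ``derivative'' term $\int_0^t d\mu_2(\tau)\, D(\tau_2)\,\Psi_{\T{C,2}}^{s,n}(\tau)$ in which it is $g_{k_2k_1}(\tau_2)\,e^{i(E_{l_1}-E_{k_1})\tau_1}\partial_{\tau_1}k_{k_1l_1}(\tau_1)/\big(i(E_{l_1}-E_{k_1})\big)$; both retain the prefactor $\hat v_{k_2k_1}\hat v_{k_1l_1}^{s,\varepsilon}$ and the gas state $\Omega_0^{[k_2l_1^*]}$.

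For each of the two terms I would bound the $L^2$-norm by discarding the unitary $D$, expanding the square, and using $\langle \Omega_0^{[k_2l_1^*]},\Omega_0^{[m_2n_1^*]}\rangle=\delta_{k_2m_2}\delta_{l_1n_1}$ to collapse the quadruple sum to a sum over $l_1$, over two copies $k_1,m_1$ of the incoming index with $(k_1,l_1),(m_1,l_1)\in\I_n$, and over $k_2$. The surviving $\mathcal H_y$ inner product is controlled by Cauchy--Schwarz and unitarity of the $g$'s: for $\Psi_{\T{C,1}}^{s,n}$ this leaves the factor $(E_{l_1}-E_{k_1})^{-1}(E_{l_1}-E_{m_1})^{-1}$ via $\no(g_{k_1l_1}(t)-g_{k_1l_1}(\tau_1))\varphi_0\no\le 2$, and for $\Psi_{\T{C,2}}^{s,n}$ the same factor times $\no\partial_{\tau_1}k_{k_1l_1}(\tau_1)\varphi_0\no\,\no\partial_{\tau_1}k_{m_1l_1}(\tau_1)\varphi_0\no\lesssim\rho^{4\varepsilon}$ from \eqref{lem:Estimates_partial_k(s)_1}. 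After $(E_{l_1}-E_{k_1})^{-1}(E_{l_1}-E_{m_1})^{-1}\le\tfrac{1}{2}\big((E_{l_1}-E_{k_1})^{-2}+(E_{l_1}-E_{m_1})^{-2}\big)$ and relabelling $k_1\leftrightarrow m_1$ in the second term, I would bound $|\hat v_{k_2m_1}|\le C$ and $|\hat v_{k_1l_1}^{s,\varepsilon}|\le C$, carry out the $k_2$-sum and the $m_1$-sum by $\Vol{2}\sum_{k_2}|\hat v_{k_2k_1}|\lesssim\rho^{2\varepsilon}+C_\varepsilon\rho^{-1/\varepsilon}$ and $\Vol{2}\sum_{m_1}|\hat v_{m_1l_1}^{s,\varepsilon}|\lesssim\rho^{2\varepsilon}+C_\varepsilon\rho^{-1/\varepsilon}$ (both from \eqref{lemma:v_hat_estimates_p}), and reduce to $\Vol{4}\sum_{(k_1,l_1)\in\I_n}(E_{l_1}-E_{k_1})^{-2}$, for which \eqref{energy_difference} and \eqref{Estimate:An} give $\lesssim\rho^{-(n-1)/(cM)}\V{n}\lesssim\rho^{-1/2+\varepsilon}\big(\rho^{1/(cM)}-\rho^{1/(2cM)}\big)$. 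Taking square roots, both terms are bounded by $(1+t)^2\rho^{-1/4}\rho^{c'\varepsilon}\big(\rho^{2\varepsilon}+C_\varepsilon\rho^{-1/\varepsilon}\big)\sqrt{\rho^{1/(cM)}-\rho^{1/(2cM)}}$ for a harmless numerical $c'>0$; summing over $1\le n\le M$ with $M=\floor{\ln\rho}$ (using $\sum_{n=1}^M\sqrt{\rho^{1/(cM)}-\rho^{1/(2cM)}}\lesssim\rho^{\varepsilon}$ and Remark~\ref{INTERCHANGE:LIMITS}) then yields the $\rho$-power of \eqref{lemma:main-norm-of-C}.

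The main obstacle, and the only feature absent from the treatment of $\Psi_{\T{A}}^{s,n}$, is the extra momentum sum over $k_2$: once orthogonality forces $m_2=k_2$ and $n_1=l_1$, the surviving weight $\hat v_{k_2k_1}\bar{\hat v}_{k_2m_1}\hat v_{k_1l_1}^{s,\varepsilon}\bar{\hat v}_{m_1l_1}^{s,\varepsilon}$ couples the two incoming indices $k_1,m_1$ through the shared $k_2$ and the shared $l_1$, so the index sum does not split into a product the way it does for $\Psi_{\T{A}}^{s,n}$. The resolution is precisely the Cauchy--Schwarz/symmetrisation step above together with the small-momentum-transfer cutoff: since $|p_{m_1}-p_{l_1}|<\rho^{\varepsilon}$ is built into $\hat v^{s,\varepsilon}$, only $\mathcal O(L^2\rho^{2\varepsilon})$ values of $m_1$ are admissible for fixed $l_1$, so the $m_1$-summation costs merely $\rho^{2\varepsilon}$ --- this is \eqref{lemma:v_hat_estimates_p} applied to $\hat v^{s,\varepsilon}$, and it is what keeps the exponent below $-1/4$ up to $\varepsilon$-corrections. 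I would finally note that no stationary-phase refinement or second partial integration is needed here: the remaining time-dependent phase $e^{i(E_{k_2}-E_{k_1})\tau_2}$ inside $g_{k_2k_1}(\tau_2)$ is never integrated by parts (it may be nonoscillatory since $k_1,k_2$ both lie inside the Fermi ball) and is simply carried along and absorbed by unitarity, exactly as in Lemma~\ref{lem:R_S_A_est}.
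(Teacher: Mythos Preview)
Your proposal is correct and follows essentially the same route as the paper: partial integration in $\tau_1$ yields the two pieces $\Psi_{\T{C,1}}^{s,n}$ and $\Psi_{\T{C,2}}^{s,n}$, and after the orthogonality $\langle\Omega_0^{[k_2l_1^*]},\Omega_0^{[m_2n_1^*]}\rangle=\delta_{k_2m_2}\delta_{l_1n_1}$ one controls the $k_2$- and $m_1$-sums by \eqref{lemma:v_hat_estimates_p} and the remaining $(k_1,l_1)$-sum by \eqref{energy_difference} and \eqref{Estimate:An}. The only difference is cosmetic: your AM--GM step $(E_{l_1}-E_{k_1})^{-1}(E_{l_1}-E_{m_1})^{-1}\le\tfrac12\big((E_{l_1}-E_{k_1})^{-2}+(E_{l_1}-E_{m_1})^{-2}\big)$ is unnecessary, since orthogonality forces $n_1=l_1$ and hence $(m_1,l_1)\in\I_n$ as well, so the paper simply bounds both factors by $\rho^{-(n-1)/(2cM)}$ directly; your detour is harmless and leads to the same estimate.
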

\vspace{-0.7cm}
\begin{align}
\no \Psi_{\T{C}}^{s,n}(t) \no_{\T{TD}} & \lesssim (1+t)^2 \rho^{-\frac{1}{4}+\frac{7}{2}\varepsilon} \sqrt{ \rho^{\frac{1}{cM}}-\rho^{\frac{1}{2cM}} } \Big( \rho^{\varepsilon} +  C_\varepsilon \rho^{-1/(2\varepsilon) } \Big)  , \hspace{0.5cm} 1\le n \le M, 
\end{align}
\textit{holds for all $t>0$.}\\

Here, we can use again that $M \rho^{\frac{1}{2cM}} \lesssim \rho^{\frac{1}{2}\varepsilon}$ for $M=\floor{\ln \rho}$, and thus obtain
\begin{align}
	\sum_{n=1}^{M} \no   \Psi_{\T{C}}^{s,n}(t) \no_{\T{TD}} & \lesssim (1+t)^2 \rho^{-\frac{1}{4}+2\varepsilon}  \Big( \rho^{\varepsilon} +  C_\varepsilon \rho^{-1/(2\varepsilon)} \Big).
\end{align}
This proves the bound for $\no \Psi_{\T{C}}(t) \no_{\T{TD}}$ in \eqref{lemma:main-norm-of-C}.} 
\begin{proof}[Proof of Lemma \ref{LEMMA:PSI:C:S}] We define for $1\le n \le M$,
\begin{align}
\Psi_{\T{C,1}}^{s,n}(t,\tau_1) & =  \Vol{4}    \sum_{( k_1, l_1) \in \I_n} \sum_{k_2=1}^N \hat v_{k_2k_1}  \hat  v_{k_1l_1}^{s,\varepsilon} \Big( g_{k_2k_1}(\tau_1) \frac{ g_{k_1l_1}(t) - g_{k_1l_1}(\tau_1)  }{ i(E_{l_1}-E_{k_1}) }   \varphi_0 \Big) \otimes \Omega_0^{[k_2l_1^*]}, \\
\Psi_{\T{C,2}}^{s,n}(\tau)&  =   \Vol{4}    \sum_{( k_1, l_1) \in \I_n} \sum_{k_2=1}^N \hat  v_{k_2k_1}  \hat  v_{k_1l_1}^{s,\varepsilon} \Big( g_{k_2k_1}(\tau_2) \frac{ e^{i(E_{l_1}-E_{k_1})\tau_1} \partial_{\tau_1} k_{k_1l_1}(\tau_1) }{ i(E_{l_1}-E_{k_1}) }  \varphi_0 \Big) \otimes \Omega_0^{[k_2l_1^*]},
\end{align}
and find via partial integration,
\begin{align}
\Psi_{\T{C}}^{(n)}(t) = \int_0^t d\tau_1 D(\tau_1) \Psi_{\T{C,1}}^{s,n}(t,\tau_1) -  \int_0^t d\mu_2(\tau) D(\tau_2) \Psi_{\T{C,2}}^{s,n}(\tau).
\end{align} 
For estimating the wave functions on the r.h.s., we use again $\lsp \Omega_0^{[k_2l_1^*]} , \Omega_0^{[m_2n_1^*]} \rsp = \delta_{k_2m_2} \delta_{l_1n_1}$, \eqref{energy_difference} and \eqref{lemma:v_hat_estimates_p},
\begin{align}
\no \Psi_{\T{C,1}}^{s,n}(\tau) \no^2 & \lesssim 
     \Vol{4}   \sum_{( k_1,  l_1) \in \I_n}  \rho^{ -\left( \frac{n-1}{cM} \right)}  \Vol{2}  \sum_{k_2=1}^N \vert \hat v_{k_2k_1} \vert 	\Vol{2}     \sum_{m_1=1}^N \vert \hat v_{m_1l_1}^{s,\varepsilon} \vert \nonumber\\
& \lesssim \rho^{2\varepsilon} \Big( \rho^{ -\left( \frac{n-1}{cM} \right)} \V{n} \Big) \Big(\rho^{2\varepsilon}+C_\varepsilon\rho^{-1/\varepsilon} \Big)  .
\end{align}
Similarly, using in addition \eqref{lem:Estimates_partial_k(s)_1},
\begin{align}
\no \Psi_{\T{C,2} }^{s,n}(t,\tau_1) \no^2 & \lesssim \rho^{4\varepsilon}  \Vol{4}  \sum_{( k_1, l_1) \in \I_n}  \rho^{-\left( \frac{n-1}{cM}\right)}  \Vol{2}  \sum_{k_2=1}^N \vert \hat v_{k_2k_1} \vert \Vol{2}  \sum_{m_1=1}^N \vert \hat v_{m_1l_1}^{s,\varepsilon} \vert \nonumber \\
& \lesssim \rho^{6\varepsilon} \Big( \rho^{-\left( \frac{n-1}{cM}\right)} \V{n} \Big) \Big(\rho^{2\varepsilon} +C_\varepsilon\rho^{-1/\varepsilon } \Big) .
\end{align}
Application of \eqref{Estimate:An} completes the proof of the lemma.
\end{proof}

\subsubsection{Derivation of the Bound for $ \no \Psi_{\T{D}}(t) \no_{\T{TD}} $\label{sec:derivation_D}}
The term $\Psi_{\T{D}}(t)$ has a similar structure as $\Psi_{\T{C}}(t)$ and can be estimated in a similar way.\\

Let $M\ge1$ and for $0\le n \le M$,
\begin{align}
\Psi_{\T{D}}^{s,n}(t)& =  \Vol{4} \sum_{( k_1, l_1) \in \I_n} \sum_{l_2=N+1}^\infty \hat v_{l_1l_2}	\hat v_{k_1l_1}^{s,\varepsilon}  \int_0^t d\mu_2(\tau) D(\tau_2)  \Big( g_{l_1l_2}(\tau_2) g_{k_1l_1}(\tau_1) \varphi_0 \Big) \otimes \Omega_0^{[l_2^*k_1]} , \label{PSI:D:S} \\
\Psi_{\T{D}}^\ell(\tau) & =   \Vol{4} \sum_{k_1=1}^N  \sum_{l_1,l_2=N+1}^\infty  \hat v_{l_1l_2} \hat v_{k_1l_1}^{\ell,\varepsilon} \Big(  g_{l_1l_2}(\tau_2) g_{k_1l_1}(\tau_1) \varphi_0 \Big) \otimes \Omega_0^{[l_2^*k_1]} .
\end{align}
This leads to the identity
\begin{align}
\Psi_{\T{D}}(t) = \sum_{n=0}^M \Psi_{\T{D}}^{s,n}(t)  +  \int_0^t d\mu_2(\tau) D(\tau_2) \Psi_{\T{D}}^\ell (\tau).
\end{align}
Using $\lsp \Omega_0^{[l_2^*k_1]} ,\Omega_0^{[n_2^*m_1]}  \Omega_0 \rsp = \delta_{l_2n_2} \delta_{k_1m_1}$ which holds for $l_2,n_2\ge N+1$ and $k_1,m_1\le N$, one finds
\begin{align}
\no \Psi_{\T{D}}^\ell (t) \no^2 & \lesssim  
   \Vol{4} \sum_{k_1=1}^N \sum_{l_1=N+1}^\infty \vert \hat v_{k_1l_1}^{\ell,\varepsilon} \vert  \Vol{2}   \sum_{l_2=N+1}^\infty  \vert \hat v_{l_1l_2} \vert    \Vol{2}   \sum_{n_1=N+1}^\infty \vert \hat v_{k_1n_1}^{\ell,\varepsilon} \vert .
\end{align}
Then, \eqref{lem:estimates_v_L} in combination with \eqref{lemma:v_hat_estimates_p} leads to $\no \Psi_{\T{D}}^\ell(\tau) \no_{\T{TD}} \le  C_\varepsilon \rho^{\varepsilon-\frac{1}{\varepsilon}}$, and hence,
\begin{align}
\bno \int_0^t d\mu_2(\tau) D(\tau_2) \Psi_{\T{D}}^\ell (\tau) \bno_{ \T{TD} } \le C_\varepsilon t^2 \rho^{\varepsilon-\frac{1}{\varepsilon} }.
\end{align}
We similarly estimate the norm in $\Psi_{\T{D}}^{s,0}(t)$,
\begin{align}
& \bno \Vol{4} \sum_{(k_1, l_1)\in \I_0} \sum_{l_2=N+1}^\infty \hat v_{l_1l_2}	\hat v_{k_1l_1}^{s,\varepsilon} \Big( g_{l_1l_2}(\tau_2) g_{k_1l_1}(\tau_1) \varphi_0 \Big) \otimes \Omega_0^{[l_2^*k_1]} \bno^2\nonumber \\
& \hspace{5cm} \lesssim \Vol{4} \sum_{( k_1, l_1 ) \in \I_0}   \Vol{2}  \sum_{l_2=N+1}^\infty  \vert \hat v_{l_1l_2} \vert  \Vol{2}  \sum_{n_1=N+1}^\infty \vert \hat v_{k_1n_1}^{s,\varepsilon} \vert.
\end{align}
With \eqref{lemma:v_hat_estimates_p} and  \eqref{Estimate:A0}, it follows that
\begin{align}
\no \Psi_{\T{D}}^{s,0}(t) \no_{\T{TD}} & \lesssim t^2 \rho^{-\frac{1}{4} + \frac{3}{2}\varepsilon} \Big(\rho^{\varepsilon} + C_\varepsilon \rho^{-1/(2\varepsilon)} \Big).
\end{align}
\begin{lemma}\label{LEMMA:PSI:D:S}Let  $\Psi_{\T{D}}^{s,n}(t)$ as in \eqref{PSI:D:S}. Then, under the same assumptions as in Theorem~\ref{thm:main_thm}, there exists a positive constant $C$ such that\end{lemma}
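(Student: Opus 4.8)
The plan is to mirror the proof of Lemma~\ref{LEMMA:PSI:C:S} step by step; the sole structural difference is that the second collision in $\Psi_{\T{D}}^{s,n}(t)$ (see \eqref{PSI:D:S}) generates a free momentum sum over the \emph{above}-Fermi index $l_2$ instead of the below-Fermi index appearing for $\Psi_{\T{C}}$. First I would integrate by parts in $\tau_1$: writing $g_{k_1l_1}(\tau_1)=e^{i(E_{l_1}-E_{k_1})\tau_1}k_{k_1l_1}(\tau_1)$ and using that for $(k_1,l_1)\in\I_n$ with $n\ge1$ the energy shift is bounded below, $E_{l_1}-E_{k_1}\ge C\rho^{\frac{n-1}{2cM}}$ by \eqref{energy_difference}, the partial integration produces a decomposition $\Psi_{\T{D}}^{s,n}(t)=\int_0^t d\tau_1\,D(\tau_1)\,\Psi_{\T{D,1}}^{s,n}(t,\tau_1)-\int_0^t d\mu_2(\tau)\,D(\tau_2)\,\Psi_{\T{D,2}}^{s,n}(\tau)$, where $\Psi_{\T{D,1}}^{s,n}$ carries the boundary factor $\big(g_{k_1l_1}(t)-g_{k_1l_1}(\tau_1)\big)\big(i(E_{l_1}-E_{k_1})\big)^{-1}$ and $\Psi_{\T{D,2}}^{s,n}$ carries $e^{i(E_{l_1}-E_{k_1})\tau_1}\,\partial_{\tau_1}k_{k_1l_1}(\tau_1)\,\big(i(E_{l_1}-E_{k_1})\big)^{-1}$, while the spectator factor $g_{l_1l_2}(\tau_2)$ and the state $\Omega_0^{[l_2^*k_1]}$ are left untouched.

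Next I would estimate $\no\Psi_{\T{D,1}}^{s,n}(t,\tau_1)\no^2$ and $\no\Psi_{\T{D,2}}^{s,n}(\tau)\no^2$. Expanding the squared norm, the orthogonality relation $\langle\Omega_0^{[l_2^*k_1]},\Omega_0^{[n_2^*m_1]}\rangle=\delta_{l_2n_2}\delta_{k_1m_1}$ (for $l_2,n_2\ge N+1$, $k_1,m_1\le N$), together with the unitarity $\no g_{l_1l_2}(\tau)\varphi\no=\no\varphi\no$, pins $m_1=k_1$ and $n_2=l_2$ and leaves a multiple sum over $k_1$, $l_2$ and the two intermediate above-Fermi momenta $l_1,n_1$, each forming an $\I_n$-pair with $k_1$ (whereas in the $\Psi_{\T{C}}$ computation it was the two intermediate below-Fermi momenta that stayed free). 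These two $\I_n$-pairings supply the factor $(E_{l_1}-E_{k_1})^{-1}(E_{n_1}-E_{k_1})^{-1}\le C\rho^{-\frac{n-1}{cM}}$. Bounding the remaining Fourier coefficients in absolute value and applying Cauchy--Schwarz, the only $\rho$-growing factor that is not tied to an $\I_n$-constraint is the momentum sum $\Vol{2}\sum_{l_2=N+1}^\infty|\hat v_{l_1l_2}|=\Vol{2}\sum_{l_2=N+1}^\infty|\mathcal F[v](p_{l_1}-p_{l_2})|\le\Vol{2}\sum_{l_2=1}^\infty|\mathcal F[v](p_{l_1}-p_{l_2})|$, which by \eqref{lemma:v_hat_estimates_p} contributes $C\rho^{2\varepsilon}+C_\varepsilon\rho^{-1/\varepsilon}$; what is left is $\Vol{4}\sum_{(k_1,l_1)\in\I_n}=\V{n}$, up to a further $\rho^{2\varepsilon}+C_\varepsilon\rho^{-1/\varepsilon}$, and for $\Psi_{\T{D,2}}^{s,n}$ an additional $\rho^{2\varepsilon}$ coming from \eqref{lem:Estimates_partial_k(s)_1} applied to $\no\partial_{\tau_1}k_{k_1l_1}(\tau_1)\varphi_0\no$. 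Thus $\no\Psi_{\T{D,1}}^{s,n}\no^2+\no\Psi_{\T{D,2}}^{s,n}\no^2\lesssim\rho^{4\varepsilon}\big(\rho^{-\frac{n-1}{cM}}\V{n}\big)\big(\rho^{2\varepsilon}+C_\varepsilon\rho^{-1/\varepsilon}\big)^2$.

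Finally I would use the unitarity of $D(\cdot)$ and carry out the (at most two) elementary time integrations, which produces the factor $(1+t)^2$, and insert $\rho^{-\frac{n-1}{cM}}\V{n}\lesssim\rho^{-\frac12+\varepsilon}\big(\rho^{\frac{1}{cM}}-\rho^{\frac{1}{2cM}}\big)$ from \eqref{Estimate:An}; taking square roots and collecting the powers of $\rho^\varepsilon$ yields the asserted bound (which has the same form as the one in Lemma~\ref{LEMMA:PSI:C:S}). The main obstacle, exactly as for $\Psi_{\T{C}}(t)$, is the extra free momentum sum over $l_2$: it comes with no $\I_n$-type restriction and therefore cannot be made small through the energy-denominator mechanism — it has to be absorbed by the essentially $\rho$-independent (up to $\rho^{\varepsilon}$) bound \eqref{lemma:v_hat_estimates_p}. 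A minor point of care is that here the shared index after the orthogonality step is the occupied momentum $k_1$ rather than a hole index, so that both intermediate momenta pair with the same $k_1$ inside $\I_n$; this does not change the counting.
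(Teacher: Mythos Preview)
Your proposal is correct and follows essentially the same route as the paper's proof: partial integration in $\tau_1$ to produce the decomposition $\Psi_{\T{D}}^{s,n}(t)=\int_0^t d\tau_1\,D(\tau_1)\,\Psi_{\T{D,1}}^{s,n}(t,\tau_1)-\int_0^t d\mu_2(\tau)\,D(\tau_2)\,\Psi_{\T{D,2}}^{s,n}(\tau)$, then the orthogonality $\langle\Omega_0^{[l_2^*k_1]},\Omega_0^{[n_2^*m_1]}\rangle=\delta_{l_2n_2}\delta_{k_1m_1}$, the energy-denominator bound \eqref{energy_difference}, the estimate \eqref{lemma:v_hat_estimates_p} for the free $l_2$ and $n_1$ sums, and finally \eqref{Estimate:An}. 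Your remark that here the shared index is the occupied momentum $k_1$ (so both intermediate above-Fermi momenta $l_1,n_1$ pair with the same $k_1$ in $\I_n$) is exactly the structural difference from the $\Psi_{\T{C}}$ case and matches the paper's computation; the mention of Cauchy--Schwarz is not needed, as the paper simply bounds the mixed $l_1,n_1$ cross terms directly in absolute value.
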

\vspace{-0.7cm}
\begin{align}
\no \Psi_{\T{D}}^{s,n}(t) \no_{\T{TD}} & \lesssim (1+t)^2 \rho^{-\frac{1}{4}+ \frac{7}{2}\varepsilon + \frac{1}{2cM}} \Big( \rho^{\varepsilon} +  C_\varepsilon \rho^{-1/(2\varepsilon)} \Big) , \hspace{0.5cm} 1\le n \le M,
\end{align}
\textit{holds for all $t\ge 0$.}\\

It follows exactly as below Lemma \ref{LEMMA:PSI:C:S} that for $M=\floor{\ln \rho}$ (the largest integer smaller than $\ln \rho$),
\begin{align}
\sum_{n=1}^{M} \no  \Psi_{\T{D}}^{s,n}(t) \no_{\T{TD}} 
& \lesssim (1+t)^2 \rho^{-\frac{1}{4}+ 2\varepsilon } \Big( \rho^{\varepsilon}  +  C_\varepsilon \rho^{-1/(2\varepsilon) } \Big),
\end{align}
which proves the bound for $\no \Psi_{\T{D}}(t)\no_{\T{TD}}$ in \eqref{lemma:main-norm-of-B1}.
\begin{proof}[Proof of Lemma \ref{LEMMA:PSI:D:S}]
For $1\le n \le M$, we set
\begin{align}
\Psi_{\T{D,1}}^{s,n}(t,\tau_1) & =  \Vol{4}  \sum_{(k_1, l_1) \in \I_n} \sum_{l_2=N+1}^\infty \hat v_{l_1l_2}  \hat  v_{k_1l_1}^{s,\varepsilon} \Big( g_{l_1l_2}(\tau_1)  \frac{  g_{k_1l_1}(t) - g_{k_1l_1}(\tau_1)}{ i(E_{l_1}-E_{k_1}) } \varphi_0 \Big) \otimes \Omega_0^{[l_2^*k_1]}, \\
\Psi_{\T{D,2}}^{s,n}(\tau) & =   \Vol{4} \sum_{( k_1, l_1 ) \in \I_n} \sum_{l_2=N+1}^\infty \hat v_{l_1l_2}  \hat  v_{k_1l_1}^{s,\varepsilon}  \Big( g_{l_1l_2}(\tau_2)  \frac{  e^{i(E_{l_1}-E_{k_1})\tau_1}\partial_{\tau_1} k_{k_1l_1}(\tau_1) }{ i(E_{l_1}-E_{k_1}) } \varphi_0 \Big) \otimes \Omega_0^{[l_2^*k_1]}.
\end{align}
Partial integration leads to
\begin{align} \Psi_{\T{D}}^{s,n}(t) = \int_0^t d\tau_1 D(\tau_1) \Psi_{\T{D,1}}^{s,n}(t,\tau) - \int_0^t d\mu_2(\tau) D(\tau_2) \Psi_{\T{D,2}}^{s,n}(\tau).
\end{align}
It remains to compute the norm of the wave functions on the r.h.s.\ Using \eqref{energy_difference},
\begin{align}
\no \Psi_{\T{D,1} }^{s,n}(t,\tau_1) \no^2 
 & \lesssim    \rho^{ -\left( \frac{n-1}{cM} \right)}  \Vol{4}  \sum_{(k_1, l_1) \in \I_n}    \Vol{2}   \sum_{l_2=N+1}^\infty \vert \hat v_{l_1l_2} \vert  \Vol{2} \sum_{n_1=N+1}^\infty \vert \hat v_{k_1n_1}^{s,\varepsilon} \vert ,
\end{align}
and similarly, using in addition \eqref{lem:Estimates_partial_k(s)_1},
\begin{align}
\no \Psi_{\T{D,2}}^{s,n}(\tau) \no^2
 & \lesssim \rho^{4\varepsilon} \rho^{ -\left( \frac{n-1}{c M} \right)}  \Vol{4}  \sum_{( k_1, l_1) \in \I_n}   \Vol{2}  \sum_{l_2=N+1}^\infty \vert \hat v_{l_1l_2} \vert \Vol{2}    \sum_{n_1=N+1}^\infty \vert \hat v_{k_1n_1}^{s,\varepsilon} \vert .
\end{align}
The lemma then follows from \eqref{lemma:v_hat_estimates_p} together with \eqref{Estimate:An}.

\end{proof}

\subsubsection{Derivation of the Bound for $\no \Psi_{\T{E}}(t) \no_{\T{TD}} $} 
The term $\Psi_{\T{E}}(t)$ is more difficult to estimate since it involves four sums. In order to get the desired bound, it is not enough to do one partial integration.\ We have to split the term more carefully into different contributions and for some of them perform an additional partial integration. This gives an additional phase cancellation which is enough for the desired bound.\\

For $0\le n,m\le M$, we define
\begin{align}
 \Psi_{\T{E}}^{s,nm}(t)& =   \Vol{4} \sum_{(k_1,l_1)\in \I_n } \sum_{ ( k_2 , l_2 ) \in \I_m} \hat v^{s,\varepsilon}_{k_2l_2} \hat v^{s,\varepsilon}_{k_1l_1}  \int_0^t d\mu_3(\tau) D(\tau_3) \Big(  g_{k_2l_2}(\tau_2) g_{k_1l_1}(\tau_1)   \varphi_0 \Big) \otimes \Omega_0^{[l_2^*k_2l_1^*k_1]},\label{PSI:E:S}\\
	\Psi_{\T{E}}^\ell(\tau) & =  \Vol{4}  \sum_{k_1,k_2=1}^N \sum_{l_1,l_2=N+1}^{\infty} \hat w^{\ell ,\varepsilon}_{k_2l_2k_1l_1 } \Big(  g_{k_2l_2}(\tau_2) g_{k_1l_1}(\tau_1)   \varphi_0 \Big) \otimes \Omega_0^{[l_2^*k_2l_1^*k_1]},
\end{align}
where 
\begin{align}\hat w^{\ell ,\varepsilon}_{k_2l_2k_1l_1} :=  \hat v^{ \ell,\varepsilon}_{k_2l_2} \hat v^{s,\varepsilon}_{k_1l_1} +  \hat v^{s,\varepsilon}_{k_2l_2} \hat v_{k_1l_1}^{\ell ,\varepsilon} + \hat v^{\ell ,\varepsilon}_{k_2l_2} \hat v^{\ell ,\varepsilon}_{k_1l_1}.\label{HAT:W:ELL}
\end{align}
This leads to
\begin{align}\label{PSI_E}
\Psi_{\T{E}}(t) = E_{re}(\rho) \sum_{n,m=0}^{M}  \Psi_{\T{E}}^{s,nm}(t) + E_{re}(\rho) \int_0^{t}d\mu_3(\tau)  D(\tau_3)   \Psi_{\T{E}}^\ell(\tau),
\end{align}
since $\hat v_{k_1l_1} \hat v_{k_2l_2} = \hat v^{s,\varepsilon}_{k_2l_2} \hat v^{s,\varepsilon}_{k_1l_1} + \hat w^{\ell ,\varepsilon}_{k_2l_2k_1l_1}$.\ Using that for $k_1,k_2,m_1,m_2\le N$ and $N+1 \le l_1,l_2,n_1,n_2$,
\begin{align}
 \lsp  \Omega_0^{[l_2^*k_2l_1^*k_1]},  \Omega_0^{[n_2^*m_2n_1^*m_1]} \rsp & = (\delta_{k_2m_2} \delta_{k_1m_1} + \delta_{k_2m_1} \delta_{k_1m_2} ) \delta^\perp_{k_2k_1} \delta^\perp_{m_2m_1} \times \nonumber \\
&~~~ \times (\delta_{l_2n_2} \delta_{l_1n_1} + \delta_{l_2n_1} \delta_{l_1n_2} ) \delta_{l_2l_1}^\perp \delta_{n_2n_1}^\perp ,
\end{align}
where $\delta_{kl}^\perp = 1-\delta_{kl}$, we find
\begin{align}
 \no \Psi_{\T{E}}^\ell(\tau) \no^2 & \lesssim  \Vol{8} \sum_{k_1,k_2=1}^N \sum_{l_1,l_2=N+1}^{\infty} \Big( \vert  \hat v^{\ell ,\varepsilon}_{k_2l_2} \vert \ \vert  \hat v^{s,\varepsilon}_{k_1l_1} \vert + \vert \hat v^{s,\varepsilon}_{k_2l_2} \vert \ \vert \hat v_{k_1l_1}^{\ell,\varepsilon} \vert + \vert \hat v^{\ell ,\varepsilon}_{k_2l_2} \vert \ \vert \hat v^{\ell ,\varepsilon}_{k_1l_1} \vert \Big) .  
\end{align}
By means of \eqref{lemma:v_hat_estimates_fluc} and \eqref{lem:estimates_v_L}, $\no \Psi_{\T{E}}^\ell(\tau) \no_{\T{TD}} \le C_\varepsilon \rho^{\frac{1}{4} - \frac{1}{2\varepsilon}} $, such that together with \eqref{lemma:v_hat_estimates_Ka} we find for the last term in \eqref{PSI_E} that
\begin{align}
\bno E_{re}(\rho) \int_0^t d\mu_3( \tau) D(\tau_3)  \Psi_{\T{E}}^\ell(\tau)  \bno_{\T{TD}} \le C_\varepsilon t^3 \rho^{2\varepsilon+\frac{1}{4} - \frac{1}{2\varepsilon}}.
\end{align}
Similarly, one finds for $ \Psi_{\T{E}}^{s,00}(t)$,
\begin{align}
& \bno \Vol{4} \sum_{(k_1,l_1)\in \I_0 } \sum_{(k_2, l_2 ) \in \I_0} \hat v^{s,\varepsilon}_{k_2l_2} \hat v^{s,\varepsilon}_{k_1l_1}    \Big( g_{k_2l_2}(\tau_2) g_{k_1l_1}(\tau_1)   \varphi_0 \Big) \otimes \Omega_0^{[l_2^*k_2l_1^*k_1]} \bno^2 \lesssim \V{0}^2.
\end{align}
Hence, with \eqref{Estimate:A0},
\begin{align}
\no  \Psi_{\T{E}}^{s,00} (t) \no_{\T{TD} } \lesssim t^3 \rho^{-\frac{1}{2} + 4\varepsilon}.
\end{align}
\begin{lemma}\label{lem:Psi_D1(t)} Let $\Psi_{\T{E}}^{s,nm}(t)$ be defined as in \eqref{PSI:E:S}. Then, under the same assumptions as in Theorem \ref{thm:main_thm}, there exists a positive constant $C$ such that
\end{lemma}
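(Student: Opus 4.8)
The plan is to follow the strategy used for $\Psi_{\T{A}}^{s,n}$ and for $\Psi_{\T{B,1}}^{s,n},\Psi_{\T{B,3}}^{s,n}$, but now to perform \emph{two} successive partial integrations in \eqref{PSI:E:S}, so as to generate \emph{both} energy denominators $E_{l_1}-E_{k_1}$ and $E_{l_2}-E_{k_2}$. Writing $g_{k_1l_1}(\tau_1)=\big(i(E_{l_1}-E_{k_1})\big)^{-1}\big(\partial_{\tau_1}e^{i(E_{l_1}-E_{k_1})\tau_1}\big)k_{k_1l_1}(\tau_1)$ and integrating by parts in $\tau_1$ produces three families of terms, each carrying the prefactor $1/(E_{l_1}-E_{k_1})$ (the $\tau_1=0$ boundary term vanishing since $\int_0^0=0$): a boundary term at $\tau_1=t$ in which $g_{k_1l_1}(t)$ appears; a term in which $\partial_{\tau_1}$ hits the upper limit of the $\tau_2$-integral, giving the integrand evaluated at $\tau_2=\tau_1$; and a term containing $\partial_{\tau_1}k_{k_1l_1}(\tau_1)$. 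For the first and the third families I then integrate by parts once more in $\tau_2$, using the phase of $g_{k_2l_2}(\tau_2)$ to produce the factor $1/(E_{l_2}-E_{k_2})$; for the second family the $\tau_1$-integrand now carries the merged phase $e^{i(E_{l_1}-E_{k_1}+E_{l_2}-E_{k_2})\tau_1}$, on which I integrate by parts in $\tau_1$ again, producing $1/(E_{l_1}-E_{k_1}+E_{l_2}-E_{k_2})$. Since $E_{l_i}-E_{k_i}=(|p_{l_i}|+|p_{k_i}|)(|p_{l_i}|-|p_{k_i}|)>0$ for $(k_i,l_i)\in\I$, one has $1/(E_{l_1}-E_{k_1}+E_{l_2}-E_{k_2})\le1/(E_{l_2}-E_{k_2})$, so together with the $1/(E_{l_1}-E_{k_1})$ extracted in the first step every resulting term carries a smallness factor bounded by $\big[(E_{l_1}-E_{k_1})(E_{l_2}-E_{k_2})\big]^{-1}$.

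A crucial structural point is that $D$ depends only on the innermost, phase-free variable $\tau_3$, over which one never integrates by parts: differentiating the upper limit $\tau_2$ of the $\tau_3$-integral only inserts a \emph{unitary} factor $D(\tau_2)$, never a derivative of $D$. Consequently the obstruction encountered for $\Psi_{\T{B}}$ --- the factor of order $\rho^{1/4}$ coming from $\no(V-E)\Omega_0\no$ in Lemma~\ref{D:DERIVATIVE} (cf.\ \eqref{lemma:v_hat_estimates_fluc}) --- does not arise here, which is what allows the bound to reach order $\rho^{-1/2}$. All the derivative terms that do occur contain at most one factor $\partial_{\tau_i}k_{k_il_i}(\tau_i)$, controlled by Lemma~\ref{lem:Estimates_partial_k(s)}.

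To estimate the norm of each of the $O(1)$ resulting terms I use unitarity of $D$ and of the multiplication operators $k_{kl}$, the orthogonality relation for $\Omega_0^{[l_2^*k_2l_1^*k_1]}$ displayed before \eqref{PSI_E} (which reduces $\no\Psi_{\T{E}}^{s,nm}(t)\no^2$ to the straight and the exchange contractions; in the exchange contractions one simply drops the superfluous Fourier factors via $|\hat v|\le C$), and Lemma~\ref{lem:Estimates_partial_k(s)} for the remaining $k$-derivatives, which cost at most $\rho^{4\varepsilon}$. This bounds each term by
\[
(1+t)^3\,\rho^{4\varepsilon}\Big(\Vol{4}\sum_{(k_1,l_1)\in\I_n}\frac{|\hat v^{s,\varepsilon}_{k_1l_1}|^2}{(E_{l_1}-E_{k_1})^2}\Big)^{1/2}\Big(\Vol{4}\sum_{(k_2,l_2)\in\I_m}\frac{|\hat v^{s,\varepsilon}_{k_2l_2}|^2}{(E_{l_2}-E_{k_2})^2}\Big)^{1/2}.
\]
By the lower bound \eqref{energy_difference} on the energy gap in $\I_n$ (resp.\ $\I_m$) and Corollary~\ref{COROLLARY}, each factor in brackets is $\lesssim\rho^{-(n-1)/(cM)}\V{n}\lesssim\rho^{-1/2+\varepsilon}\big(\rho^{1/(cM)}-\rho^{1/(2cM)}\big)$ (resp.\ with $m$), whence $\no\Psi_{\T{E}}^{s,nm}(t)\no_{\T{TD}}\lesssim(1+t)^3\rho^{-1/2+5\varepsilon}\big(\rho^{1/(cM)}-\rho^{1/(2cM)}\big)$ for $1\le n,m\le M$. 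Summing over $n,m$, choosing $M=\floor{\ln\rho}$ so that $M^2\rho^{1/(cM)}\lesssim\rho^{\varepsilon}$ (as in \eqref{limit_M}, using Remark~\ref{INTERCHANGE:LIMITS}), and multiplying by $E_{re}(\rho)\lesssim\rho^{2\varepsilon}$ then reproduces the bound \eqref{lemma:main-norm-D1}.

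The main obstacle, and the reason a single partial integration is insufficient, is that one integration by parts tames the energy denominator only over one of the two index sets: the residual count over the other, $\sum_m\V{m}=\Vol{4}\sum_{(k,l)\in\I}$, diverges (it includes all gaps up to $\rho^\varepsilon$ and behaves like $\rho^{1/2+3\varepsilon}$), so the sum over collision histories would not vanish as $\rho\to\infty$. The second partial integration is therefore indispensable. The delicate steps are: verifying that the merged phase $e^{i(E_{l_1}-E_{k_1}+E_{l_2}-E_{k_2})\tau_1}$ still oscillates fast enough, which rests on positivity of both energy gaps on $\I$; and the bookkeeping of the many boundary and derivative terms together with the exchange contractions of $\Omega_0^{[l_2^*k_2l_1^*k_1]}$, keeping the estimate in the factorized $\I_n\times\I_m$ form required to apply Corollary~\ref{COROLLARY} for each index separately.
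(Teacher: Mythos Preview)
Your approach for the case $1\le n,m\le M$ is essentially the same as the paper's: two successive partial integrations, the middle family carrying the merged phase $e^{i(E_{l_1}-E_{k_1}+E_{l_2}-E_{k_2})\tau_1}$, the observation that $D(\tau_3)$ is never differentiated, and the final factorized estimate via \eqref{energy_difference} and Corollary~\ref{COROLLARY}. The paper packages the outcome of the double integration by parts into the operators $G^{(1)},G^{(2)},G^{(3)}$ of \eqref{DEF:G:1}--\eqref{DEF:G:3}, but the mechanism is exactly what you describe.

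Two small corrections. First, it is not true that each term contains \emph{at most one} factor $\partial_{\tau_i}k_{k_il_i}$: the term corresponding to the paper's $G^{(3)}$ carries the product $(\partial_{\tau_2}k_{k_2l_2})(\partial_{\tau_1}k_{k_1l_1})$, and the first summand of $G^{(2)}$ contains $\partial_{\tau_1}\big(k_{k_2l_2}(\tau_1)k_{k_1l_1}(\tau_1)\big)$. Both are still bounded by $C\rho^{4\varepsilon}$ via \eqref{lem:Estimates_partial_k(s)_3} and \eqref{lem:Estimates_partial_k(s)_4}, so your displayed estimate survives. Second, the lemma also asserts bounds for $\Psi_{\T{E}}^{s,0n}$ and $\Psi_{\T{E}}^{s,n0}$ with $1\le n\le M$, which you do not treat. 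There the double integration by parts is unavailable (no lower bound on the energy gap over $\I_0$); the paper performs a \emph{single} partial integration on the $\I_n$-variable and controls the remaining $\I_0$-sum directly by $\V{0}\lesssim\rho^{-1/2+\varepsilon}$ from \eqref{Estimate:A0}, yielding the stated $\rho^{-1/2+3\varepsilon}\sqrt{\rho^{1/(cM)}-\rho^{1/(2cM)}}$.
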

\vspace{-0.6cm}
\begin{align}
\no \Psi_{\T{E}}^{s,0n}(t) \no_{\T{TD}} + \no \Psi_{\T{E}}^{s,n0}(t) \no_{\T{TD}} & \lesssim (1+t)^3 \rho^{-\frac{1}{2} + 3\varepsilon } \sqrt{ \rho^{\frac{1}{cM}} -\rho^{\frac{1}{2cM}} } , \hspace{0.5cm}  1\le n \le M,\\
\no \Psi_{\T{E}}^{s,nm} (t) \no_{\T{TD}} & \lesssim (1+t)^3 \rho^{-\frac{1}{2} + 5\varepsilon} \Big( \rho^{\frac{1}{cM}} -\rho^{\frac{1}{2cM}} \Big),  \hspace{0.5cm} 1\le n,m\le M,
\end{align}
\textit{holds for all $t>0$.}\\
 
With $M=\floor{\ln \rho}$ we then obtain that (recall \eqref{limit_M} and Remark \ref{INTERCHANGE:LIMITS}) 
\begin{align}
\sum_{n=1}^{M} \Big(\no \Psi_{\T{E}}^{s,n0}(t)\no_{\T{TD}} + \no \Psi_{\T{E}}^{s,0n}(t) \no_{\T{TD}} \Big) & \lesssim  (1+t)^3 \rho^{-\frac{1}{2} + 4\varepsilon }  ,   \\
\sum_{n=1}^M\no \Psi_{\T{E}}^{s,nm}(t) \no_{\T{TD}} &  \lesssim (1+t)^3 \rho^{-\frac{1}{2} + 6\varepsilon }.
\end{align}
This proves the bound for $\no \Psi_{\T{E}}(t) \no_{\T{TD}}$ in \eqref{lemma:main-norm-D1}.

\begin{proof}[Proof of Lemma~\ref{lem:Psi_D1(t)}]

We define  for $1\le n \le M$,\allowdisplaybreaks
\begin{align*}
\Psi_{\T{E,1}}^{s,n0}(t,\tau_1) & =  \Vol{4}   \sum_{( k_1, l_1) \in \I_n } \sum_{( k_2, l_2) \in \I_0}  \hat v^{s,\varepsilon}_{k_2l_2}  \hat v^{s,\varepsilon}_{k_1l_1} \Big( g_{k_2l_2}(\tau_1) \frac{ g_{k_1l_1}(t) - g_{k_1l_1}(\tau_1) }{i ( E_{l_1}-E_{k_1} ) }  \varphi_0 \Big) \otimes \Psi_0^{[l_2^*k_2l_1^*k_1]},\\
\Psi_{\T{E,2}}^{s,n0}(\tau) & = \Vol{4}  \sum_{ (k_1, l_1) \in \I_n } \sum_{ (k_2,l_2) \in \I_0}  \hat v^{s,\varepsilon}_{k_2l_2}   \hat v^{s,\varepsilon}_{k_1l_1}
\Big( g_{k_2l_2}(\tau_2) \frac{ e^{i( E_{l_1}-E_{k_1} ) \tau_1}   \partial_{\tau_1} k_{k_1l_1}(\tau_1) }{i ( E_{l_1}-E_{k_1} ) }    \varphi_0 \Big) \otimes \Psi_0^{[l_2^*k_2l_1^*k_1]},\\
\Psi_{\T{E,1}}^{s,0n}(\tau) &=   \Vol{4}  \sum_{ ( k_1,l_1) \in \I_0 } \sum_{ (k_2, l_2) \in \I_n}   \hat v^{s,\varepsilon}_{k_2l_2}  \hat v^{s,\varepsilon}_{k_1l_1} \Big( \frac{ g_{k_2l_2}(\tau_1) - g_{k_2l_2}(\tau_2) }{i ( E_{l_2}-E_{k_2} )}  g_{k_1l_1}(\tau_1) \varphi_0 \Big) \otimes \Psi_0^{[l_2^*k_2l_1^*k_1]},\\
\Psi_{\T{E,2}}^{s,0n}(\tau) & =  \Vol{4}   \sum_{( k_1,l_1) \in \I_0 } \sum_{ ( k_2 , l_2 ) \in \I_n}  \hat v^{s,\varepsilon}_{k_2l_2}  \hat v^{ s,\varepsilon}_{k_1l_1} \Big( \frac{ e^{i ( E_{l_2} - E_{k_2}) \tau_2 }  \partial_{\tau_2} k_{k_2l_2}(\tau_2) }{i ( E_{l_2}-E_{k_2}) }  g_{k_1l_1}(\tau_1) \varphi_0 \Big) \otimes \Psi_0^{[l_2^*k_2l_1^*k_1]}.
\end{align*}
By partial integration, this leads to
\begin{align}
\Psi_{\T{E}}^{s,n0}(t)&  =   E_{re}(\rho) \int_0^t d\mu_2(\tau) D(\tau_2)  \Psi_{\T{E,1}}^{s,n0}(t,\tau_1) - E_{re}(\rho) \int_0^t d\mu_3(\tau)  D(\tau_3) \Psi_{\T{E,2}}^{s,n0}(\tau),\\
\Psi_{\T{E}}^{s,0n}(t) & =   E_{re}(\rho) \int_0^t d\mu_2(\tau) D(\tau_2)  \Psi_{\T{E,1}}^{s,0n}(\tau) - E_{re}(\rho) \int_0^t  d\mu_3(\tau) D(\tau_3) \Psi_{\T{E,2}}^{s,0n}(\tau).
\end{align}
Furthermore,  for all $1\le n,m\le M$, and $X\in \{ 1,2,3\}$, we set
\begin{align}
\Psi_{\T{E,X} }^{s,nm}(t,\tau) & =  \Vol{4}    \sum_{ (k_1, l_1) \in \I_n } \sum_{ (k_2, l_2) \in \I_m} \hat v^{s,\varepsilon}_{k_2l_2} \hat v^{s,\varepsilon}_{k_1l_1}
\Big( G_{k_2l_2k_1l_2}^{\T{(X)}}(t,\tau)  \varphi_0 \Big) \otimes \Omega_0^{[l_2^*k_2l_1^*k_1]} ,
\end{align}
where we introduce the operators $G^{\T{(X)}}_{k_2l_2k_1l_1}(t,\tau): \mathcal H_y\to \mathcal H_y$, defined by
\begin{align}
G_{k_2l_2k_1l_1}^{\T{(1)}}(t,\tau) & =   \frac{g_{k_2l_2}(t) g_{k_1l_1}(t) - g_{k_2l_2}(\tau_1) g_{k_1l_1}(\tau_1)}{i (E_{l_2}-E_{k_2}) i(E_{l_2}-E_{k_2}+ E_{l_1}-E_{k_1})}  - \frac{ g_{k_2l_2}(\tau_1) \Big(  g_{k_1l_1}(t)- g_{k_1l_1}(\tau_1) \Big) }{ i(E_{l_2}-E_{k_2}) i (E_{l_1}-E_{k_1}) }, \label{DEF:G:1}\\
G_{k_2l_2k_1l_1}^{\T{(2)}}(t,\tau) & =  \frac{e^{i (E_{l_1} - E_{k_1} + E_{l_2} - E_{k_2} )\tau_1} \partial_{\tau_1} \Big( k_{k_2l_2}(\tau_1) k_{k_1l_1}(\tau_1) \Big) } {i (E_{l_2}-E_{k_2}) i(E_{l_2}-E_{k_2}+ E_{l_1}-E_{k_1})} 
- \frac{ g_{k_2l_2}(\tau_2) e^{i ( E_{l_1}-E_{k_1})\tau_1 } \Big( \partial_{\tau_1} k_{k_1l_1} (\tau_1) \Big) }{	i	(E_{l_2}-E_{k_2} ) i ( E_{l_1}-E_{k_1})} \nonumber\\
& - \frac{ e^{i ( E_{l_2} - E_{k_2})\tau_1} \Big(\partial_{\tau_1}k_{k_2l_2}(\tau_1) \Big) \Big( g_{k_1l_1}(t) - g_{k_1l_1}(\tau_1) \Big) }{ 	i	( E_{l_2}-E_{k_2} )	i ( E_{l_1}-E_{k_1}) }, \label{DEF:G:2}\\
G_{k_2l_2k_1l_1}^{\T{(3)}}(t,\tau) &  =   \frac{e^{i (E_{l_2}-E_{k_2} ) \tau_2}\Big( \partial_{\tau_2}k_{k_2l_2}(\tau_2)\Big) e^{i (E_{k_1}-E_{l_1})\tau_1} \Big( \partial_{\tau_1} k_{k_1 l_1} (\tau_1) \Big)}{	i	( E_{k_2}-E_{l_2} ) i ( E_{l_1}-E_{k_1} ) }. \label{DEF:G:3}
\end{align}
With a two-fold partial integration one now finds 
\begin{align}
\Psi_{\T{E}}^{s,nm}(t) = E_{re}(\rho) \Big[& \int_0^t d\tau_1 D(\tau_1) \Psi_{\T{E,1}}^{s,nm}(t,\tau) \nonumber \\
& ~~~ -  \int_0^t d\mu_2(\tau)D(\tau_2) \Psi_{\T{E,2}}^{s,nm}(t,\tau) + \int_0^t d\mu_3(\tau)D(\tau_3) \Psi_{\T{E,3}}^{s,nm}(t,\tau)\Big].
\end{align}
It remains to compute the norm of the above wave functions.
Recalling that the scalar product produces four Kronecker-deltas,
\begin{align}
 \no \Psi_{\T{E,1}}^{s,n0}(t,\tau_1) \no^2 & \lesssim  \Vol{4}    \sum_{(k_1,l_1) \in \I_n  }  \frac{1 }{E_{l_1}-E_{k_1}} 
 \Vol{4}   \sum_{ ( k_2 , l_2 ) \in \I_0}   \times \nonumber \\
& \hspace{2.5cm}\times   \sum_{ ( m_1 , n_1 ) \in \I_n  }   \frac{1 }{ E_{n_1}-E_{m_1}} 
 \sum_{ (m_2 ,n_2 ) \in \I_0}  \lsp  \Omega_0^{[l_2^*k_2l_1^*k_1]},  \Omega_0^{[n_2^*m_2n_1^*m_1]} \rsp \nonumber \\
& \lesssim  \Big( \rho^{-\left( \frac{n-1}{cM} \right)} \V{n} \Big)  \V{0}.
\end{align}
Using in addition \eqref{lem:Estimates_partial_k(s)_1},
\begin{align}
\no \Psi_{\T{E,2}}^{s,n0}(\tau) \no^2 & \lesssim   \Vol{4}    \sum_{ ( k_1 , l_1 ) \in \I_n  }  \frac{\no \partial_{\tau_1} k_{k_1l_1}(\tau_1) \varphi_0\no }{E_{l_1}-E_{k_1}}  \Vol{4}   \sum_{ ( k_2 , l_2) \in \I_0} \times \nonumber \\
& \times   \sum_{ ( m_1, n_1) \in \I_n  }   \frac{\no \partial_{\tau_1} k_{m_1n_1}(\tau_1) \varphi_0\no}{ E_{n_1}-E_{m_1}}  \sum_{( m_2, n_2) \in \I_0}   \lsp  \Omega_0^{[l_2^*k_2l_1^*k_1]},  \Omega_0^{[n_2^*m_2n_1^*m_1]} \rsp \nonumber \\
& \lesssim \rho^{4\varepsilon}     \Big( \rho^{-\left( \frac{n-1}{cM} \right)} \V{n} \Big)  \V{0}.
\end{align}
By means of \eqref{Estimate:An}, this shows the first part of the first bound in the lemma. We omit the proof for $\Psi_{\T{E}}^{s,0n}(\tau )$ since it works exactly the same way as for $\Psi_{\T{E}}^{s,n0}(\tau )$.\\
\\
For the second bound of the lemma, note that by using \eqref{energy_difference} and Lemma \ref{lem:Estimates_partial_k(s)}, one finds
\begin{align} 
\chi_{\I_n} \big( (k_1,l_1) \big) \chi_{\I_m}\big(  (k_2,l_2 ) \big)   \no G^{\T{(1)}}_{k_2l_2k_1l_1}(t,\tau) \varphi_0 \no & \lesssim \rho^{-\left( \frac{n-1}{2cM} \right) } \rho^{-\left( \frac{m-1}{2cM} \right) } ,	\label{BOUND:G:1} \\
\chi_{\I_n} \big( (k_1,l_1) \big) \chi_{\I_m}\big(  (k_2,l_2 ) \big) \no G^{(\T{2})}_{k_2l_2k_1l_1}(t,\tau ) \varphi_0 \no  & \lesssim \rho^{4\varepsilon} \rho^{-\left( \frac{n-1}{2cM} \right) } \rho^{-\left( \frac{m-1}{2cM} \right) } , \label{BOUND:G:2} \\
\chi_{\I_n} \big( (k_1,l_1) \big) \chi_{\I_m}\big(  (k_2,l_2 ) \big) \no G^{(\T{3})}_{k_2l_2k_1l_1}(t,\tau ) \varphi_0 \no  &\lesssim \rho^{4\varepsilon} \rho^{-\left( \frac{n-1}{2cM} \right) } \rho^{-\left( \frac{m-1}{2cM} \right) } \label{BOUND:G:3} .
\end{align} 
Then, for all $1\le n \le M$ and $X\in \{1,2,3\}$,
\begin{align}
\no \Psi_{\T{E,X}}^{s,nm}(t,\tau) \no^2  & \lesssim  \Vol{8}   \sum_{( k_1 , l_1 )\in \I_n }   \sum_{ ( k_2 ,l_2) \in \I_m } \sum_{ ( m_1 , n_1) \in \I_n }   \sum_{ ( m_2 , n_2) \in \I_m }  \no G^{\T{(X)}}_{k_2l_2k_1l_1}(t,\tau) \varphi_0 \no \times \nonumber \\
& \hspace{4cm} \times \no G^{\T{(X)}}_{m_2n_2m_1n_1}(t,\tau) \varphi_0 \no~ \lsp  \Omega_0^{[l_2^*k_2l_1^*k_1]},  \Omega_0^{[n_2^*m_2n_1^*m_1]} \rsp \nonumber \\
& \lesssim \rho^{8\varepsilon}  \Big( \rho^{-\left( \frac{n-1}{cM} \right)}\V{n} \Big)  \Big( \rho^{-\left( \frac{m-1}{cM} \right)} \V{m} \Big).
\end{align}
The bounds for the remaining expressions have been derived in \eqref{Estimate:An}.
\end{proof}

\subsubsection{Derivation of the Bound for $ \no \Psi_{ \T{F} }(t) \no_{ \T{TD} }$ \label{sec:bound_F}}
The estimate for $\no \Psi_{ \T{F} }(t) \no_{ \T{TD} }$ is more tedious, since we have to deal with an additional $(V-E)$, i.e., we have to take into account one more collision, starting from $\Psi_4$. This leads to many possible collision histories, which we write down in \eqref{PSI:FS1}-\eqref{PSI:F:L:3} below. After that, we use the same techniques as in the previous sections, taking care in addition of the more tedious combinatorics.\\

We first rewrite the potential
\begin{align}
	(V-E) = \Vol{2} \underset{(l_3\neq k_3)}{\sum_{l_3=1}^{\infty} \sum_{k_3=1}^{\infty}} \hat v_{k_3l_3} e^{i ( p_{k_3} - p_{l_3} ) y } a^*(p_{l_3}) a(p_{k_3})
\end{align}
in terms of fermionic creation and annihilation operators, cf.\ \eqref{creation_annihilation_op}. This can be used to decompose the wave function $\Psi_{ \T{F} }(t)$ (for $M\ge 1$) in terms of
\begin{align}
\Psi_{\T{F} }(t) = & \sum_{n,m=0}^M \Big( \Psi_{\T{F,1} }^{s,nm}(t) + \Psi_{\T{F,2} }^{s,nm}(t) +  \Psi_{ \T{F,3} }^{s,nm}(t)\Big)\nonumber \\
&\hspace{2cm}  + \int_0^t d\mu_3(\tau) D(\tau_3) \Big( \Psi_{\T{F,1}}^\ell (\tau) + \Psi_{\T{F,2}}^\ell (\tau)  + \Psi_{\T{F,3}}^\ell (\tau) \Big),
\end{align} 
where (recall the definition \eqref{HAT:W:ELL} for $\hat{w}^{\ell,\varepsilon}$)\allowdisplaybreaks
\begin{align}
\Psi_{\T{F,1}}^{s,nm}(t) & =  \Vol{6} \sum_{ (k_1, l_1) \in \I_n} \sum_{( k_2,l_2)\in \I_m} \sum_{k_3=1}^N \sum_{l_3=N+1}^\infty  \hat v_{k_3l_3} \hat v^{s,\varepsilon}_{k_2l_2}   \hat v^{s,\varepsilon}_{k_1l_1} \times \label{PSI:FS1} \\
& \hspace{4.0cm} \times \int_0^{t}d\mu_3(\tau) D(\tau_3) \Big( g_{k_3l_3}(\tau_3) g_{k_2l_2}(\tau_2) g_{k_3l_3}(\tau_1)\varphi_0 \Big) \otimes \Omega_0^{[l_3^* k_3 l_2^* k_2 l_1^* k_1 ]} ,\nonumber \\
\Psi_{\T{F,2}}^{s,nm}(t) &=  \Vol{6} \sum_{ ( k_1, l_1) \in \I_n} \sum_{ ( k_2 , l_2 ) \in \I_m} \sum_{l_3=N+1}^{\infty}  \hat v_{l_2l_3} \hat v^{s,\varepsilon}_{k_2l_2}   \hat v^{s,\varepsilon}_{k_1l_1}  \times \nonumber \\
& \hspace{4cm}  \times \int_0^{t}d\mu_3(\tau)  D(\tau_3)  \Big( g_{l_2l_3}(\tau_3) g_{k_2l_2}(\tau_2) g_{k_1l_1}(\tau_1)  \varphi_0 \Big) \otimes \Omega_0^{[l_3^* k_2 l_1^* k_1]}  \nonumber\\
& +  \Vol{6} \sum_{ ( k_1, l_1) \in \I_n} \sum_{( k_2 , l_2 ) \in \I_m} \sum_{l_3=N+1}^{\infty}  \hat v_{l_1l_3} \hat v^{s,\varepsilon}_{k_2l_2}   \hat v^{s,\varepsilon}_{k_1l_1} \times \nonumber \\
& \hspace{4cm}  \times  \int_0^{t}d\mu_3(\tau)  D(\tau_3) \Big( g_{l_1 l_3 }(\tau_3) g_{k_2l_2}(\tau_2) g_{k_1l_1}(\tau_1)\varphi_0 \Big) \otimes \Omega_0^{[l_3^*l_2^*k_2k_1]} \nonumber\\
& + \Vol{6} \sum_{ ( k_1, l_1) \in \I_n} \sum_{ ( k_2 , l_2 ) \in \I_m} \sum_{k_3=1}^{N}  \hat v_{k_3 k_2} \hat v^{s,\varepsilon}_{k_2l_2}   \hat v^{s,\varepsilon}_{k_1l_1}  \times  \nonumber \\
& \hspace{4cm} \times  \int_0^{t}d\mu_3(\tau)  D(\tau_3) \Big( g_{k_3k_2}(\tau_3) g_{k_2l_2}(\tau_2) g_{k_1 l_1}(\tau_1) \varphi_0 \Big) \otimes \Omega_0^{[ k_3 l_2^* l_1^* k_1]} \nonumber\\
& + \Vol{6} \sum_{ ( k_1, l_1) \in \I_n} \sum_{ ( k_2 , l_2 ) \in \I_m} \sum_{k_3=1}^{N}  \hat v_{k_3 k_1} \hat v^{s,\varepsilon}_{k_2l_2}   \hat v^{s,\varepsilon}_{k_1l_1} \times  \label{PSI:FS2} \\
& \hspace{4cm} \times \int_0^{t}ds\mu_3(\tau)  D(\tau_3) \Big( g_{k_3 k_1 }(\tau_3) g_{k_2l_2}(\tau_2) g_{k_1l_1}(\tau_1)\varphi_0 \Big) \otimes \Omega_0^{[k_3 l_2^* k_2 l_1^*]} ,\nonumber\\
\Psi_{\T{F,3}}^{s,nm}(t) & =   \Vol{6}  \sum_{ ( k_1, l_1) \in \I_n} \sum_{ ( k_2 , l_2 ) \in \I_m} \hat v_{l_2k_2}  \hat v^{s,\varepsilon}_{k_2l_2}  \hat v^{s,\varepsilon}_{k_1l_1} \int_0^{t} d\mu_3(\tau)   D(\tau_3) \Big( g_{l_2k_2}(\tau_3) g_{k_2l_2}(\tau_2) g_{k_1l_1}(\tau_1) \varphi_0 \Big) \otimes \Omega_0^{[ l_1^* k_1]} \nonumber\\
& +  \Vol{6} \sum_{ ( k_1 , l_1 ) \in \I_n} \sum_{ ( k_2 , l_2 ) \in \I_m}  \hat v_{l_1k_1}   \hat v^{s,\varepsilon}_{k_2l_2}   \hat v^{s,\varepsilon}_{k_1l_1}   \int_0^{t}d\mu_3(\tau)  D(\tau_3) \Big( g_{l_1k_1}(\tau_3) g_{k_2l_2}(\tau_2) g_{k_1l_1}(\tau_1)  \varphi_0 \Big) \otimes \Omega_0^{[ l_2^* k_2 ]} \nonumber\\
& +  \Vol{6} \sum_{ ( k_1, l_1) \in \I_n} \sum_{ ( k_2 , l_2 ) \in \I_m}   \hat v_{l_2 k_1} \hat v^{s,\varepsilon}_{k_2l_2}   \hat v^{s,\varepsilon}_{k_1l_1} \int_0^{t}d\mu_3(\tau)  D(\tau_3) \Big( g_{l_2k_1}(\tau_3) g_{k_2l_2}(\tau_2) g_{k_1l_1}(\tau_1) \varphi_0 \Big) \otimes \Omega_0^{[ k_2 l_1^* ]}   \nonumber\\
& +  \Vol{6} \sum_{ ( k_1, l_1) \in \I_n} \sum_{ ( k_2 , l_2 ) \in \I_m}  \hat v_{l_1 k_2} \hat v^{s,\varepsilon}_{k_2l_2}   \hat v^{s,\varepsilon}_{k_1l_1} \int_0^{t}d\mu_3(s)  D(\tau_3) \Big( g_{l_1 k_2}(\tau_3) g_{k_2l_2}(\tau_2) g_{k_1l_1}(\tau_1)   \varphi_0 \Big) \otimes \Omega_0^{[l_2^* k_1]}, \label{PSI:F:S3}
\end{align}
and \allowdisplaybreaks
\begin{align}
\Psi_{\T{F,1}}^\ell(\tau) & =  \Vol{6}  \sum_{k_1,k_2,k_3=1}^N  \sum_{l_1,l_2,l_3=N+1}    \hat v_{k_3l_3} \hat w^{\ell,\varepsilon}_{k_2l_2k_1l_1} \Big(  g_{k_3l_3}(\tau_3) g_{k_2l_2}(\tau_2) g_{k_1l_1}(\tau_1) \varphi_0 \Big) \otimes \Omega_0^{[l_3^* k_3 l_2^* k_2 l_1^*k_1]}, \label{F:ELL:1}\\
\Psi_{\T{F,2}}^\ell(\tau) &=  \Vol{6} \sum_{k_1,k_2=1}^N  \sum_{l_1,l_2=N+1}^\infty \sum_{l_3=N+1}^{\infty}  \hat v_{l_2 l_3} \hat w^{\ell,\varepsilon}_{k_2l_2k_1l_1} \Big(  g_{l_2 l_3}(\tau_3) g_{k_2l_2}(\tau_2) g_{k_1l_1}(\tau_1)  \varphi_0 \Big) \otimes \Omega_0^{[l_3^* k_2 l_1^* k_1 ]}   \nonumber\\
& +  \Vol{6} \sum_{k_1,k_2=1}^N  \sum_{l_1,l_2=N+1}^\infty \sum_{l_3=N+1}^{\infty}   \hat v_{l_1 l_3 } \hat w^{\ell,\varepsilon}_{k_2l_2k_1l_1}  \Big( g_{l_1 l_3 }(\tau_3) g_{k_2l_2}(\tau_2) g_{k_1l_1}(\tau_1) \varphi_0 \Big) \otimes \Omega_0^{[l_3^* l_2^* k_2 k_1 ]}   \nonumber\\
& +  \Vol{6}  \sum_{k_1,k_2=1}^N  \sum_{l_1,l_2=N+1}^\infty  \sum_{k_3=1}^{N}  \hat v_{k_3 k_2}  \hat w^{\ell,\varepsilon}_{k_2l_2k_1l_1} \Big(  g_{k_3 k_2}(\tau_3) g_{k_2l_2}(\tau_2) g_{k_1l_1}(\tau_1)  \varphi_0 \Big) \otimes \Omega_0^{[ k_3  l_2^*  l_1^* k_1]}  \nonumber\\
& +  \Vol{6} \sum_{k_1,k_2=1}^N  \sum_{l_1,l_2=N+1}^\infty  \sum_{k_3=1}^{N}  \hat v_{k_3 k_1} \hat w^{\ell,\varepsilon}_{k_2l_2k_1l_1}  \Big(  g_{k_3 k_1 }(\tau_3) g_{k_2l_2}(\tau_2) g_{k_1l_1}(\tau_1) \varphi_0\Big) \otimes  \Omega_0^{[ k_3  l_2^*k_2l_1^*] }  , \label{F:2:ELL}\\
\Psi_{\T{F,3}}^\ell	(\tau) & =  \Vol{6}  \sum_{k_1,k_2=1}^N  \sum_{l_1,l_2=N+1}^\infty  \hat v_{l_2k_2}  \hat w^{\ell,\varepsilon}_{k_2l_2k_1l_1}  \Big(  g_{l_2k_2}(\tau_3) g_{k_2l_2}(\tau_2) g_{k_1l_1}(\tau_1) \varphi_0 \Big) \otimes \Omega_0^{ [l_1^* k_1]}  \nonumber\\
& +  \Vol{6} \sum_{k_1,k_2=1}^N  \sum_{l_1,l_2=N+1}^\infty  \hat v_{l_1k_1}  \hat w^{\ell,\varepsilon}_{k_2l_2k_1l_1}  \Big( g_{l_1k_1}(\tau_3) g_{k_2l_2}(\tau_2) g_{k_1l_1}(\tau_1) \varphi_0 \Big) \otimes  \Omega^{[l_2^* k_2]}    \nonumber\\
& + \Vol{6} \sum_{k_1,k_2=1}^N  \sum_{l_1,l_2=N+1}^\infty  \hat v_{l_2 k_1}\hat w^{ \ell,\varepsilon}_{k_2l_2k_1l_1}   \Big( g_{l_2k_1}(\tau_3) g_{k_2l_2}(\tau_2) g_{k_1l_1}(\tau_1) \varphi_0 \Big) \otimes \Omega_0^{[ k_2 l_1^* ]} \nonumber\\
& +  \Vol{6} \sum_{k_1,k_2=1}^N  \sum_{l_1,l_2=N+1}^\infty  \hat v_{l_1 k_2 } \hat w^{\ell,\varepsilon}_{k_2l_2k_1l_1}  \Big(  g_{l_1 k_2 }(\tau_3) g_{k_2l_2}(\tau_2) g_{k_1l_1}(\tau_1)   \varphi_0 \Big) \otimes \Omega_0^{[ l_2^* k_1 ]} . \label{PSI:F:L:3}
\end{align}
The different contributions in $\Psi_{\T{F}}(t)$ correspond to the different collision histories in $(V-E) \Psi_{\T{4}}$.\\
\\
\textbf{Bounds for} $\no \Psi_{\T{F,1}}^{s,nm}(t) \no_{\T{TD}}$ \textbf{and} $\no \Psi_{\T{F,1}}^\ell(\tau) \no_{\T{TD}}$\textbf{.}\ We use that for $k_i,m_i \le N$, $N+1\le l_i,n_i$ ($i=1,2,3$), the scalar product
\begin{align*}
\langle  \Omega_0^{[l_3^* k_3 l_2^* k_2 l_1^*k_1]}   , \Omega_0^{[n_3^* m_3 n_2^* m_2 n_1^* m_1]}  \rangle_{\mathcal H_N} & = \Big( \sum_{\sigma \in S_3} \delta_{l_3 n_{\sigma(3)} } \delta_{l_2 n_{\sigma(2)} } \delta_{l_1 n_{\sigma(1)}}  \Big) 
\delta^\perp_{l_3l_2} \delta^\perp_{l_2l_1}   \delta^\perp_{l_1l_3} \delta^\perp_{n_3n_2} \delta^\perp_{n_2n_1}   \delta^\perp_{n_1n_3}\times
\\
 \times & \Big( \sum_{\sigma \in S_3} \delta_{k_3 m_{\sigma(3)} } \delta_{k_2 m_{\sigma(2)} } \delta_{k_1 m_{\sigma(1)} } \Big) \delta^\perp_{k_3k_2} \delta^\perp_{k_2k_1}   \delta^\perp_{k_1k_3} \delta^\perp_{m_3m_2} \delta^\perp_{m_2m_1}   \delta^\perp_{m_1m_3},
\end{align*}
produces six Kronecker deltas in each summand, in order to find
\begin{align}
\no\Psi_{\T{F,1}}^\ell(\tau)\no^2 & \lesssim \Vol{8} \sum_{k_1,k_2=1}^N \sum_{l_1,l_2=N+1}^{\infty}   \vert  \hat w^{\ell , \varepsilon}_{k_2l_2k_1l_1}  \vert \Vol{4}  \sum_{ k_3=1}^N  \sum_{ l_3=N+1}^\infty \vert \hat v_{k_3l_3}  \vert .
\end{align}
Then, by means of \eqref{lemma:v_hat_estimates_fluc} and \eqref{lem:estimates_v_L}, $\no\Psi_{\T{F,1}}^\ell (\tau)\no_{\T{TD}} \le C_\varepsilon \rho^{\frac{1}{4}-\frac{1}{2\varepsilon}+\varepsilon}$, which leads to
\begin{align}
\bno \int_0^t d\mu_3(\tau) D(\tau_3) \Psi_{\T{F,1}}^\ell(\tau) \bno_{\T{TD}} \le C_\varepsilon t^3   \rho^{\frac{1}{2}-\frac{1}{2\varepsilon}}.
\end{align}
Similarly in $\Psi_{\T{F,1}}^{s,00}(t)$, we estimate the norm,
\begin{align}
& \bno \Vol{6} \sum_{( k_1, l_1) \in \I_0} \sum_{( k_2, l_2 ) \in \I_0} \sum_{k_3=1}^N \sum_{l_3=N+1}^\infty  \hat v_{k_3l_3} \hat v^{s,\varepsilon}_{k_2l_2}   \hat v^{s,\varepsilon}_{k_1l_1}  \Big( g_{k_3l_3}(\tau_3) g_{k_2l_2}(\tau_2) g_{k_1l_1}(\tau_1)\varphi_0 \Big) \otimes \Omega_0^{[l_3^* k_3 l_2^* k_2 l_1^* k_1 ]} \bno^2 \nonumber\\
& \hspace{9cm} \lesssim \V{0}^2  \Vol{4} \sum_{k_3=1}^N \sum_{l_3=N+1}^\infty \vert \hat v_{k_3l_3} \vert .
\end{align}
Using \eqref{lemma:v_hat_estimates_fluc} in combination with \eqref{Estimate:A0}, leads to
\begin{align}
\no \Psi_{\T{F,1}}^{s,00}(t)\no_{\T{TD}} & \lesssim t^3 \rho^{-\frac{1}{4}+\varepsilon}.
\end{align}
\begin{lemma}
\label{lem:Psi_FS1(t)} Let $\Psi_{\T{F,1} }^{s,nm}(t)$ be defined as in \eqref{PSI:FS1}. Then, under the same assumptions as in Theorem \ref{thm:main_thm}, there exists a positive constant $C$ such that
\end{lemma}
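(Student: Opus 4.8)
The plan is to treat $\Psi_{\T{F,1}}^{s,nm}(t)$ in complete analogy with $\Psi_{\T{E}}^{s,nm}(t)$ in the proof of Lemma~\ref{lem:Psi_D1(t)}; the only new feature is the additional, unconstrained momentum pair $(k_3,l_3)$ produced by the third collision in $(V-E)\Psi_4$, which will be absorbed using \eqref{lemma:v_hat_estimates_fluc}. As there, I would distinguish the cases in which at least one of $n,m$ vanishes from the case $1\le n,m\le M$. For $\Psi_{\T{F,1}}^{s,n0}(t)$ with $n\ge 1$ I write $g_{k_1l_1}(\tau_1)=e^{i(E_{l_1}-E_{k_1})\tau_1}k_{k_1l_1}(\tau_1)$ and integrate by parts once in $\tau_1$; for $\Psi_{\T{F,1}}^{s,0n}(t)$ the same is done in $\tau_2$ with $g_{k_2l_2}$; and for $1\le n,m\le M$ I perform a two-fold partial integration in $\tau_1$ and $\tau_2$, which produces exactly the operators $G^{\T{(X)}}_{k_2l_2k_1l_1}(t,\tau)$ of \eqref{DEF:G:1}--\eqref{DEF:G:3}, now multiplied by the extra unitary factor $g_{k_3l_3}(\tau_3)$ and accompanied by the extra sum $\Vol{2}\sum_{k_3=1}^N\sum_{l_3=N+1}^\infty\hat v_{k_3l_3}(\cdots)$. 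No partial integration is performed in $\tau_3$, so $D(\tau_3)$ enters only through its unitarity and Lemma~\ref{D:DERIVATIVE} is not needed here.

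After these integrations every summand carries one (in the $n0$/$0n$ cases) or two (in the $nm$ case) energy denominators, which by \eqref{energy_difference} are bounded on $\I_n\times\I_m$ by $\rho^{-(n-1)/(2cM)}$, respectively $\rho^{-(n-1)/(2cM)}\rho^{-(m-1)/(2cM)}$, exactly as in \eqref{BOUND:G:1}--\eqref{BOUND:G:3}; all $\partial_{\tau_i}k$-factors created in the process are controlled by Lemma~\ref{lem:Estimates_partial_k(s)} at the cost of a factor $\rho^{O(\varepsilon)}$. I would then compute the squared norm using the scalar-product identity for $\langle\Omega_0^{[l_3^* k_3 l_2^* k_2 l_1^* k_1]},\Omega_0^{[n_3^* m_3 n_2^* m_2 n_1^* m_1]}\rangle$ stated just before Lemma~\ref{lem:Psi_FS1(t)}, which is a sum over $\sigma\in S_3$ of products of six Kronecker deltas together with $\perp$-deltas forbidding coincidences among the created (resp.\ among the annihilated) momenta. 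In the diagonal term $\sigma=\mathrm{id}$ the pairs $(k_3,l_3)$ and $(m_3,n_3)$ are identified, the corresponding sum decouples and equals $\Vol{4}\sum_{k_3,l_3}|\hat v_{k_3l_3}|^2\lesssim\rho^{1/2}$ by \eqref{lemma:v_hat_estimates_fluc}, and the remaining sums over $\I_n$ and $\I_m$ are then handled by Corollary~\ref{COROLLARY}. This yields $\no\Psi_{\T{F,1}}^{s,nm}(t)\no_{\T{TD}}\lesssim (1+t)^{3}\rho^{O(\varepsilon)}\rho^{1/4}\big(\rho^{-(n-1)/(cM)}\V{n}\big)^{1/2}\big(\rho^{-(m-1)/(cM)}\V{m}\big)^{1/2}\lesssim (1+t)^{3}\rho^{-1/4+O(\varepsilon)}\big(\rho^{1/(cM)}-\rho^{1/(2cM)}\big)$ for $1\le n,m\le M$, and the corresponding bound with only one factor $\sqrt{\rho^{1/(cM)}-\rho^{1/(2cM)}}$ for the $n0$ and $0n$ cases, where one sum runs over $\I_0$ and contributes $\V{0}\lesssim\rho^{-1/2+\varepsilon}$. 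Summing over $n,m$ and then choosing $M=\floor{\ln\rho}$ exactly as below Lemma~\ref{lem:Psi_D1(t)} completes the argument, and Remark~\ref{INTERCHANGE:LIMITS} applies verbatim.

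The step I expect to be the main obstacle is the bookkeeping for the off-diagonal permutations $\sigma\neq\mathrm{id}$: there a Kronecker delta may glue the fresh pair $(k_3,l_3)$ to a constrained pair of the conjugate copy, for instance $l_3=n_1$ with $(m_1,n_1)\in\I_n$, so that the $l_3$-sum is no longer free and $\hat v_{k_3l_3}$ need not carry a small momentum transfer. I would resolve this by applying $|\hat v_{k_3l_3}\hat v_{m_3n_3}|\le\tfrac{1}{2}\big(|\hat v_{k_3l_3}|^2+|\hat v_{m_3n_3}|^2\big)$ before resolving the deltas, so that in each resulting term exactly one $k$-$l$ sum remains unconstrained and contributes $\rho^{1/2}$ through \eqref{lemma:v_hat_estimates_fluc} with $q=2$, while the remaining deltas pin the other momenta to the sets $\I_n$ and $\I_m$, to which Corollary~\ref{COROLLARY} again applies. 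The $\perp$-deltas only reduce the number of terms and are simply dropped once they have been used to prevent over-counting. Apart from this combinatorial care, each individual estimate is of a type already encountered in the analysis of $\Psi_{\T{B}}$, $\Psi_{\T{C}}$, $\Psi_{\T{D}}$ and $\Psi_{\T{E}}$.
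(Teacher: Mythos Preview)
Your proposal is correct and follows essentially the same route as the paper: one partial integration in $\tau_1$ (resp.\ $\tau_2$) for the $n0$ (resp.\ $0n$) case, a two-fold partial integration producing the operators $G^{\T{(X)}}_{k_2l_2k_1l_1}$ for $1\le n,m\le M$, then evaluation of the squared norm via the six-delta identity for $\langle\Omega_0^{[l_3^*k_3l_2^*k_2l_1^*k_1]},\Omega_0^{[n_3^*m_3n_2^*m_2n_1^*m_1]}\rangle$, Corollary~\ref{COROLLARY} for the $\I_n,\I_m$ sums, and \eqref{lemma:v_hat_estimates_fluc} for the free $(k_3,l_3)$ sum. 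The only difference is that the paper handles the off-diagonal permutations $\sigma\neq\mathrm{id}$ implicitly (absorbing them into the $\lesssim$), whereas you discuss them explicitly; in fact those terms are subleading, since after the deltas one of the previously free $(k_3,l_3)$ indices gets pinned to a shell $\I_n$ and the remaining single free sum with two $\hat v$-factors contributes only $O(\rho^{O(\varepsilon)})$ by Cauchy--Schwarz/Parseval rather than $\rho^{1/2}$, so your AM--GM workaround is safe but not even needed for the stated bound.
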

\vspace{-0.7cm}
\begin{align}
 \no \Psi_{\T{F,1}}^{s,0n}(t) \no_{\T{TD}} +  \no \Psi_{\T{F,1}}^{s,n0}(t) \no_{\T{TD}} & \lesssim (1+t)^3 \rho^{-\frac{1}{4}+3\varepsilon} \sqrt{ \rho^{\frac{1}{cM}} - \rho^{\frac{1}{2cM}} } , \hspace{0.5cm} 1\le n \le M, \\
  \no \Psi_{\T{F,1}}^{s,nm}(t) \no_{\T{TD}}   & \lesssim  (1+t)^3 \rho^{-\frac{1}{4}+5\varepsilon} \Big( \rho^{\frac{1}{cM}} - \rho^{\frac{1}{2cM}} \Big) ,\hspace{0.5cm} 1\le n,m \le M.
\end{align}  
\textit{holds for all $t>0$.}\\

Next, we set again $M=\floor{\ln \rho}$ and find, using $M \rho^{\frac{1}{2cM}} \lesssim \rho^{\varepsilon}$ as well as $M^2 \rho^{\frac{1}{cM}} \lesssim \rho^{\varepsilon}$,
\begin{align}
\sum_{n=1}^{M}  \Big( \no \Psi_{\T{F,1}}^{s,0n}(t) \no_{\T{TD}} + \no \Psi_{\T{F,1}}^{s,n0}(t) \no_{\T{TD}} \Big)&  \lesssim (1+t)^3 \rho^{-\frac{1}{4}+4\varepsilon}, \\
\sum_{n,m=1}^{M} \no \Psi_{\T{F,1}}^{s,nm}(t) \no_{\T{TD}} & \lesssim  (1+t)^3 \rho^{-\frac{1}{4}+6\varepsilon}.
\end{align}
\begin{proof}[Proof of Lemma \ref{lem:Psi_FS1(t)}]
We define for $1\le n \le M$,
\begin{align}
\Psi_{\T{F,11}}^{s,n0}(t,\tau) & =  \Vol{6} \sum_{(k_1, l_1) \in \I_n }  \sum_{(k_2, l_2)\in \I_0 } \sum_{k_3=1}^N \sum_{l_3=N+1}^\infty  \hat v_{k_3l_3} \hat v^{s,\varepsilon}_{k_2l_2}   v^{s,\varepsilon}_{k_1l_1}   \times \\
& \hspace{4cm} \times \Big( g_{k_3l_3}(\tau_2) g_{k_2l_2}(\tau_1)\frac{ g_{k_1l_1}(t) - g_{k_1l_1}(\tau_1) }{i	 ( E_{l_1}-E_{k_1} ) }  \varphi_0 \Big) \otimes \Omega_0^{[l_3^* k_3 l_2^*k_2l_1^*k_1]} , \nonumber \\
\Psi_{\T{F,12}}^{s,n0}(\tau) & = \Vol{6} \sum_{ (k_1, l_1)  \in \I_n }  \sum_{(k_2, l_2) \in \I_0 } \sum_{k_3=1}^N \sum_{l_3=N+1}^\infty  \hat v_{k_3l_3}\hat v^{s,\varepsilon}_{k_2l_2}   \hat v^{s,\varepsilon}_{k_1l_1}  \times \\
& \hspace{4cm} \times \Big(  g_{k_3l_3}(\tau_3)  g_{k_2l_2}(\tau_2) \frac{ e^{i ( E_{l_1}-E_{k_1} ) \tau_1}   \partial_{\tau_1} k_{k_1l_1}(\tau_1) }{i( E_{l_1}-E_{k_1} )}       \varphi_0 \Big) \otimes \Omega_0^{[l_3^* k_3 l_2^*k_2l_1^*k_1]}, \nonumber\\
\Psi_{\T{F,11}}^{s,0n}(\tau) & =  \Vol{6} \sum_{ (k_1, l_1)  \in \I_0 }  \sum_{(k_2, l_2) \in \I_n } \sum_{k_3=1}^N \sum_{l_3=N+1}^\infty  \hat v_{k_3l_3} \hat v^{s,\varepsilon}_{k_2l_2} v^{s,\varepsilon}_{k_1l_1}  \times \\
& \hspace{4cm} \times \Big( g_{k_3l_3}(\tau_2)  \frac{ g_{k_2l_2}(\tau_1) - g_{k_2l_2}(\tau_2) }{i ( E_{l_2}-E_{k_2} ) }  g_{k_1l_1}(\tau_1) \varphi_0 \Big) \otimes \Omega_0^{[l_3^* k_3 l_2^*k_2l_1^*k_1]}, \nonumber\\
\Psi_{\T{F,12}}^{s,0n}(\tau) & =  \Vol{6} \sum_{(k_1, l_1) \in \I_0 }  \sum_{( k_2, l_2) \in \I_n } \sum_{k_3=1}^N \sum_{l_3=N+1}^\infty  \hat v_{k_3l_3}  \hat v^{s,\varepsilon}_{k_2l_2}  v^{s,\varepsilon}_{k_1l_1}   \times \\
& \hspace{4cm} \times \Big( g_{k_3l_3}(\tau_3)\frac{ e^{i ( E_{l_2}-E_{k_2} ) \tau_2 }  \partial_{\tau_2} k_{k_2l_2}(\tau_2) }{i ( E_{l_2}-E_{k_2} ) }  g_{k_1l_1}(\tau_1)  \varphi_0 \Big) \otimes \Omega_0^{[l_3^* k_3 l_2^*k_2l_1^*k_1]}.\nonumber
\end{align}
Via partial integration we obtain
\begin{align}
\Psi_{ \T{F,1} }^{s,n0} (t) = & \int_0^t d\mu_2(\tau) D(\tau_2)  \Psi_{ \T{F,11} }^{s,n0} (t,\tau)  -  \int_0^t d\mu_3(\tau)  D(\tau_3) \Psi_{ \T{F,12} }^{ s,n0 }(\tau),\\
\Psi_{ \T{F,1} }^{s,0n} (t) = & 	\int_0^t d\mu_2(\tau) D(\tau_2)  \Psi_{ \T{F,11} }^{s,0n} (\tau) - \int_0^t  d\mu_3(\tau) D(\tau_3) \Psi_{\T{F,12}}^{s,0n} (\tau).
\end{align}
We further set for $1\le n,m\le M$, and for $G^{\T{(X)}}_{k_2l_2k_1l_1}(t,\tau)$, $X\in \{1,2,3\}$, defined as in \eqref{DEF:G:1}-\eqref{DEF:G:3},
\begin{align*}
\Psi_{\T{F,1X} }^{s,nm}(t,\tau) & =  \Vol{6} \sum_{( k_1 ,l_1) \in \I_n }  \sum_{ (k_2, l_2 ) \in \I_m } \sum_{k_3=1}^N \sum_{l_3=N+1}^\infty   \hat v_{k_3l_3} \hat v^{s,\varepsilon}_{k_2l_2}  \hat v^{s,\varepsilon}_{k_1l_1} \Big( g_{k_3l_3}(\tau_{\T{X}}) G_{k_2l_2k_1l_1}^{\T{(X)}}(t,\tau) \varphi_0 \Big) \otimes \Omega_0^{[l_3^* k_3 l_2^*k_2l_1^*k_1]}.
\end{align*}
Partial integration leads again to
\begin{align*}
\Psi_{\T{F,1}}^{s,nm}(t)  = \int_0^t d\tau_1 D(\tau_1) \Psi_{\T{F,11}}^{s,nm }(t,\tau)  - \int_0^t d\mu_2(\tau) D(\tau_2) \Psi_{\T{F,12} }^{s,nm }(t,\tau)  + \int_0^t d\mu_3(\tau) D(\tau_3) \Psi_{\T{F,13}}^{s,nm}(t,\tau).
\end{align*}
Using \eqref{energy_difference} in combination with \eqref{lem:Estimates_partial_k(s)_1}, we find for $Y\in \{1,2\}$,
\begin{align}
\no  \Psi^{s,n0}_{\T{F,1Y}} (\tau)\no^2 & \lesssim \Vol{12} \sum_{(k_1, l_1) \in \I_n } \Big( \frac{1+\no \partial_{\tau_1}k_{k_1l_1}(\tau_1)\varphi_0\no }{  E_{l_1}-E_{k_1}  } \Big) \sum_{( k_2 ,l_2)\in \I_0 } \sum_{k_3=1}^N \sum_{l_3=N+1}^\infty \vert \hat v_{k_3l_3}\vert \times \nonumber\\
& \hspace{-1.8cm}\times \sum_{(m_1,n_1) \in \I_n }  \Big( \frac{ 1+\no \partial_{\tau_1}k_{m_1n_1}(\tau_1)\varphi_0\no  }{ E_{n_1}-E_{m_1}  } \Big) \sum_{(m_2, n_2) \in \I_0 } \sum_{m_3=1}^N \sum_{n_3=N+1}^\infty \vert \hat v_{m_3n_3}\vert     \langle  \Omega_0^{[l_3^* k_3 l_2^* k_2 l_1^*k_1]}   ,  \Omega_0^{[n_3^* m_3 n_2^* m_2 n_1^* m_1]}  \rangle , \nonumber\\
& \lesssim  \rho^{4\varepsilon} \Big( \rho^{-\left(\frac{n-1}{cM}\right)} \V{n} \Big) \V{0} \frac{1}{L^4} \sum_{k_3=1}^N\sum_{l_3=N+1}^\infty \vert \hat v_{k_3l_3}\vert .
\end{align}
In complete analogy one finds the same bound for $\Psi_{\T{F,1Y}}^{s,0n}(\tau)$, $Y\in \{1,2\}$. Next, using \eqref{energy_difference} together with \eqref{BOUND:G:1}-\eqref{BOUND:G:3}, we find for $X\in \{1,2,3\}$,
\begin{align}
\no \Psi_{\T{F,1X}}^{s,nm}(\tau)\no^2  & \lesssim   \frac{1}{L^{12}} \sum_{(k_1, l_1) \in \I_n }     \sum_{( k_2, l_2) \in \I_m } \sum_{k_3=1}^N \sum_{l_3=N+1}^\infty \vert \hat v_{k_3l_3}\vert   \no G^{\T{(X)}}_{k_2 l_2 k_1 l_1}(t,\tau)\varphi_0 \no \times   \nonumber\\
& \hspace{-1.8cm} \times \sum_{(m_1,n_1) \in \I_n }    \sum_{( m_2,n_2 )\in \I_m } \sum_{m_3=1}^N \sum_{n_3=N+1}^\infty \vert \hat v_{m_3n_3}\vert   \no G^{\T{(X)}}_{m_2 n_2 m_1 n_1}(t,\tau)\varphi_0\no \   \langle  \Omega_0^{[l_3^* k_3 l_2^* k_2 l_1^*k_1]}   ,  \Omega_0^{[n_3^* m_3 n_2^* m_2 n_1^* m_1]}  \rangle , \nonumber\\
 & \lesssim \rho^{8\varepsilon} \Big( \rho^{-\left(\frac{n-1}{cM}\right)} \V{n} \Big) \Big( \rho^{-\left(\frac{m-1}{cM}\right)} \V{m} \Big)  \frac{1}{L^4}  \sum_{k_3=1}^N \sum_{l_3=N+1}^\infty   \vert \hat v_{k_3l_3}\vert^2  .
\end{align}
The stated estimates then follow from \eqref{lemma:v_hat_estimates_fluc} and Corollary \ref{COROLLARY}.
\end{proof}
\noindent \textbf{Bounds for} $\no \Psi_{ \T{F,2} }^{s,nm} (t) \no_{ \T{TD} }$ \textbf{and} $\no \Psi_{ \T{F,2} }^\ell (\tau) \no_{ \T{TD} }$\textbf{.} In $\Psi_{ \T{F,2} }^\ell (t)$, and similarly in $\Psi_{ \T{F,2} }^{s,nm} (t)$, we denote the four lines separately by $\Psi_{\T{F,2i}}^\ell (t)$, $i=1,2,3,4$. We derive the estimates only for the first line, whereas for $i=2,3,4$ everything works in exact analogy to the case $i=1$. Using the Kronecker deltas in ($k_2,k_1,m_2,m_1\le N$ and $l_3,l_1,n_3,n_1\ge N+1$)
\begin{align}
\langle \Omega_0^{[l_3^*k_2l_1^*k_1]} , \Omega_0^{[n_3^*m_2n_1^*m_1]} \rangle_{\mathcal H_{gas}} = & (\delta_{l_3 n_3} \delta_{l_1 n_1} + \delta_{l_3n_1} \delta_{l_1n_3} ) \delta_{l_3l_1}^\perp \delta_{n_3n_1}^\perp \times \nonumber \\
& \times (\delta_{k_2 m_2} \delta_{k_1 m_1} + \delta_{k_2 m_1} \delta_{k_1 m_2} ) \delta_{k_1k_2}^\perp \delta_{m_1m_2}^\perp,
\end{align}
one finds
\begin{align}
& \no\Psi_{\T{F,21}}^\ell (\tau)\no^2 \lesssim \Vol{4}  \sum_{k_1,k_2=1}^N \sum_{l_1,l_2=N+1}^\infty \vert \hat w^{\ell,\varepsilon}_{k_2l_2k_1l_1} \vert \Vol{2}   \sum_{l_3=N+1}^{\infty}  \vert \hat v_{l_2l_3} \vert   \Vol{2}   \sum_{n_2=N+1}^\infty  ( \vert  \hat v_{n_2l_3} \vert + \vert \hat v_{n_2l_1} \vert ).
\end{align}
By means of \eqref{lem:estimates_v_L} and also \eqref{lemma:v_hat_estimates_p}, we obtain $\no\Psi_{\T{F,21}}^\ell (\tau)\no_{\T{TD}} \le C_\varepsilon t^3 \rho^{\frac{1}{4} + 2\varepsilon -\frac{1}{2\varepsilon}} $, and hence,
\begin{align}
\bno \int_0^t d\mu_3(\tau) D(\tau_3) \Psi_{\T{F,21}}^\ell (\tau) \bno_{\T{TD}} \le C_\varepsilon t^3 \rho^{\frac{1}{4}+2\varepsilon -\frac{1}{2\varepsilon}}.
\end{align}
Similarly, we estimate in $\Psi_{\T{F,21}}^{s,00}(t)$ the norm
\begin{align}
& \bno 	\Vol{6} \sum_{ ( k_1 , l_1 ) \in \I_0} \sum_{ ( k_2 , l_2 ) \in \I_0} \sum_{l_3=N+1}^{\infty}  \hat v_{l_2l_3} \hat v^{s,\varepsilon}_{k_2l_2}   \hat v^{s,\varepsilon}_{k_1l_1}   \Big( g_{l_2l_3}(\tau_3) g_{k_2l_2}(\tau_2) g_{k_1l_1}(\tau_1)  \varphi_0  \Big) \otimes \Omega_0^{[l_3^* k_2 l_1^* k_1]}  \bno^2 \nonumber \\ 
& \hspace{2.5cm}\lesssim \Vol{4} \sum_{ ( k_1 , l_1 ) \in \I_0} \Vol{4} \sum_{ ( k_2 , l_2 ) \in \I_0}  \frac{1}{L^2} \sum_{l_3=N+1}^{\infty}  \vert \hat v_{l_2l_3} \vert  \frac{1}{L^2}   \sum_{n_2=N+1}^{\infty} \Big( \vert \hat v^{s,\varepsilon}_{k_1n_2}\vert + \vert \hat v^{s,\varepsilon}_{k_2 n_2}\vert \Big) .
\end{align}
Using \eqref{lemma:v_hat_estimates_p} in combination with \eqref{Estimate:A0}, this leads to
{\begin{align}
\no \Psi_{\T{F,21}}^{s,00}(t)\no_{\T{TD}} & \lesssim t^3 \rho^{-\frac{1}{2}+\varepsilon} \Big(\rho^{2\varepsilon} +  C_\varepsilon \rho^{-1/ \varepsilon }\Big).
\end{align}
\begin{lemma}
\label{lem:Psi_FS2(t)} Let $\Psi_{\T{F,2}}^{s,nm}(t)$ be defined as in \eqref{PSI:FS2}. Then, under the same assumptions as in Theorem \ref{thm:main_thm}, there exists a positive constant $C$ such that
\end{lemma}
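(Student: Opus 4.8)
The plan is to mimic the proof of Lemma~\ref{lem:Psi_FS1(t)}. The wave function $\Psi_{\T{F,2}}^{s,nm}(t)$ differs from $\Psi_{\T{F,1}}^{s,nm}(t)$ only in that the third collision is a \emph{recollision} with one of the momenta $k_1,l_1,k_2,l_2$ already present in $\Psi_4$, rather than with a fresh particle--hole pair $(k_3,l_3)$. As a consequence the overlap of the two gas states, e.g.\ $\langle\Omega_0^{[l_3^*k_2l_1^*k_1]},\Omega_0^{[n_3^*m_2n_1^*m_1]}\rangle$ for the first line of \eqref{PSI:FS2}, produces only four Kronecker deltas instead of six, so that after using them a single free sum over a gas momentum survives ($l_3\ge N+1$ in the first two lines, $k_3\le N$ in the last two). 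First I would observe that the four lines in \eqref{PSI:FS2} are structurally identical up to relabelling, so it suffices to treat the first one, $\Psi_{\T{F,21}}^{s,nm}(t)$; the surviving sum is then exactly of the type $\Vol{2}\sum|\hat v_{l_2l_3}|$ estimated in \eqref{lemma:v_hat_estimates_p}, contributing a factor $\rho^{2\varepsilon}+C_\varepsilon\rho^{-1/\varepsilon}$.

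Next, as in Sections~\ref{sec:Derivation_A}--\ref{sec:bound_F}, I would integrate by parts in the time variables to convert the oscillating phases $e^{i(E_{l_j}-E_{k_j})\tau_j}$ into inverse energy differences. For the boundary cases $\Psi_{\T{F,21}}^{s,n0}(t)$ and $\Psi_{\T{F,21}}^{s,0n}(t)$ a single partial integration suffices and produces boundary terms of the same shape as the $\Psi_{\T{F,1Y}}^{s,n0}$ and $\Psi_{\T{F,1Y}}^{s,0n}$ of Lemma~\ref{lem:Psi_FS1(t)}, with $\hat v_{k_3l_3}$ replaced by $\hat v_{l_2l_3}$ and $\Omega_0^{[l_3^*k_3l_2^*k_2l_1^*k_1]}$ replaced by $\Omega_0^{[l_3^*k_2l_1^*k_1]}$; their norms are controlled by \eqref{energy_difference} for the energy denominator, Lemma~\ref{lem:Estimates_partial_k(s)} for the time derivatives of the $k_{kl}$, the four Kronecker deltas together with \eqref{lemma:v_hat_estimates_p} for the surviving sum, and \eqref{Estimate:A0}, \eqref{Estimate:An} from Corollary~\ref{COROLLARY} for the remaining sums over $\I_0$ and $\I_n$. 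For the general case $1\le n,m\le M$ I would perform a two-fold partial integration, introducing precisely the operators $G^{\T{(X)}}_{k_2l_2k_1l_1}(t,\tau)$, $X\in\{1,2,3\}$, of \eqref{DEF:G:1}--\eqref{DEF:G:3} together with an accompanying factor $g_{l_2l_3}(\tau_{\T{X}})$, and invoke the uniform bounds \eqref{BOUND:G:1}--\eqref{BOUND:G:3}. Collecting these ingredients, the norm squared of each of the three resulting wave functions is bounded by a constant times $\rho^{8\varepsilon}\big(\rho^{2\varepsilon}+C_\varepsilon\rho^{-1/\varepsilon}\big)^{2}\big(\rho^{-(n-1)/(cM)}\V{n}\big)\big(\rho^{-(m-1)/(cM)}\V{m}\big)$, and for the mixed $n0$, $0n$ cases one of the two sums over $\I_n$, $\I_m$ is replaced by a sum over $\I_0$, estimated by $\V{0}$ via \eqref{Estimate:A0}. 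Finally I would set $M=\floor{\ln\rho}$ and use $M\rho^{1/(2cM)}\lesssim\rho^\varepsilon$ and $M^2\rho^{1/(cM)}\lesssim\rho^\varepsilon$ exactly as in \eqref{limit_M}; summing over $1\le n,m\le M$ then yields bounds of the form $(1+t)^3\big(\rho^{-1/4+(\text{const})\varepsilon}+C_\varepsilon\rho^{(\text{const})-1/(2\varepsilon)}\big)$ — in fact subdominant with respect to the bounds for $\Psi_{\T{F,1}}$ obtained in Lemma~\ref{lem:Psi_FS1(t)}, hence harmless for \eqref{lemma:main_lemma_2} — where the interchange of the thermodynamic and $M\to\infty$ limits is justified as in Remark~\ref{INTERCHANGE:LIMITS}.

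The hard part is not a new idea but the combinatorial bookkeeping: because the third collision reuses an already excited momentum, the Wick pairings of the four gas momenta overlap in several ways and there are four lines in \eqref{PSI:FS2} to process, so one must verify in each case that exactly four Kronecker deltas survive and that the single remaining free sum is genuinely of the type covered by \eqref{lemma:v_hat_estimates_p} — a lattice sum of $|\mathcal F[v]|$ shifted by a fixed momentum — rather than a harder double sum. A secondary point requiring care is the accumulation of $\varepsilon$-powers (the recollision sums contribute up to $\rho^{4\varepsilon}$ in the norm squared, the operators $G^{\T{(X)}}$ up to $\rho^{8\varepsilon}$), so one has to keep the final exponent of the form $-1/4$ plus a fixed multiple of $\varepsilon$, with that multiple small enough that Theorem~\ref{thm:main_thm} still follows, while simultaneously tracking the $C_\varepsilon$-type error terms generated by the $\hat w^{\ell,\varepsilon}$-contributions in $\Psi_{\T{F,2}}^\ell$.
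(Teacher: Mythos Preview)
Your proposal is correct and follows essentially the same route as the paper: treat only the first of the four lines of \eqref{PSI:FS2}, use one partial integration for the mixed $n0$/$0n$ cases and a two-fold partial integration via the operators $G^{\T{(X)}}_{k_2l_2k_1l_1}$ of \eqref{DEF:G:1}--\eqref{DEF:G:3} for $1\le n,m\le M$, then exploit that the gas overlap $\langle\Omega_0^{[l_3^*k_2l_1^*k_1]},\Omega_0^{[n_3^*m_2n_1^*m_1]}\rangle$ produces four Kronecker deltas leaving a single free sum of the type \eqref{lemma:v_hat_estimates_p}, and close with \eqref{energy_difference}, Lemma~\ref{lem:Estimates_partial_k(s)}, \eqref{BOUND:G:1}--\eqref{BOUND:G:3} and Corollary~\ref{COROLLARY}. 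The only organizational discrepancy is that the paper states the lemma with bounds still depending on $M$ (of the form $\rho^{-\frac{1}{2}+\cdots+\frac{1}{cM}}$) and performs the choice $M=\floor{\ln\rho}$ and the summation over $n,m$ \emph{after} the lemma, whereas you fold that step into the proof; either way the resulting contribution to $\no\Psi_{\T F}(t)\no_{\T{TD}}$ is indeed subdominant to that of $\Psi_{\T{F,1}}$.
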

\vspace{-0.7cm}
\begin{align}
\no \Psi_{\T{F,2}}^{s,0n}(t) \no_{\T{TD}} + \no \Psi_{\T{F,2}}^{s,n0}(t) \no_{\T{TD}} & \lesssim (1+t)^3 \rho^{-\frac{1}{2}+  3\varepsilon +\frac{1}{2cM} }  \Big(\rho^{2\varepsilon} + C_\varepsilon \rho^{-1/\varepsilon }\Big)	 , \hspace{0.5cm} 1\le n \le M,  \\
\no \Psi_{\T{F,2}}^{s,nm}(t) \no_{\T{TD}}   & \lesssim (1+t)^3 \rho^{-\frac{1}{2} + 5\varepsilon + \frac{1}{cM}}  \Big(\rho^{2\varepsilon} + C_\varepsilon\rho^{-1/\varepsilon }\Big),  \hspace{0.5cm} 1\le n,m \le M,
\end{align}
\textit{holds for all $t\ge 0$.}\\

For $M=\floor{\ln \rho}$, one obtains similar as before
\begin{align}
\sum_{n=1}^{M}  \Big( \no \Psi_{\T{F,2}}^{s,n0}(t) \no_{\T{TD}} + \no \Psi_{\T{F,2}}^{s,0n}(t) \no_{\T{TD}} \Big) & \lesssim (1+t)^3 \rho^{-\frac{1}{2}+4\varepsilon} \Big(\rho^{2\varepsilon} + C_\varepsilon \rho^{-1 / \varepsilon }\Big), \\
\sum_{n,m=1}^{M} \no \Psi_{\T{F,2}}^{s,nm}(t) \no_{\T{TD}}&  \lesssim (1+t)^3  \rho^{-\frac{1}{2}+6\varepsilon} \Big(\rho^{2\varepsilon} + C_\varepsilon \rho^{-1/ \varepsilon }\Big).
\end{align}
\begin{proof}[Proof of Lemma \ref{lem:Psi_FS2(t)}]
We denote the four lines in $\Psi_{\T{F,2}}^{s,nm}(t)$ respectively by $\Psi_{\T{F,2i}}^{s,nm}(t)$, $i=1,...,4$. We prove the Lemma for the first line. The same estimates are readily verified for the other three lines as well. Let us define
\begin{align}
\Psi_{\T{F,211}}^{s,n0}(t,\tau) &=  \frac{1}{L^6} \sum_{ (k_1,l_1) \in \I_n} \sum_{ ( k_2, l_2) \in \I_0} \sum_{l_3=N+1}^{\infty}  \hat v_{l_2 l_3} \hat v^{s,\varepsilon}_{k_2l_2}  \hat v^{s,\varepsilon}_{k_1l_1} \times \\
& \hspace{4.5cm} \times \Big( g_{l_2 l_3}(\tau_2) g_{k_2l_2}(\tau_1) \frac{ g_{k_1l_1}(t) - g_{k_1l_1}(\tau_1) }{i (E_{l_1}-E_{k_1} ) }  \varphi_0  \Big) \otimes  \Omega_0^{[l_3^*k_2 l_1^*k_1  ]},\nonumber\\
\Psi_{\T{F,212}}^{s,n0}(\tau) &=  \frac{1}{L^6} \sum_{ ( k_1,l_1) \in \I_n} \sum_{ ( k_2, l_2) \in \I_0} \sum_{l_3=N+1}^{\infty}  \hat v_{l_2 l_3} \hat v^{s,\varepsilon}_{k_2l_2} \hat v^{s,\varepsilon}_{k_1l_1} \times   \\
& \hspace{4.5cm}\times \Big( g_{l_2l_3}(\tau_3)  g_{k_2l_2}(\tau_2) \frac{ e^{i (E_{l_1}-E_{k_1} ) \tau_1}   \partial_{\tau_1} k_{k_1l_1}(\tau_1) }{i (E_{l_1}-E_{k_1} ) }  \varphi_0 \Big) \otimes  \Omega_0^{[l_3^*k_2 l_1^*k_1  ]},\nonumber\\
\Psi_{\T{F,211}}^{s,0n}(\tau) &=  \frac{1}{L^6} \sum_{ ( k_1, l_1 ) \in \I_0} \sum_{ ( k_2 , l_2 ) \in \I_n} \sum_{l_3=N+1}^{\infty}  \hat v_{l_2 l_3}   \hat v^{s,\varepsilon}_{k_2l_2}  v^{s,\varepsilon}_{k_1l_1}  \times    \\
&\hspace{4.5cm} \times \Big(  g_{l_2l_3}(\tau_2)  \frac{ g_{k_2l_2}(\tau_1) - g_{k_2l_2}(\tau_2) }{i (E_{l_2}-E_{k_2} ) }  g_{k_1l_1}(\tau_1) \varphi_0 \Big) \otimes \Omega_0^{[l_3^*k_2 l_1^*k_1  ]},\nonumber\\
\Psi_{\T{F,212}}^{s,0n}(\tau) &=  \frac{1}{L^6} \sum_{ ( k_1, l_1 ) \in \I_0} \sum_{ (k_2, l_2) \in \I_n} \sum_{l_3=N+1}^{\infty}  \hat v_{l_2 l_3}  \hat v^{s,\varepsilon}_{k_2l_2}  v^{s,\varepsilon}_{k_1l_1} \times   \\
& \hspace{4.5cm} \times \Big(  g_{l_2l_3}(\tau_3)\frac{ e^{i ( E_{l_2}-E_{k_2} ) \tau_2 }  \partial_{\tau_2} k_{k_2l_2}(\tau_2) }{i ( E_{l_2}-E_{k_2} )}  g_{k_1l_1}(\tau_1)  \varphi_0 \Big) \otimes \Omega_0^{[l_3^*k_2 l_1^*k_1  ]}.\nonumber
\end{align}
By partial integration,
\begin{align}
 \Psi_{\T{F,21}}^{s,n0}(t) & = \int_0^t d\mu_2(\tau) D(\tau_2) \Psi_{\T{F,211}}^{s,n0}(t,\tau) - \int_0^t d\mu_3(\tau) D(\tau_3) \Psi_{\T{F,212}}^{s,n0}(\tau), \\
 \Psi_{\T{F,21}}^{s,0n}(t)& = \int_0^t d\mu_2(\tau) D(\tau_2) \Psi_{\T{F,211}}^{s,0n}(\tau) - \int_0^t d\mu_3(\tau) D(\tau_3) \Psi_{\T{F,212}}^{s,0n}(\tau).
\end{align}
We set further, with $G^{\T{(X)}}_{k_2l_2k_1l_1}(t,\tau)$, $X\in \{1,2,3\}$ as in \eqref{DEF:G:1}-\eqref{DEF:G:3},
\begin{align*}
\Psi_{\T{F,21X}}^{s,nm}(t,\tau) & =   \frac{1}{L^6} \sum_{ ( k_1,l_1) \in \I_n} \sum_{ ( k_2,l_2 ) \in \I_m} \sum_{l_3=N+1}^{\infty}  \hat v_{l_2 l_3 } \hat v^{s,\varepsilon}_{k_2l_2}  \hat v^{s,\varepsilon}_{k_1l_1} \Big( g_{l_2l_3}(\tau_{\T{X}})  G^{\T{(X)}}_{k_2l_2k_1l_1}(t,\tau) \varphi_0 \Big)  \otimes \Omega_0^{[l_3^*k_2 l_1^*k_1  ]}. \nonumber
\end{align*}
By partial integration again,
\begin{align*}
\Psi_{\T{F,21}}^{s,nm}(t) = \int_0^t d\tau_1 D(\tau_1) 
\Psi_{\T{F,211}}^{s,nm}(t,\tau) + \int_0^t d\mu_2(\tau) D(\tau_2)
 \Psi_{\T{F,212}}^{s,nm} (t,\tau)  + \int_0^t d\mu_3(\tau)D(\tau_3) \Psi_{\T{F,213}}^{s,nm}(t,\tau).
\end{align*}
Next, we compute using \eqref{lem:Estimates_partial_k(s)_1}, for $Y\in \{1,2\}$,
\begin{align*}
\no \Psi_{\T{F,21Y}}^{s,n0}(\tau)\no^2 & \lesssim \rho^{4\varepsilon} \Vol{4} \sum_{(k_1,l_1)\in \I_n} \rho^{-\left(\frac{n-1}{c M}\right)} \Vol{4} \sum_{(k_2,l_2)\in \I_0}  \frac{1}{L^{2}} \sum_{l_3=N+1}^{\infty}  \vert \hat v_{l_3l_2} \vert  \frac{1}{L^{2}} \sum_{n_2=N+1}^{\infty} \Big(\vert \hat v_{l_3n_2}\vert + \vert \hat v_{l_1n_2} \vert \Big) .
\end{align*}
Similarly, one derives the same estimate for $\no \Psi_{\T{F,21Y}}^{s,0n}(\tau)\no$. Furthermore, using \eqref{BOUND:G:1}-\eqref{BOUND:G:3}, one finds
\begin{align}
\no \Psi_{\T{F,21X} }^{s,nm}(\tau)\no^2 & \lesssim \rho^{8\varepsilon} \Vol{4} \sum_{(k_1,l_1)\in \I_n} \rho^{-\left(\frac{n-1}{c M}\right)}  \Vol{4} \sum_{(k_2,l_2)\in \I_m} \rho^{-\left(\frac{m-1}{c M}\right)} \times \\
& \hspace{4cm} \times \frac{1}{L^{2}}  \sum_{l_3=N+1}^{\infty}  \vert \hat v_{l_3l_2} \vert  \frac{ 1 }{L^{2}}  \sum_{n_2=N+1}^\infty  \Big(\vert \hat v_{l_3n_2}\vert + \vert \hat v_{l_1n_2} \vert \Big).
\end{align}
The proof of the lemma then follows from \eqref{lemma:v_hat_estimates_p}, \eqref{Estimate:A0} and \eqref{Estimate:An}.
\end{proof}

\noindent \textbf{Bounds for} $\no \Psi_{\T{F,3}}^{s,nm}(t) \no_{\T{TD}}$ \textbf{and} $\no \Psi_{\T{F,3}}^\ell (\tau) \no_{\T{TD}}$\textbf{.} We denote the four different lines by $\Psi_{\T{F,3i}}^{s,nm}(t)$, respectively $\Psi_{\T{F,3i}}^\ell (\tau)$ and derive the bounds only for $i=1$ since for $i=2,3,4$, the same estimates are derived analogously.\ Using $\lsp \Omega_0^{[l_1^*k_1]},  \Omega_0^{[n_1^*m_1]} \rsp = \delta_{l_1 n_1} \delta_{k_1 m_1}$ for $k_1,m_1\le N$ and $N+1\le l_1,n_1$, we find
\begin{align}
\no \Psi_{\T{F,31}}^\ell	(\tau) \no^2 & \lesssim  \Vol{8}  \sum_{k_1,k_2=1}^N  \sum_{l_1,l_2=N+1}^\infty  \vert \hat w^{\ell,\varepsilon}_{k_2l_2k_1l_1}  \vert \Vol{4} \sum_{m_2=1}^N \sum_{n_2=N+1}^\infty \vert \hat v_{n_2m_2} \vert.
\end{align}
By \eqref{lemma:v_hat_estimates_fluc} and \eqref{lem:estimates_v_L}, it follows that $\no \Psi_{\T{F,31}}^\ell	(\tau) \no_{\T{TD}} \le C_\varepsilon \rho^{\frac{1}{2}-\frac{1}{2\varepsilon}}$, and thus,
\begin{align}
\bno \int_0^t d\mu_3(\tau) D(\tau_3) \Psi_{\T{F31}}^\ell	(\tau) \bno_{\T{TD}} \le C_\varepsilon t^3 \rho^{\frac{1}{2}-\frac{1}{2\varepsilon}}.
\end{align}
Similarly, in $\Psi_{\T{F,31}}^{s,00}(t)$, we estimate the norm
\begin{align}
& \bno \Vol{6}  \sum_{ (k_1, l_1) \in \I_0} \sum_{ (k_2 , l_2 ) \in \I_0} \hat v_{l_2k_2}  \hat v^{s,\varepsilon}_{k_2l_2}  \hat v^{s,\varepsilon}_{k_1l_1} \Big( g_{l_2k_2}(\tau_3) g_{k_2l_2}(\tau_2) g_{k_1l_1}(\tau_1) \varphi_0 \Big) \otimes \Omega_0^{[ l_1^* k_1]} \bno^2 \lesssim \V{0}^3.
\end{align}
Hence, by means of \eqref{Estimate:A0}, we find
\begin{align}
\no \Psi_{\T{F31}}^{s,00}(t)\no_{\T{TD}} \lesssim Ct^3 \rho^{-\frac{3}{4}+\frac{3}{2}\varepsilon}.
\end{align}
  
\begin{lemma}
\label{lem:Psi_FS3(t)} Let $\Psi_{\T{F,3} }^{s,nm}(t)$ be defined as in \eqref{PSI:F:S3}. Then, under the same assumptions as in Theorem \ref{thm:main_thm}, there exists a positive constant $C$ such that
\end{lemma}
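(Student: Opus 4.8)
The distinguishing feature of $\Psi_{\T{F,3}}^{s,nm}(t)$ --- whose four defining lines in \eqref{PSI:F:S3} I denote, top to bottom, by $\Psi_{\T{F,3i}}^{s,nm}(t)$, $i=1,2,3,4$ --- is that the third collision recombines one of the two particle-hole pairs, so that only a \emph{single} gas excitation survives (e.g.\ $\Omega_0^{[l_1^*k_1]}$ for $i=1$). The plan is to treat the four lines separately and, within each, to split off the stationary contributions $n=0$ and/or $m=0$ from the case $n,m\ge1$, exactly as in the proofs of Lemmas~\ref{lem:Psi_FS1(t)} and~\ref{lem:Psi_FS2(t)}. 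Since only one pair survives, the scalar product $\lsp \Omega_0^{[l_1^*k_1]},\Omega_0^{[n_1^*m_1]}\rsp=\delta_{l_1n_1}\delta_{k_1m_1}$ produces only \emph{two} Kronecker deltas, so the momentum sum over the recombined pair is left free on both the bra- and the ket-side of $\no\cdot\no^2$; this, rather than additional collisions, is what has to be controlled here.

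For the direct-recombination lines $i=1,2$ (where $\hat v_{l_2k_2}\hat v^{s,\varepsilon}_{k_2l_2}=\vert \hat v^{s,\varepsilon}_{k_2l_2}\vert^2$ by reality of $v$) I would argue as follows. The integrand carries the phase $e^{i(E_{l_1}-E_{k_1})\tau_1}$ of the surviving pair and the phase $e^{i(E_{l_2}-E_{k_2})(\tau_2-\tau_3)}$ of the recombined pair. One partial integration in $\tau_1$ and one in $\tau_3$ (or $\tau_2$) extracts the two denominators $(E_{l_1}-E_{k_1})^{-1}$, $(E_{l_2}-E_{k_2})^{-1}$, the boundary and derivative pieces being collected into operators of the same shape as the $G^{\T{(X)}}_{k_2l_2k_1l_1}(t,\tau)$ of \eqref{DEF:G:1}--\eqref{DEF:G:3} and bounded through Lemmas~\ref{lem:Estimates_partial_k(s)} and~\ref{D:DERIVATIVE}, at the cost of harmless $\rho^{c\varepsilon}$ factors. (In contrast to $\Psi_{\T{B}}$, the boundary term at $\tau_3=\tau_2$, where $k_{l_2k_2}(\tau_2)k_{k_2l_2}(\tau_2)$ is the identity, needs no further energy-shift cancellation: it still oscillates in $\tau_1$ and hence still benefits from the $\tau_1$-integration by parts.) Using \eqref{energy_difference}, one then reaches a bound of the form
\begin{align}
\no \Psi_{\T{F,31}}^{s,nm}(t) \no_{\T{TD}}^{2} \lesssim (1+t)^{6}\,\rho^{c\varepsilon}\,\Big(\rho^{-\left(\frac{n-1}{cM}\right)}\V{n}\Big)\,\Big(\Vol{4}\sum_{(k_2,l_2)\in\I_m}\frac{\vert \hat v^{s,\varepsilon}_{k_2l_2}\vert^{2}}{E_{l_2}-E_{k_2}}\Big)^{2},\nonumber
\end{align}
with the roles of $n$ and $m$ exchanged for $i=2$. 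The surviving-pair factor is handled by Corollary~\ref{COROLLARY}; the new point is that the free recombined-pair sum, once it carries one energy denominator, is summable: for $m\ge1$ it is $\lesssim\rho^{-(m-1)/(2cM)}\V{m}$ by \eqref{energy_difference}, and $\sum_{m=1}^{M}\rho^{-(m-1)/(2cM)}\V{m}$, together with the $m=0$ piece (estimated directly by $\V{0}\lesssim\rho^{-1/2+\varepsilon}$, since there is then no partial integration and no boundary term), stays $\lesssim\rho^{2\varepsilon}$ by Corollary~\ref{COROLLARY} and the recollision bound \eqref{lemma:v_hat_estimates_Ka}. Choosing finally $M=\floor{\ln\rho}$, so that $M\rho^{1/(2cM)}\lesssim\rho^{\varepsilon}$ and $M^{2}\rho^{1/(cM)}\lesssim\rho^{\varepsilon}$, and summing over $n,m$ gives the claimed bounds (Remark~\ref{INTERCHANGE:LIMITS} makes the order of the limits $N,L\to\infty$ and $M\to\infty$ immaterial).

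For the crossed-recombination lines $i=3,4$ the recombination amplitude $\hat v_{l_2k_1}$ (resp.\ $\hat v_{l_1k_2}$) couples the momenta of the two pairs, so the double sum no longer factorizes. I would first split $\hat v_{l_2k_1}=\hat v^{s,\varepsilon}_{l_2k_1}+\hat v^{\ell,\varepsilon}_{l_2k_1}$, discard the $\hat v^{\ell,\varepsilon}$-part by \eqref{lem:estimates_v_L}, and for the $\hat v^{s,\varepsilon}_{l_2k_1}$-part decouple the two sums by Cauchy--Schwarz and estimate the remaining extra summation by \eqref{lemma:v_hat_estimates_p}, exactly as for $\no\Psi_{\T{C}}(t)\no_{\T{TD}}$ and $\no\Psi_{\T{D}}(t)\no_{\T{TD}}$ in Sections~\ref{sec:derivation_C} and~\ref{sec:derivation_D}; the partial-integration and counting steps then proceed as for $i=1,2$. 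I expect this bookkeeping for the crossed lines to be the main obstacle: one must combine the $\I_n$-decomposition with the extra-sum estimate \eqref{lemma:v_hat_estimates_p} while retaining enough energy denominators to keep the power of $\rho$ negative, and check that no boundary term reintroduces an uncancelled ``diagonal'' contribution of the kind that $E_{re}(\rho)$ removes in the bound for $\no\Psi_{\T{B}}(t)\no_{\T{TD}}$.
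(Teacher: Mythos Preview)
Your approach would work but is more elaborate than the paper's and yields a weaker (though still sufficient) estimate. The paper does \emph{not} integrate by parts in $\tau_3$ at all: it leaves $D(\tau_3)\,g_{l_2k_2}(\tau_3)$ (or the analogous factor for $i=2,3,4$) as a unitary, and applies the \emph{same} two-fold partial integration in $\tau_1,\tau_2$ via the operators $G^{\T{(X)}}_{k_2l_2k_1l_1}$ of \eqref{DEF:G:1}--\eqref{DEF:G:3} that already appeared in the bounds for $\Psi_{\T{E}}$, $\Psi_{\T{F,1}}$ and $\Psi_{\T{F,2}}$. The resulting norm squared is then bounded by $\rho^{8\varepsilon}\big(\rho^{-\frac{n-1}{cM}}\V{n}\big)\big(\rho^{-\frac{m-1}{cM}}\V{m}\big)^{2}$, which after Corollary~\ref{COROLLARY} gives the stated $\rho^{-3/4+\cdots}$; your route via one $(E_{l_2}-E_{k_2})^{-1}$ and the recollision sum \eqref{lemma:v_hat_estimates_Ka} only reaches $\rho^{-1/4+\cdots}$.

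Two concrete points where the paper's path is cleaner: (i) integrating by parts in $\tau_3$ forces you to control $\partial_{\tau_3}D(\tau_3)$ acting on $\psi\otimes\Omega_0^{[l_1^*k_1]}$, but Lemma~\ref{D:DERIVATIVE} is stated only for $\psi\otimes\Omega_0$; the extension is straightforward but is an extra step the paper avoids entirely. (ii) Your worry about the crossed lines $i=3,4$ is unnecessary in the paper's setup: since the $\I_n,\I_m$-constraints are on $(k_1,l_1),(k_2,l_2)$ and the $G^{\T{(X)}}$ bounds already supply both energy denominators, one simply uses $\vert\hat v_{l_2k_1}\vert\le C$ and counts as for $i=1$; there is no need to split $\hat v_{l_2k_1}=\hat v^{s,\varepsilon}_{l_2k_1}+\hat v^{\ell,\varepsilon}_{l_2k_1}$ or to invoke \eqref{lemma:v_hat_estimates_p}. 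In short, the paper treats all four lines uniformly with the $G^{\T{(X)}}$-machinery, which is both simpler and sharper than the $\tau_3$-integration/Cauchy--Schwarz programme you outline.
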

\vspace{-0.7cm}
\begin{align}
\no  \Psi_{\T{F,3}}^{s,0n}(t) \no_{\T{TD}} + \no \Psi_{\T{F,3}}^{s,n0}(t) \no_{\T{TD}} & \lesssim (1+t)^3 \rho^{-\frac{3}{4}+  4\varepsilon +\frac{1}{2cM} } \Big(\rho^{2\varepsilon} + C_\varepsilon \rho^{-1/ \varepsilon }\Big)   , \hspace{0.5cm} 1\le n \le M, \\
\no \Psi_{\T{F,3}}^{s,nm}(t) \no_{\T{TD}}   & \lesssim (1+t)^3 \rho^{-\frac{3}{4} + 6\varepsilon  +\frac{3}{2cM}}  \Big(\rho^{2\varepsilon} + C_\varepsilon\rho^{-1 / \varepsilon }\Big)  ,  \hspace{0.5cm} 1\le n,m \le M,
\end{align}
\textit{holds for all $t\ge 0$.}\\

With $M=\floor{\ln \rho}$,
\begin{align}
 \sum_{n=1}^{M}  \Big( \no \Psi_{\T{F,3}}^{s,n0}(t) \no_{\T{TD}} + \no \Psi_{\T{F,3}}^{s,0n}(t)\no_{\T{TD}} \Big) & \lesssim  (1+t)^3 \rho^{-\frac{3}{4}+ 5 \varepsilon } \Big(\rho^{2\varepsilon } + C_\varepsilon \rho^{-1/ \varepsilon }\Big), \\
 \sum_{n,m=1}^{M}  \no \Psi_{\T{F,3}}^{s,nm}(t) \no_{\T{TD}} & \lesssim (1+t)^3 \rho^{-\frac{3}{4}+7\varepsilon } \Big(\rho^{2\varepsilon} + C_\varepsilon \rho^{-1 / \varepsilon }\Big).
\end{align}
\begin{proof}[Proof of Lemma \ref{lem:Psi_FS3(t)}] Again, we prove the lemma only for the $i=1$ term. Let
\begin{align*}
\Psi_{\T{F,311}}^{s,n0}(t,\tau) & =   \Vol{6}  \sum_{ (k_1, l_1 ) \in \I_n} \sum_{ ( k_2 , l_2 ) \in \I_0} \hat v_{l_2k_2}  \hat v^{s,\varepsilon}_{k_2l_2}  \hat v^{s,\varepsilon}_{k_1l_1}    \Big( g_{l_2k_2}(\tau_2)g_{k_2l_2}(\tau_1) \frac{ g_{k_1l_1}(t) - g_{k_1l_1}(\tau_1) }{i (E_{l_1}-E_{k_1} )}  \varphi_0 \Big) \otimes \Omega_0^{[ l_1^* k_1]} , \\
\Psi_{\T{F,312}}^{s,n0}(\tau) & =   \Vol{6}  \sum_{ ( k_1, l_1) \in \I_n} \sum_{ ( k_2, l_2) \in \I_0} \hat v_{l_2k_2}  \hat v^{s,\varepsilon}_{k_2l_2}  \hat v^{s,\varepsilon}_{k_1l_1}    \Big( g_{l_2k_2}(\tau_3)  g_{k_2l_2}(\tau_2) \frac{ e^{i ( E_{l_1}-E_{k_1}) \tau_1}   \partial_{\tau_1} k_{k_1l_1}(\tau_1) }{i ( E_{l_1}-E_{k_1} )}  \varphi_0 \Big) \otimes \Omega_0^{[ l_1^* k_1]},\\
\Psi_{\T{F,311}}^{s,0n}(\tau)&  =   \Vol{6}  \sum_{ (k_1, l_1 ) \in \I_0} \sum_{ ( k_2, l_2) \in \I_n} \hat v_{l_2k_2}  \hat v^{s,\varepsilon}_{k_2l_2}  \hat v^{s,\varepsilon}_{k_1l_1}    \Big( g_{l_2k_2}(\tau_2) \frac{ g_{k_2l_2}(\tau_1) - g_{k_2l_2}(\tau_2) }{i ( E_{l_2}-E_{k_2} ) }  g_{k_1l_1}(\tau_1)   \varphi_0 \Big) \otimes \Omega_0^{[ l_1^* k_1]},\\
\Psi_{\T{F,312}}^{s,0n}(\tau) & =   \Vol{6}  \sum_{ ( k_1, l_1 ) \in \I_0} \sum_{( k_2, l_2 ) \in \I_n} \hat v_{l_2k_2}  \hat v^{s,\varepsilon}_{k_2l_2}  \hat v^{s,\varepsilon}_{k_1l_1}    \Big( g_{l_2k_2}(\tau_3) \frac{ e^{i ( E_{l_2} -E_{k_2} )\tau_2 }  \partial_{\tau_2} k_{k_2l_2}(\tau_2) }{i ( E_{l_2}-E_{k_2} )}  g_{k_1l_1}(\tau_1) \Big) \varphi_0\otimes \Omega_0^{[ l_1^* k_1]}.
\end{align*}
Via partial integration,
\begin{align}
 \Psi_{\T{F,31}}^{s,n0}(t) & = \int_0^t d\mu_2(\tau) D(\tau_2) \Psi_{\T{F,311}}^{s,n0}(t,\tau) - \int_0^t d\mu_3(\tau) D(\tau_3) \Psi_{\T{F,312}}^{s,n0}(\tau), \\
 \Psi_{\T{F,31}}^{s,0n}(t)&  = \int_0^t d\mu_2(\tau) D(\tau_2) \Psi_{\T{F,311}}^{s,0n}(\tau) - \int_0^t d\mu_3(\tau) D(\tau_3) \Psi_{\T{F,312}}^{s,0n}(\tau).
\end{align}
We set further, for $G^{\T{(X)}}_{k_2l_2k_1l_1}(t,\tau)$, $X\in \{1,2,3\}$ as in \eqref{DEF:G:1}-\eqref{DEF:G:3},
\begin{align}
\Psi_{\T{F,31X}}^{s,nm}(t,\tau) & =    \frac{1}{L^6} \sum_{ ( k_1,l_1) \in \I_n} \sum_{ ( k_2,l_2) \in \I_m}   \hat v_{l_2 k_2 } \hat v^{s,\varepsilon}_{k_2l_2}  \hat v^{s,\varepsilon}_{k_1l_1}  \Big( g_{l_2k_2}(\tau_{\T{X}})  G^{\T{(X)}}_{k_2l_2k_1l_1}(t,\tau) \varphi_0 \Big)  \otimes \Omega_0^{[ l_1^* k_1]}.
\end{align}
Partial integration leads again to 
\begin{align*}
\Psi_{\T{F,31}}^{s,nm}(t) = \int_0^t d\tau_1 D(\tau_1) 
\Psi_{\T{F,311} }^{s,nm}(t,\tau) - \int_0^t d\mu_2(\tau) D(\tau_2)
 \Psi_{\T{F,312}}^{s,nm} (t,\tau)  + \int_0^t d\mu_3(\tau)D(\tau_3) \Psi_{\T{F,313}}^{s,nm}(t,\tau).
\end{align*}
Similar as before, we find for $Y\in \{1,2\}$, using \eqref{lem:Estimates_partial_k(s)_1},
\begin{align}
\no \Psi_{\T{F,31Y} }^{s,n0}(\tau)\no^2 &\lesssim \rho^{4\varepsilon} \Big( \rho^{-\left(\frac{n-1}{ cM}\right)} \V{n}\Big) \V{0}^2 .
\end{align}
Analogously, one derives the same estimate for $\no \Psi_{\T{F,31Y} }^{s,0n}(\tau)\no$. Furthermore, using \eqref{BOUND:G:1}-\eqref{BOUND:G:3}, one finds
\begin{align}
\no \Psi_{\T{F,31X}}^{s,nm}(t,\tau)\no^2 & \lesssim  \rho^{8\varepsilon} \Big( \rho^{-\left(\frac{n-1}{cM}\right)} \V{n}\Big) \Big( \rho^{-\left(\frac{m-1}{cM}\right)} \V{m}\Big)^2   .
\end{align}
The stated estimates follow from \eqref{Estimate:A0} and also \eqref{Estimate:An}.
\end{proof}

This completes the proof of the bound for $\no \Psi_{\T{F}}(t) \no_{\T{TD}}$ in \eqref{lemma:main_lemma_2}.

\subsection{\label{proof:lem2.3_2.4}Proof of Lemmas \ref{lem:v_hat_estimates} and \ref{lem:Estimates_partial_k(s)}}

\begin{proof}[Proof of Lemma~\ref{lem:v_hat_estimates}] Let us first note that the choice $v\in \mathcal C^\infty_0(\mathbb R^2)$ ensures that the constant $D_p$ in \eqref{Paley_Wiener_potential} is smaller than some $C>0$ uniformly in the length $L$ of the torus.\\

We begin with the upper bound in \eqref{lemma:v_hat_estimates_fluc}. Using the Paley-Wiener Theorem, cf.\ \eqref{Paley_Wiener_potential} with $p$ s.t. $pq>3$, 
\begin{align}
\lim_{\T{TD}} \Vol{4} \sum_{k=1}^N \sum_{l=N+1}^{\infty} \Big\vert \mathcal F[v](  p_k  -  p_l ) \Big\vert^q  &  \le \lim_{\T{TD}}  \Vol{4}  \sum_{k=1}^N \sum_{l=N+1}^{\infty} \frac{D^q_p}{(1+ \vert p_{k} - p_{l} \vert)^{qp}} \nonumber\\
& = \frac{D^q_{p}}{(2\pi)^4} \int_{\vert k \vert \le k_F} d^2 k \int_{\vert l \vert \ge k_F} d^2l \frac{1}{(1+ \vert k - l \vert)^{qp}} \nonumber\\
\Big[\vert k \vert + \vert l \vert \ge \vert k + l \vert \Big]	~~~~~~~~ & \le  C_q \int_{\vert k \vert \le k_F} d^2 k \int_{\vert l \vert \ge k_F - \vert k \vert} d^2l \frac{1}{(1+ \vert l \vert)^{qp}} \nonumber\\
&\le C_q  \int_{\vert k \vert \le k_F} d^2 k \bigg[ \frac{-1}{(1+\vert l \vert )^{qp-2}} \bigg]_{k_F-\vert k\vert }^{\infty} \nonumber\\
&\le C_q k_F \bigg[ \frac{1}{(1+k_F-\vert k \vert)^{qp-3}} \bigg]_0^{k_F}, \label{UPPER:BOUND:FLUC}
\end{align}
which proves the upper bound.

To show the lower bound in \eqref{lemma:v_hat_estimates_fluc}, we assume for simplicity that $\mathcal F[v](0) > 0$ (the argument is easily adapted to the general case).\ Let us denote here $\lim_{\T{TD}}\mathcal F[v] = \mathcal F[v_{\T{TD}}]$ with $v_{\T{TD}} \in \mathcal C_0^\infty(\mathbb R^2)$. Due to continuity of $\mathcal F[v_{\T{TD}}] : \mathbb R^2 \to \mathbb R$, there is a nonempty, compact ball of some radius $r>0$, $\overline {B_r(0)} \subset \mathbb R^2$, such that $ \mathcal F[v_{\T{TD}}](k) > 0 $ for all $k\in \overline { B_r(0) }$. In particular, for given $l\in \mathbb R^2$ with $\vert l \vert \in [k_F,k_F+r/10]$, we have $ \mathcal F[v_{\T{TD}}]( k-l ) >0$ for all $k \in \overline { B_r(l) }$ with $\vert k \vert \le k_F$. Since the set
\begin{align}
	A= \Big\{ (k,l) \in \mathbb R^4 \ : \ \vert l \vert \in \Big[ k_F,k_F+r/10 \Big], k\in \overline {B_r(l)}, \vert k \vert\le k_F \Big\}
\end{align}
is nonempty and compact, there exists a nonzero minimum on $A$, $
m \equiv  \min_{(k,l)\in A} \mathcal F[v_{\T{TD}}](k-l) >0$. It is then sufficient to consider the transitions corresponding to $A$ in order to obtain the lower bound:
\begin{align}\label{v_hat_upper_bound}
\lim_{\T{TD}} \Vol{4} \sum_{k=1}^N \sum_{l=N+1}^{\infty}  \Big\vert \mathcal F[v](p_k - p_l) \Big\vert^q & = \frac{1}{(2\pi)^4} \int_{\vert k \vert \le k_F} d^2 k \int_{\vert l \vert \ge k_F} d^2 l ~ \Big\vert \mathcal F[v_{\T{TD}}] (k - l) \Big\vert^{q} \nonumber\\
&  \ge \int_{\vert k \vert \le k_F} d^2 k \int_{\vert l \vert \ge k_F} d^2l ~ \Big\vert \mathcal F[v_{\T{TD}}](k - l) \Big\vert^{q}\  \chi \big( (k,l) \in A \big) \nonumber \\
& \ge m^q \int_{\vert k \vert \le k_F} d^2 k \int_{\vert l \vert \ge k_F} d^2l ~ \chi \big( (k,l) \in A \big) \nonumber \\
& = m^q \int_{k_F}^{k_F+r/10} d \vert l\vert\ \vert l \vert \int_{\vert k \vert \le k_F} d^2 k\  \chi \big(k \in \overline {B_r(l)} \big) \nonumber\\
\Big[\text{for sufficiently large}\ k_F \Big] ~~~~~~~ & = C_q r^3 k_F.
\end{align}
\begin{remark}Along the same lines, one verifies \eqref{flucutations_d_dimension} also for $d=1$ and $d=3$.
\end{remark}

We next come to \eqref{lem:estimates_v_L}. Let $\varepsilon>0$. Applying again Paley-Wiener, this time with $p$ s.t.\ $p/2 -3>0$ and $p>\frac{1}{\varepsilon} + \frac{2}{\varepsilon^2}$, we find
\begin{align}
	\lim_{\T{TD}} \Vol{4} \sum_{k=1}^N \sum_{l=N+1}^{\infty} \Big\vert \mathcal F[v^{l,\varepsilon}] ( p_k  - p_l ) \Big\vert  & \le \lim_{\T{TD}} \Vol{4} \sum_{k=1}^N \sum_{l=N+1}^{\infty} \frac{D_p  \theta\big( \vert  p_k - p_l \vert - \rho^\varepsilon \big) }{ (1+ \vert p_k -  p_l \vert )^p } \nonumber \\
 & \le \frac{D_{p}}{(2\pi)^4}  \int_{\vert k \vert \le k_F} d^2 k \int_{\vert l \vert \ge k_F} d^2l  \frac{\theta(\vert k-l\vert - \rho^\varepsilon ) }{(1+\vert k-l\vert )^{p}} \nonumber\\
& \le \frac{D_{p}}{(2\pi)^4 \rho^{\varepsilon p/2}}  \int_{\vert k \vert \le k_F} d^2 k \int_{\vert l \vert \ge k_F} d^2l  \frac{1}{(1+\vert k-l\vert )^{\frac{p}{2}}} \nonumber \\
& \le \frac{D_{p}}{(2\pi)^4} \rho^{-\varepsilon p/2 } \rho^{\frac{1}{2} }
\end{align}
where we have used in the last step the estimate from \eqref{UPPER:BOUND:FLUC}. \eqref{lem:estimates_v_L} then follows from the choice $p>\frac{1}{\varepsilon} + \frac{2}{\varepsilon^2}$.\\

To show \eqref{lemma:v_hat_estimates_int_An}, one passes to the thermodynamic limit, and computes by direct integration (for sufficiently large $\rho$ and $\varepsilon<1/2$),
\begin{align}
\lim_{\T{TD}} \Vol{4} \sum_{ ( k,l ) \in \I_n}  &  = \lim_{\T{TD}} \Vol{4} \sum_{ k =1}^N \sum_{l=N+1}^\infty \theta\Big(\rho^\varepsilon - \vert  p_k - p_l \vert \Big)  \chi \Big( \rho^{-b_{n}} \le \vert p_l \vert - \vert p_k \vert < \rho^{-b_{n+1}}  \Big)  \nonumber  \\
& = \frac{1}{(2\pi)^4} \int_{\vert k \vert \le k_F} d^2 k \int_{\vert l \vert \ge k_F} d^2l\  \theta\Big(\rho^{\varepsilon} - \vert k -l \vert \Big) \ \chi\Big(\rho^{-b_{n}} \le \vert l \vert - \vert k \vert < \rho^{-b_{n+1}} \Big)   \nonumber\\
& \le C  \rho^{\frac{1}{2}-b_{n+1}} \rho^\varepsilon \Big( \rho^{-b_{n+1}} - \rho^{-b_n} \Big).
\end{align}
For the proof of \eqref{lemma:v_hat_estimates_Ka}, we recall the definition in \eqref{def:F} and insert $v = v^{s,\varepsilon} + v^{\ell, \varepsilon}$:
\begin{align}
\lim_{\T{TD}} \widetilde E_{re}(N,\rho) & = \lim_{\T{TD}} \Vol{4} \sum_{k=1}^N\sum_{l=N+1}^{\infty} \frac{\big\vert \mathcal F[v] ( p_k -  p_l ) \big\vert^2 }{(E_l-E_k)} \theta\Big( \vert  p_l \vert - \vert  p_k \vert  -  \rho^{-\frac{1}{2}} \Big) \nonumber \\
& \le \lim_{\T{TD}} \Vol{4} \sum_{k=1}^N\sum_{l=N+1}^{\infty} \frac{\big\vert \mathcal F[v^{s,\epsilon}] ( p_k -  p_l ) \big\vert^2 }{(E_l-E_k)} \theta\Big( \vert  p_l \vert - \vert  p_k \vert  -  \rho^{-\frac{1}{2}} \Big) \label{E:RE:S} \\
& + \lim_{\T{TD}} \Vol{4} \sum_{k=1}^N\sum_{l=N+1}^{\infty} \frac{\big\vert \mathcal F[v^{\ell,\epsilon}] ( p_k -  p_l ) \big\vert^2 }{(E_l-E_k)} \theta\Big( \vert  p_l \vert - \vert  p_k \vert  -  \rho^{-\frac{1}{2}} \Big) \label{E:RE:L}.
\end{align}
In the first line we proceed with \eqref{ORTHOGONAL} and find for any $M\ge 1$,
\begin{align}
\eqref{E:RE:S} &= \sum_{n=1}^{M} \lim_{\T{TD}}  \Vol{4}  \sum_{(k,l)\in \I_n} \frac{\big\vert \mathcal F[v^{s,\varepsilon}] ( p_k - p_l ) \big\vert^2 }{(E_l-E_k)}  \nonumber\\
&\le   C \sum_{n=1}^{M} \lim_{\T{TD}}	 \rho^{-\left(\frac{n-1}{2cM} \right)} 	\Vol{4}  \sum_{(k,l)\in \I_n}   \nonumber\\
& =   C \sum_{n=1}^{M} \rho^{-\left(\frac{n-1}{2cM} \right)}  \rho^{\frac{1}{2} + \varepsilon} \rho^{-b_{n+1}} \Big( \rho^{-b_{n+1}} - \rho^{-b_{n}} \Big)   \nonumber\\
&=   C  \rho^{-\frac{1}{2} + \varepsilon } \Big( \rho^{\frac{1}{2cM}} - 1 \Big) \sum_{n=1}^{M} \rho^{\frac{n}{2c M}} 
\le C  \rho^{-\frac{1}{2} + \varepsilon}  \rho^{\frac{M+1}{2c M} } \le C \rho^{2\epsilon},
\end{align}
where we have taken the limit $M\to \infty$ and inserted $2c = (\frac{1}{2}+\varepsilon)^{-1}$.
The second line \eqref{E:RE:L} has been estimated in \eqref{RECOLLISIONS:2}. This proves the upper bound in \eqref{lemma:v_hat_estimates_Ka}. For the lower bound, we insert again $v = v^{s,\varepsilon} + v^{\ell, \varepsilon}$,
\begin{align}
\lim_{\T{TD}} \widetilde E_{re}(N,\rho) & \ge   \lim_{\T{TD}} \Vol{4} \sum_{k=1}^N\sum_{l=N+1}^{\infty} \frac{\big\vert \mathcal F[v^{s,\varepsilon}] ( p_k -  p_l ) \big\vert^2 }{(E_l-E_k)} \theta\Big( \vert  p_l \vert - \vert  p_k \vert  -  \rho^{-\frac{1}{2}} \Big).\label{E:RE:LOWER:BOUND}
\end{align}
Since $|p_l-p_k| \leq \rho^{\varepsilon}$, we find that $|p_l|-|p_k| \leq |p_l-p_k| \leq \rho^{\varepsilon}$ and $|p_l| + |p_k| \leq 3 \rho^{\frac{1}{2}}$ (for $\varepsilon\leq 1/2$), i.e., $(E_l-E_k)^{-1} \geq \frac{1}{3} \rho^{-\varepsilon-\frac{1}{2}}$. Furthermore, note that the bound from \eqref{v_hat_upper_bound} holds also if we replace $\mathcal F[v]$ by $\mathcal F[v^{s,\varepsilon}]$ for any $\varepsilon>0$. Thus we find
\begin{align}
\eqref{E:RE:LOWER:BOUND} &\geq \frac{1}{3} \rho^{-\varepsilon-\frac{1}{2}} \lim_{\T{TD}} \Vol{4} \sum_{k=1}^N\sum_{l=N+1}^{\infty} \big\vert \mathcal F[v^{s,\varepsilon}] ( p_k -  p_l ) \big\vert^2 \theta\Big( \vert  p_l \vert - \vert  p_k \vert  -  \rho^{-\frac{1}{2}} \Big) \nonumber \\
&\geq C \rho^{-\varepsilon-\frac{1}{2}} \Big( \rho^{\frac{1}{2}} - C \Big) \nonumber \\
&\geq C \rho^{-\varepsilon}.
\end{align}
Eventually note that here we can pass to the limit $\varepsilon\to 0$ which completes the derivation of the lower bound in \eqref{lemma:v_hat_estimates_Ka}.
\\

The proof of the last bound follows immediately from the decomposition of the potential; cf.\ \eqref{potential_large} and \eqref{potential_small}.
\end{proof}

\begin{proof}[Proof of Lemma \ref{lem:Estimates_partial_k(s)}] 
We prove only \eqref{lem:Estimates_partial_k(s)_1}, since \eqref{lem:Estimates_partial_k(s)_2}-\eqref{lem:Estimates_partial_k(s)_4} are derived in complete analogy.
\begin{align}
\bno  \partial_{\tau_1} k_{k l }(\tau_1) \varphi_0 \bno & = \bno  (\partial_\tau e^{iH_y^f\tau_1 }) e^{i(p_l - p_k)\cdot y} \varphi_{\tau_1}^f + e^{iH_y^f \tau_1} e^{i (p_{k} - p_{l})\cdot y} \partial_{\tau_1} \varphi_{\tau_1}^f \bno	\nonumber\\
& = \bno  \vert p_{k}-p_{l}\vert^2 e^{iH_y^f \tau_1 } e^{i(p_l - p_k)\cdot y} \varphi_{\tau^f_1} - 2(p_l-p_k )\cdot e^{i H_y^f \tau_1} e^{i (p_l - p_k)\cdot y} \nabla_y \varphi_{\tau_1}^f \bno \nonumber \\
&\le \vert p_{k}-p_{l}\vert^2 + C \vert p_{k} - p_{l }\vert 
\end{align}
because $\no \nabla_y \varphi_\tau^f \no = \no \nabla_y \varphi_0\no \le C$ (uniformly in $\rho$). The estimate follows since $\vert p_{k}-p_{l} \vert \le \rho^\varepsilon$ in $\I_n$ for all $0\le n \le M$. 
\end{proof}

\appendix 

\section{The Model in One Dimension\label{app:one_dimension}}

The main difference in the definition of the model in one spatial dimension is that the possible momenta for $L<\infty$ are now given by $p \in (2\pi / L) \mathbb Z$, and that the Fermi momentum $\vert p_N \vert = k_F$ is proportional to $\rho$. Below, we are going to prove the following theorem which is the analogous statement to Theorem \ref{thm:main_thm} (a slightly different statement implying the same result as Theorem \ref{thm:main_thm:one_dimension} was proven in \cite{jeblick:2013}).

\begin{theorem}\label{thm:main_thm:one_dimension} 
Let $d=1$, the masses $m_x=m_y=1/2$ and the coupling constant $g=1$. Let $\varphi_0 \in \mathcal H_y$ with $\no \nabla^4 \varphi_0\no \le C$ uniformly in $\rho$. Then, for any small enough $\varepsilon>0$, there exists a positive constant $C_\varepsilon$ such that
\begin{equation}\label{main_estimate:1d}
\lim_{\substack{N,L\to\infty \\ \rho=N/L=const.}} \left\| e^{-iHt} \Psi_0 - e^{-i H^{\text{mf}} t } \Psi_0 \right\|_{\mathcal H_y\otimes \mathcal H_N} \leq   C_\varepsilon (1+t)^{\frac{3}{2}} \rho^{-\frac{1}{4}+\varepsilon}
\end{equation}
holds for all $t>0$, where
\begin{equation}
\label{Mean-field Hamiltonian_1d} H^{\text{mf}} = - \Delta_y - \sum_{i=1}^N \Delta_{x_i} + \rho \mathcal F[v](0)
\end{equation}
is the free Hamiltonian with constant mean field $\langle \Omega_0, \sum_{i=1}^N v(x_i-y) \Omega_0 \rangle_{\mathcal H_N} = \rho \mathcal F[v](0)$.
\end{theorem}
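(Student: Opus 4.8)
The plan is to mirror the two-dimensional argument, exploiting that in $d=1$ every genuine momentum transfer from the tracer to a Fermi-sea particle near the edge costs energy of order $k_F \propto \rho$, which makes the relevant oscillatory integrals decay even faster than in $d=2$. Concretely, I would start exactly as in the proof of Theorem~\ref{thm:main_thm}: apply Duhamel's formula expanding $U$ around $U^{\text{mf}}$, use that $\langle U^{\text{mf}}(\tau)\Psi_0, (V-E)U^{\text{mf}}(\tau)\Psi_0\rangle = 0$, expand twice more, and reduce the norm difference to a sum of terms of the same type as $\Psi_{\T A},\ldots,\Psi_{\T F}$. The main structural simplification is that in one dimension there is no next-to-leading energy correction: the definition \eqref{Mean-field Hamiltonian_1d} has no $E_{re}(\rho)$ term. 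This is because the ``immediate recollision'' contribution (corresponding to $\Psi_{\T{B,2}}^{s,n}$ in the $d=2$ proof) is now itself small — the energy denominator $E_{l_1}-E_{k_1} = (|p_{l_1}|+|p_{k_1}|)(|p_{l_1}|-|p_{k_1}|) \gtrsim k_F \rho^{-b_n}$ combined with $k_F\propto\rho$ gives an extra power of $\rho$ that the $d=2$ argument lacked, so the term can be absorbed into the error rather than cancelled by a counterterm.

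The key quantitative inputs are the one-dimensional analogues of Lemma~\ref{lem:v_hat_estimates}. I would re-derive the fluctuation bound: replacing the area integrals $\int_{|k|\le k_F}d^2k\int_{|l|\ge k_F}d^2l$ by line integrals $\int_{|k|\le k_F}dk\int_{|l|\ge k_F}dl$ and using $|k|+|l|\ge|k-l|$ together with Paley--Wiener decay \eqref{Paley_Wiener_potential}, one finds $\lim_{\T{TD}}\Vol{2}\sum_{k=1}^N\sum_{l=N+1}^\infty|\mathcal F[v](p_k-p_l)|^q = C_q\rho^{0} = O(1)$ — this is the $d=1$ case of \eqref{flucutations_d_dimension}, $\rho^{(d-1)/d}=\rho^0$. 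So in one dimension the fluctuations $\no(V-E)\Omega_t^f\no^2$ are actually \emph{bounded} uniformly in $\rho$, a much stronger statement than in $d=2$; this is what ultimately yields the better error exponent $\rho^{-1/4+\varepsilon}$ on the right-hand side of \eqref{main_estimate:1d}. I would also state the $d=1$ versions of \eqref{lem:estimates_v_L} (large momentum transfers give $C_{\varepsilon,q}\rho^{-1/\varepsilon}$ exactly as before, since Paley--Wiener decay is dimension-independent once normalized), \eqref{lemma:v_hat_estimates_int_An}, and \eqref{lemma:v_hat_estimates_p} (the single extra sum $\Vol{1}\sum_k|\mathcal F[v](p_k-p)|\le C\rho^{2\varepsilon}+C_\varepsilon\rho^{-1/\varepsilon}$, now with a one-dimensional lattice sum).

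With these estimates in hand, the index-set decomposition $\I=\bigcup_{n=0}^M\I_n$ goes through verbatim, with the crucial difference that \eqref{energy_difference} becomes $E_{l}-E_{k}\ge k_F\rho^{-b_n}\gtrsim \rho^{1-b_n}$ (since $k_F\propto\rho$ in $d=1$), one full power of $\rho$ better than in $d=2$. Repeating the partial-integration scheme of Sections~\ref{sec:Derivation_A}--\ref{sec:bound_F} — splitting $\hat v = \hat v^{s,\varepsilon}+\hat v^{\ell,\varepsilon}$, treating stationary points via $\mathcal V_0$, non-stationary points via one partial integration in $\tau_1$ (and two partial integrations for the four-sum term $\Psi_{\T E}$ and the five-sum term $\Psi_{\T F}$), using Lemma~\ref{lem:Estimates_partial_k(s)} and Lemma~\ref{D:DERIVATIVE} (whose proof now drops the $E_{re}(\rho)^2$ term) — then, optimizing over $M$ with $M=\lfloor\ln\rho\rfloor$, produces bounds of the form $C_\varepsilon(1+t)^3\rho^{-1/2+c\varepsilon}$ for each of $\Psi_{\T C},\ldots,\Psi_{\T F}$ and $\rho^{-1/4+c\varepsilon}$ for $\Psi_{\T A},\Psi_{\T B}$, whence the square-root in \eqref{NORM:DIFFERENCE:BOUND} gives the stated $\rho^{-1/4+\varepsilon}$. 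The main obstacle is purely bookkeeping: one must recheck that every occurrence of $k_F\propto\rho^{1/d}$ and every $\rho^{(d-1)/d}$ fluctuation factor is substituted correctly with $d=1$ throughout the long chain of lemmas, and verify that the immediate-recollision term is genuinely absorbed rather than requiring a counterterm — there is no new analytic difficulty, only the need to confirm the one-dimensional power counting closes with room to spare.
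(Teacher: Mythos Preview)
Your approach is correct and essentially identical to the paper's: restate the one-dimensional analogues of Lemma~\ref{lem:v_hat_estimates} and Corollary~\ref{COROLLARY} (bounded fluctuations, $k_F\propto\rho$, hence a full extra power of $\rho$ in the energy-gap estimate \eqref{energy_difference}), and then observe that the only place the $d=2$ argument does not carry over verbatim is the immediate-recollision piece $\Psi_{\T{B,2}}^{s,n}$, which in $d=1$ is small outright rather than requiring cancellation by $E_{re}(\rho)$.

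One small inconsistency in your final bookkeeping: since $E_{re}(\rho)\equiv 0$ in $d=1$, the wave functions $\Psi_{\T A}$ and $\Psi_{\T E}$ (which carry explicit $E_{re}(\rho)$ prefactors in their definitions) vanish identically --- there is nothing to bound there. More importantly, your claimed bound $\rho^{-1/4+c\varepsilon}$ for $\Psi_{\T B}$ would, via the square root in \eqref{NORM:DIFFERENCE:BOUND}, yield only $\rho^{-1/8+\varepsilon}$ for the norm difference, not the stated $\rho^{-1/4+\varepsilon}$. In fact all surviving terms $\Psi_{\T B},\Psi_{\T C},\Psi_{\T D},\Psi_{\T F}$ reach $\rho^{-1/2+c\varepsilon}$ in one dimension: the $\rho^{1/4}$ loss in the $d=2$ bound for $\Psi_{\T B}$ (Lemma~\ref{PSIB:S:LEMMA}) came from the fluctuation factor $\no(V-E)\Omega_0\no\sim\rho^{1/4}$ inside Lemma~\ref{D:DERIVATIVE}, which is now $O(1)$, and the combination $\rho^{-(n-1)/(cM)}\mathcal V_n$ improves from $\rho^{-1/2+\varepsilon}$ to $\rho^{-1+\varepsilon}$-type behavior. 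Once you correct this, the square root gives the desired $\rho^{-1/4+\varepsilon}$.
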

\begin{remark} Note the two differences compared to Theorem \ref{thm:main_thm}: the absence of an additional next-to-leading order energy correction in $H^{\text{mf}}$ and the better error on the r.h.s.\ \end{remark}
\begin{remark} As explained in Section \ref{sec:dimension}, we expect, and this is in contrast to $d=2$, that the l.h.s.\ of \eqref{main_estimate:1d} is small for large $\rho$ on all time scales. Theorem \ref{thm:main_thm:one_dimension} can prove this only to some extent since the error term on the r.h.s.\ becomes small only as long as $t\ll \rho^{\frac{1}{6}-2\varepsilon/3}$. 
\end{remark}
One possibility to prove \eqref{main_estimate:1d} is to adapt the proof of Theorem \ref{thm:main_thm}. For that, note that the argument depends on the dimension essentially through Lemma \ref{lem:v_hat_estimates} and Corollary \ref{COROLLARY}. The corresponding bounds for $d=1$ are summarized in
\begin{lemma}Let $d=1$, $0<\varepsilon <1/2$ and $M,q\in \mathbb N$. Let $v(x)\in C^{\infty}_0(\mathbb T)  \cap C^{\infty}_0(\mathbb R)$  and $v^{l,\varepsilon}$, $v^{s,\varepsilon}$ defined as in \eqref{potential_large},\eqref{potential_small}. Then there exist positive constants $C$, $C_q$, $C_{q,\varepsilon}$ such that 
\end{lemma}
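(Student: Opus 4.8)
The plan is to prove the one-dimensional analogue of Lemma~\ref{lem:v_hat_estimates} by repeating, with the obvious modifications, the argument given in Section~\ref{proof:lem2.3_2.4} for $d=2$. The only place where the dimension enters substantively is in the passage from discrete sums over the momentum lattice $(2\pi/L)\mathbb{Z}$ to integrals in the thermodynamic limit, and in the resulting Jacobians; the Paley--Wiener decay estimate \eqref{Paley_Wiener_potential} holds verbatim with $L$-uniform constants because $v\in C^\infty_0(\mathbb{R})$. Recall that in one dimension $k_F \propto \rho$ (rather than $\sqrt\rho$), so every estimate that was $\propto \rho^{1/2}$ for $d=2$ should become $\propto \rho$ here; this is consistent with \eqref{flucutations_d_dimension} with $d=1$, which gives $C_1$ times $\rho^{0}$... wait, $\rho^{(d-1)/d}=\rho^0$ for $d=1$, so in fact the fluctuation bound $\|(V-E)\Omega_0\|^2$ is bounded uniformly in $\rho$, which is exactly why the one-dimensional theorem has no recollision correction and a better error exponent.

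First I would establish the $d=1$ version of \eqref{lemma:v_hat_estimates_fluc}. Writing the sum as $\frac{1}{L^2}\sum_{k=1}^N\sum_{l=N+1}^\infty |\mathcal F[v](p_k-p_l)|^q$ (note the Jacobian is now $1/L^2$, not $1/L^4$), one passes to $\frac{1}{(2\pi)^2}\int_{|k|\le k_F}dk\int_{|l|\ge k_F}dl\,|\mathcal F[v_{\text{TD}}](k-l)|^q$. Using $|k|+|l|\ge|k+l|$ to bound $|l|\ge k_F-|k|$ in the inner integral and the Paley--Wiener bound with $qp>2$, the inner integral over $l$ is $O(1)$ (integrable tail, no growing prefactor since there is no $|l|\,d|l|$ Jacobian in 1d), and the outer integral over $|k|\le k_F$ contributes a factor $k_F\propto\rho$. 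So the upper bound is $C_q\rho$; the lower bound follows by the same ball-of-positivity argument as in \eqref{v_hat_upper_bound}, again producing $C_q\,k_F\propto\rho$. Next, \eqref{lem:estimates_v_L} follows identically: inserting $\theta(|p_k-p_l|-\rho^\varepsilon)$, extracting $\rho^{-\varepsilon p/2}$ from half the Paley--Wiener decay and bounding the rest by the $\rho$-bound just obtained, one gets $C_p\rho^{-\varepsilon p/2}\rho$, which is $\le C_{\varepsilon,q}\rho^{-1/\varepsilon}$ once $p>2/\varepsilon+2/\varepsilon^2$ say. The index-set counting estimate \eqref{lemma:v_hat_estimates_int_An} and the single-sum bound \eqref{lemma:v_hat_estimates_p} transfer the same way, with the 1d Jacobians; in particular $\mathcal V_n(N,\rho)$ gains a factor $k_F\propto\rho$ where the 2d version had $\sqrt\rho$, so Corollary~\ref{COROLLARY} becomes: $\lim_{\text{TD}}\mathcal V_0(N,\rho)\le C\rho^{\varepsilon}$ and $\lim_{\text{TD}}(\rho^{-(n-1)/(cM)}\mathcal V_n)\le C\rho^{\varepsilon}(\rho^{1/(cM)}-\rho^{1/(2cM)})$ — i.e., the same shape with $\rho^{-1/2+\varepsilon}$ replaced by $\rho^{\varepsilon}$. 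I would state these as the conclusion \eqref{lemma:v_hat_estimates_fluc}--\eqref{lemma:v_hat_estimates_p} read off with $\rho^{1/2}\mapsto\rho$ and the recollision estimate \eqref{lemma:v_hat_estimates_Ka} simply absent (no $E_{re}$ needed), since in one dimension $(E_l-E_k)^{-1}$ is never dangerously large relative to $\|(V-E)\Omega_0\|^2=O(1)$.

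The expected main obstacle is bookkeeping rather than any genuinely new difficulty: one must verify that replacing the $d=2$ exponents by their $d=1$ counterparts throughout Sections~\ref{sec:Derivation_A}--\ref{sec:bound_F} still yields a net negative power of $\rho$. Since the fluctuation norm is now $O(1)$ instead of $O(\rho^{1/4})$, all the error bounds improve; the worst terms (those in $\Psi_{\T{F}}(t)$) give $\rho^{-1/4+O(\varepsilon)}$ after summing over $n,m$ with $M=\lfloor\ln\rho\rfloor$, matching \eqref{main_estimate:1d}. The one point that genuinely needs the 1d geometry is the absence of a stationary-phase set of size $\rho^{-1/2}$: in one dimension there is essentially a single stationary point $|p_l|=|p_k|$ with the discrete spacing $2\pi/L$, so $\mathcal V_0$ is controlled directly by the $\rho^{-b_n}$ thresholding without needing the finer decomposition of Remark~\ref{rem:sketch_proof_stat_point}; I would make this explicit when adapting \eqref{sketch_of_estimate}. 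A short remark at the end would note that, just as in Remark~\ref{rem:dimension}, the $d=1$ bound is in fact cleaner than $d=2$ because the second summand in \eqref{sketch_of_estimate} scales like $\|(V-E)\Omega_0\|^2/k_F^2 \propto \rho^{-2}$.
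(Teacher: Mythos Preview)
Your overall strategy --- repeat the two-dimensional argument of Section~\ref{proof:lem2.3_2.4} with one-dimensional Jacobians --- is precisely what the paper does (its proof is the single sentence ``analogous to the ones for Lemma~\ref{lem:v_hat_estimates} and Corollary~\ref{COROLLARY}''). However, your actual computations of the one-dimensional exponents are wrong, and the errors are not cosmetic.

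For the fluctuation sum you correctly note at the outset that \eqref{flucutations_d_dimension} with $d=1$ predicts $\rho^0=O(1)$, but then your derivation yields $C_q\rho$. The mistake is in the sentence ``the inner integral over $l$ is $O(1)$ \ldots\ and the outer integral over $|k|\le k_F$ contributes a factor $k_F\propto\rho$.'' The inner integral is not \emph{uniformly} $O(1)$ in $k$; for $k>0$ it behaves like $(1+k_F-k)^{-(qp-1)}$, which is small unless $|k|$ lies within $O(1)$ of the Fermi edge. Integrating this over $|k|\le k_F$ therefore gives an $O(1)$ quantity, not $O(k_F)$. The same localization is why the lower-bound argument produces a constant rather than $k_F$: in one dimension the set $A$ in \eqref{v_hat_upper_bound} has measure $\sim r^2$, independent of $k_F$. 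The correct statement is the paper's \eqref{lemma:v_hat_estimates_fluc_1d}, namely a constant $C_q$.

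The same misconception propagates to your $\mathcal V_n$ bounds. You claim $\lim_{\text{TD}}\mathcal V_0\le C\rho^\varepsilon$, whereas the paper gives $\lim_{\text{TD}}\mathcal V_0\le C\rho^{-1}$, and this matters: your weaker bound would not close the estimates in Sections~\ref{sec:Derivation_A}--\ref{sec:bound_F}. In one dimension the set $\{|k|\le k_F,\ |l|\ge k_F,\ 0\le|l|-|k|<\rho^{-1/2}\}$ has measure $\sim\rho^{-1}$ (both $|k|$ and $|l|$ are pinned to intervals of length $\rho^{-1/2}$ around $k_F$); there is no circumference factor $k_F^{d-1}$ as in $d=2$. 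Your heuristic ``replace $\sqrt\rho$ by $\rho$ wherever $k_F$ appears in the $d=2$ bounds'' is not valid: the dimensional dependence enters through the surface area of the near-Fermi shell, which in $d=1$ is $O(1)$ rather than $O(k_F)$.
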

\vspace{-0.7cm}
\begin{align}
\label{lemma:v_hat_estimates_fluc_1d}\lim_{\T{TD}}     \Vol{2} \sum_{k=1}^N \sum_{l=N+1}^{\infty} \Big\vert \mathcal F[v] ( p_k - p_l ) \Big\vert^q & =   C_q,\\
\label{lem:estimates_v_L_1d} \lim_{\T{TD}}   \Vol{2} \sum_{k=1}^{N} \sum_{l=N+1}^{\infty} \Big\vert \mathcal F[v^{l,\epsilon}] (p_k  - p_l ) \Big\vert^q & \le C_{q,\varepsilon} \rho^{-1/\varepsilon}  ,\\
\lim_{\T{TD}} \V{0} & \le C\rho^{-1},\\
\lim_{\T{TD}} \Big( \rho^{-\left( \frac{n-1}{2cM} \right) }\V{n} \Big) & \le 	C \rho^{-1+ \varepsilon } \Big( \rho^{\frac{1}{2cM}} - \rho^{\frac{1}{cM}} \Big), ~~ 1\le n \le M  \\
\label{lemma:v_hat_estimates_p_1d}\lim_{\T{TD}}   \Vol{} \sum_{k=1}^N \Big\vert \mathcal F[v]  (   p_k  -    p ) \Big\vert  & \le  C \rho^{\varepsilon} + C_\varepsilon \rho^{-1/ \varepsilon } \ \ \ \ \text{for}\ \ \ p \in (2\pi/L)\mathbb Z.
\end{align}
The proof is analogous to the ones for Lemma \ref{lem:v_hat_estimates} and Corollary \ref{COROLLARY}.\\

The only bound that remains to be shown is the one for $\Psi_{\T{B,2}}^{s,n}(t)$, $1\le n\le M$; cf.\ Section \ref{bound:Lemma1.1.b}. In the two-dimensional case, this term was directly canceled by $E^{\varepsilon}_{re}(\rho)$ which is identically zero for $d=1$. However, one easily verifies, using $E_{l_1}-E_{k_1} \ge C k_F \rho^{-b_n} = C \rho^{-\frac{1}{2} + \frac{n-1}{2cM}}$ (since $k_F\propto \rho$), that in one dimension,
\begin{align}
\sum_{n=1}^M \no \Psi_{\T{B,2}}^{s,n}(t)\no_{\T{TD}} \lesssim (1+t) \rho^{-\frac{1}{2}+\varepsilon} M \rho^{\frac{1}{cM}} \lesssim (1+t) \rho^{-\frac{1}{2}+2\varepsilon},
\end{align}
since $M \rho^{\frac{1}{cM}}\lesssim \rho^\varepsilon$ for $M=\floor{\ln \rho}$. This completes the proof of Theorem \ref{thm:main_thm:one_dimension}.

\section{The Model in Three Dimensions\label{app:three_dimensions}}

Let us explain why it is not possible to adapt the argument also to the case $d=3$. Here, the possible momenta are given by $p\in (2\pi/L)\mathbb Z^3$ and $\vert p_N\vert = k_F \propto \rho^{\frac{1}{3}}$. We exemplify this for one particular term, namely
\begin{align} \label{EXAMPLE:3d}
\lno  \sum_{n=1}^M \Psi_{\T{A}}^{s,n}(t) \rno^2 = \sum_{n=1}^M \no \Psi_{\T{A}}^{s,n}(t) \no^2_{\T{TD}} \le \sum_{n=1}^M \bigg[\lim_{\T{TD}}  \Vol{6} \sum_{(k,l)\in \I_n} \frac{1}{(E_{k}-E_{l})^2} \bigg],
\end{align}
which appears at second order in the Duhamel expansion; cf.\ Section \ref{sec:Derivation_A}.\ Here, we have used in addition that the $\Psi_{\T{A}}^{s,n}(t)$ are pairwise orthogonal, and then we applied the first step from \eqref{EXAMPLE:THREE:D}. In order to obtain the optimal bound for the r.h.s., let us be more general as in the case $d=2$ and define the sets $\I_n$ with $b_0=\infty$ and $b_n = b + (n-1)/(2cM)$, for $b>0$ and $2c=\left(b+\varepsilon \right)^{-1}$, $b\in \mathbb R$. Using $(E_l-E_k) \ge \rho^{\frac{1}{3} -b_n}$ in $\I_n$ together with
\begin{align}
\lim_{\T{TD}} \Vol{6} \sum_{(k,l)\in \I_n} \lesssim \rho^{\frac{2}{3}-b_{n+1}} \rho^{2\varepsilon} \Big( \rho^{-b_{n+1}} - \rho^{-b_{n}} \Big),
\end{align} 
one finds  
\begin{align}
\Vol{6} \sum_{(k,l)\in \I_n} \frac{1}{(E_{k}-E_{l})^2}  \lesssim \rho^{2b_n-2b_{n+1}} \Big( 1- \rho^{-b_n + b_{n+1}} \Big) \lesssim \frac{1}{M} \ln \rho \Big( 1+ \mathcal O(M^{-1} \ln \rho) \Big),
\end{align}
$1\le n \le M$. Hence, this way (taking $M\to\infty$) we obtain at best  $\eqref{EXAMPLE:3d} \lesssim \ln \rho$, which would imply a trivial statement like $\no \Psi_{\T{A}}(t) \no_{\T{TD}} \lesssim \ln\rho$ already for $t$ of order one.

\bigskip \noindent \textit{Acknowledgments.}\ We are grateful to Dirk-Andr\'e Deckert, Detlef D\"urr and Herbert Spohn for many helpful discussions.\ M.J. and D.M. gratefully acknowledge financial support from the German Academic Scholarship Foundation.\ S.P.'s research has received funding from Cusanuswerk, the People Programme (Marie Curie Actions) of the European Union's Seventh Framework Programme (FP7/2007-2013) under REA grant agreement n\textdegree~291734, and the German Academic Exchange Service (DAAD). S.P. gratefully acknowledges support from the Institute for Advanced Study. This material is based upon work supported by the National Science Foundation under agreement No.\ DMS-1128155. Any opinions, findings and conclusions or recommendations expressed in this material are those of the authors and do not necessarily reflect the views of the National Science Foundation. The content of this article is part of one of the authors (D.M.) forthcoming PhD thesis.

\bibliographystyle{plain}

\end{document}